\crefname{hypothesis}{Hypothesis}{Hypotheses}
\title{An $\mathcal{L}^2$-Error Estimate of Energy Stable Flux Reconstruction Method
\thanks{\funding{This work was funded by Natural Sciences and Engineering Research Council of Canada Discovery Grant Program RGPIN-2019-04791}}}
\author{Erwan Lambert\thanks{Computational aerodynamic group, Department of Mechanical Engineering, McGill University, Montreal
  (\email{erwan.lambert@mail.mcgill.ca},\email{siva.nadarajah@mcgill.ca}).}
\and Siva Nadarajah\footnotemark[2]}
\newcommand*{\addFileDependency}[1]{
  \typeout{(#1)}
  \@addtofilelist{#1}
  \IfFileExists{#1}{}{\typeout{No file #1.}}
}
\begin{document}

\maketitle

\begin{abstract}
Energy stable flux reconstruction (ESFR) is a high-order numerical method used for solving partial differential equations in computational fluid dynamics. This method is designed to preserve the energy stability of the underlying partial differential equation system with respect to a broken Sobolev norm. A class of one-parameter ESFR schemes has been identified to be stable for the one-dimensional linear advection equation. This class includes some well-known high-order methods such as the discontinuous Galerkin method and spectral difference method. The main advantage of the energy stable flux reconstruction is to allow for an increase in the maximum admissible time step while retaining the stability and accuracy properties of the underlying scheme.
However numerical experiments have shown that beyond a certain value of the parameter, the optimal order of accuracy is lost. 
This article develops an $L^2$-error estimate for the energy stable flux reconstruction scheme applied to the one-dimensional advection equation and demonstrates the exact expression that contributes towards the loss of the optimal order.
\end{abstract}

\begin{keywords}
  High-order methods, Flux reconstruction, Error estimate, Order of convergence
\end{keywords}

\begin{MSCcodes}
  65M15, 65M70, 76M10
\end{MSCcodes}

\section{Introduction}

Computational fluid dynamics (CFD) methods, which are crucial for simulating fluid flow in engineering systems, can be categorized into low-order and high-order approaches.

While low-order methods are robust and versatile, they suffer from limitations such as high numerical dissipation~\cite{spiegel2015survey} and poor parallel scalability.

High-order methods offer a significant advantage with their low numerical dissipation, making them suitable for time-dependent problems like turbulent flows. They excel in capturing vortex structures and resolving fine scales in fluid dynamics applications~\cite{vincent2011facilitating}. 
Despite their growing popularity, high-order methods can be computationally expensive due to time-stepping restrictions. Efforts are ongoing to reduce costs by increasing the allowable time step while maintaining accuracy.

High-order methods are often designed to provide superior precision to low-order schemes with similar computational costs.
One commonly used family of high-order methods is the discontinuous Galerkin (DG) method, originally proposed by Reed and Hill in 1973~\cite{reed1973triangular}. DG methods decompose the numerical solution into piecewise discontinuous polynomials within each element, allowing for spatial and spectral resolution refinement. Later, Kopriva et al. proposed the staggered grid method~\cite{kopriva1996conservative}, later renamed as spectral difference (SD) by Wang et al.~\cite{wang2007spectral}, which is formulated based on the differential form of the equation. More recently Huynh introduced the flux reconstruction (FR) approach that unifies various high-order schemes, including collocation-based nodal discontinuous Galerkin (DG) methods, and some spectral difference methods~\cite{huynh2007flux}. Vincent, Castonguay, and Jameson generalized this framework and identified an infinite range of linearly stable FR schemes~\cite{vincent2011new}, later named energy stable flux reconstruction (ESFR).
Other important properties of ESFR schemes such as dispersion and dissipation properties have been investigated using Von Neumann analysis~\cite{vincent2011insights}. They also studied the explicit time step limits associated with ESFR schemes and were able to find some schemes for which the maximum admissible time step is significantly higher than the one for well-known schemes DG~\cite{hesthaven2007nodal}, SD~\cite{kopriva1996conservative} and Huynh~\cite{huynh2007flux}. However, DG yields the smallest error among the methods compared. 
Castonguay presented some numerical studies on the order of convergence for the one-dimensional linear advection~\cite{Castonguay}, where for a $k$-th order ESFR scheme he obtained either $k+1$- or $k$-th order of convergence depending on the value of the ESFR parameter, commonly referred to as the $c$ parameter. However, contrary to the DG method, there is no analytical proof to support this result.

{\color{black}An \textit{a priori} error estimate is a crucial tool in numerical analysis to assess the accuracy of numerical methods. The error estimate of DG methods in terms of the mesh size parameter has been investigated especially by Cockburn ~\cite{cockburn1999discontinuous}. The theorem 3 of Cockburn's book shows that the DG method is ($k+1$)-th order accurate, at least in the linear case. However, there is no existing work on proving the error estimate for the ESFR method. Our objective is to prove a similar order of accuracy for ESFR applied to the linear advection problem, provided that the exact solution is smooth enough.Given the close connection of the flux reconstruction method with Sobolev space, it is intuitive to pursue an error estimate within a norm defined in this domain. The most straightforward choice in this regard is the $\mathcal{L}^2$-norm.}

{\color{black}
The paper is organized as follows. Section~\ref{chap:FR} provides a comprehensive review of the flux reconstruction framework as it was introduced in~\cite{vincent2011new} and the derivation of the bilinear form for the ESFR applied to the linear advection equation. Section~\ref{chap:error_estimate} focuses on the development of an error estimate for the ESFR in the linear case, building on the outline of Cockburn's proof. In Section~\ref{chap:results}, the effectiveness of the proposed error estimate is discussed. Furthermore, the numerical experiments are extended to the one-dimensional inviscid Burgers' equations and two-dimensional Euler equations, and conclusions follow in Section~\ref{chap:conclusion}.}

\section{Flux Reconstruction Framework}
\label{chap:FR}

Flux reconstruction (FR) schemes share many similarities with nodal discontinuous Galerkin (DG) methods. Both FR and nodal DG schemes employ a high-order polynomial basis to approximate the solution within each element of the domain, and neither requires solution continuity between the elements.

\subsection{Preliminaries}

Let us consider a one-dimensional scalar conservation law:
\begin{align}
    &\partial_tu+\partial_xf(u)=0\text{ in }\Omega\times[0,T]\label{eq:pb},\\
    &u(x,0)=u_0(x)\hspace{3mm}\forall x\in \Omega,
\end{align}
with periodic boundary conditions, $u$ is the scalar conserved quantity and $f$ is the flux of $u$ in the $x$-direction.
Then $\Omega$ is discretized into $N$ non overlapping intervals $I_j=[x_{j-1/2},x_{j+1/2}]$, such that:
   $\Omega=\bigcup\limits_{j=1}^NI_j\hspace{1cm}\bigcap\limits_{j=1}^NI_j=\O.$

The FR scheme requires that $u$ is approximated on each sub-interval $I_j$, by a function $u^\delta_j=u^\delta_j(x,t)$ which is a polynomial of degree $k$ within $I_j$. In addition, the flux $f$ is also required to be approximated on each $I_j$ by a function $f^\delta_j=f^\delta_j(x,t)$ which is a polynomial of degree $k+1$ within $I_j$. Then a total approximation of the solution and of the flux can be defined on $\Omega$:
\begin{equation}
    u^{\delta}=\sum_{j=1}^{N} u_{j}^{\delta} \approx u, \quad f^{\delta}=\sum_{j=1}^{N} f_{j}^{\delta} \approx f
\end{equation}

As in the DG method, there is no inter-element continuity requirement on $u^\delta$, but $f^\delta$ is required to be continuous at the element interfaces; hence common numerical fluxes are introduced to ensure conservation.

\subsection{Implementation}

Now, consider the mapping from $I_j$ to the standard element $\Omega_s=[-1;1]$ via:
\begin{equation}
    r=\Gamma_j(x)=2\left(\frac{x-x_{j-1/2}}{x_{j+1/2}-x_{j-1/2}}\right)-1,
    \label{eq:mapping}
\end{equation}
which has the inverse mapping:
\begin{equation}
    x=\Gamma_j^{-1}(r)=\left(\frac{1-r}{2}\right) x_{j-1/2}+\left(\frac{1+r}{2}\right) x_{j+1/2}.
    \label{eq:inverse_mapping}
\end{equation}

The evolution of $u^\delta$ can be determined by solving the transformed equation in the standard element:
\begin{equation}
    \partial_t\hat{u}^\delta+\partial_r\hat{f}^\delta=0,\text{ where}
    \label{eq:transformed_equation}
\end{equation}
\begin{minipage}{0.51\textwidth}
\begin{equation}
    \hat{u}^\delta=\hat{u}^\delta(r, t)=u_j^\delta\left(\Gamma_j^{-1}(r), t\right), \text{ and}
    \label{eq:u_transformed}
\end{equation}
\end{minipage}
\hfill 
\begin{minipage}{0.48\textwidth}
\begin{equation}
    \hat{f}^\delta=\hat{f}^\delta(r, t)=\frac{2}{\Delta_j}f_j^\delta\left(\Gamma_j^{-1}(r), t\right).
\end{equation}
\end{minipage}

The flux reconstruction method is usually decomposed into five stages. The first stage consists in representing $\hat{u}^\delta$ as a $k$-th degree polynomial via a nodal Lagrange polynomial basis $l_i=l_i(r)$:

\noindent
\begin{minipage}{0.4\textwidth}
\begin{equation}
    \hat{u}^\delta=\sum_{i=0}^k \hat{u}_i^\delta l_i,\quad\quad \text{ where }
    \label{eq:u_hat_delta}
\end{equation}
\end{minipage}
\hfill
\begin{minipage}{0.5\textwidth}
\begin{equation}
    l_i=\prod_{m=0, m \neq i}^k\left(\frac{r-r_m}{r_i-r_m}\right).
    \label{eq:Lagrange}
\end{equation}
\end{minipage}

The second stage involves the construction of the approximate discontinuous flux $\hat{f}^{\delta D}$ as a $k$ degree polynomial, through a projection at the $k+1$ quadrature points, to ensure that it can be expressed in the same nodal basis:
\begin{equation}
    \hat{f}^{\delta D}=\sum_{i=0}^k \hat{f}_i^{\delta D} l_i.
    \label{eq:f_hat_delta_D}
\end{equation}
$\hat{f}^{\delta D}$ is discontinuous in general since it is directly computed from $\hat{u}^{\delta}$ which is a piecewise polynomial and usually discontinuous at interfaces between elements.

The third stage consists in evaluating the approximate solution $\hat{u}^{\delta}$ and discontinuous flux $\hat{f}_i^{\delta D}$ at both edges of the element. The fourth stage involves building the total approximation of the flux $\hat{f}^{\delta}$ as a $k+1$ degree polynomial by adding a correction flux to the discontinuous flux:
\begin{equation}
    \hat{f}^{\delta C}=\left(\hat{f}_{L}^{\delta I}-\hat{f}_{L}^{\delta D}\right) g_{L}+\left(\hat{f}_{R}^{\delta I}-\hat{f}_{R}^{\delta D}\right) g_{R},
\end{equation}
where:
\setlist[itemize]{leftmargin=2em}
\begin{itemize}
\item $\hat{f}^{\delta I}=\hat{f}^{\delta I}(r,t)$ : interface flux = numerical flux
    \item $\hat{f}^{\delta D}=\hat{f}^{\delta D}(r,t)$ : Discontinuous flux computed from approximation solution $u^\delta$
    \item $g_L=g_L(r)$: Left correction function
    \item $g_R=g_R(r)$: Right correction function
    \item $\hat{f}^{\delta I}_L=\hat{f}^{\delta I}(-1)$
    \item $\hat{f}^{\delta I}_R=\hat{f}^{\delta I}(1)$
    \item \textit{idem} for $\hat{f}^{\delta D}_L$ and $\hat{f}^{\delta D}_R$
\end{itemize}

To ensure continuity, the correction functions satisfy the following boundary condition:
\begin{align}
&g_{L}(-1)=1, & g_{L}(1)=0, \label{eq:condgl} \\
&g_{R}(-1)=0, & g_{R}(1)=1.\label{eq:condgr}
\end{align}\vspace{-\baselineskip}

The total approximation of the flux is a $k+1$-order polynomial of the form: 
\begin{equation}
    \hat{f}^{\delta}=\hat{f}^{\delta D}+\hat{f}^{\delta C}=\hat{f}^{\delta D}+\left(\hat{f}_{L}^{\delta I}-\hat{f}_{L}^{\delta D}\right) g_{L}+\left(\hat{f}_{R}^{\delta I}-\hat{f}_{R}^{\delta D}\right) g_{R}.
\end{equation}

Huynh identified several correction functions that lead to stable schemes~\cite{huynh2007flux}, in particular correction functions corresponding to the discontinuous Galerkin (DG), the spectral difference (SD) as well as Huynh type schemes (HU), named after him by Vincent et al~\cite{vincent2011new}:
\begin{align}
&g_{DG_L}=\frac{(-1)^k}{2}\left(L_k-L_{k+1}\right)\text{ and }g_{DG_R}=\frac{1}{2}\left(L_k+L_{k+1}\right) \label{eq:gDG}\\
&g_{SD_L}=\frac{(-1)^k}{2}\left[L_k-\left(\frac{k L_{k-1}+(k+1) L_{k+1}}{2 k+1}\right)\right]=\frac{(-1)^k}{2}(1-x) L_k\\
&g_{SD_R}=\frac{1}{2}\left[L_k+\left(\frac{k L_{k-1}+(k+1) L_{k+1}}{2 k+1}\right)\right]=\frac{1}{2}(1+x) L_k\\
&g_{HU_L}=\frac{(-1)^k}{2}\left[L_k-\left(\frac{(k+1) L_{k-1}+k L_{k+1}}{2 k+1}\right)\right]\nonumber\\ &\text{ and }
g_{HU_R}=\frac{1}{2}\left[L_k+\left(\frac{(k+1) L_{k-1}+k L_{k+1}}{2 k+1}\right)\right]\label{eq:gHU}\\
&\text{where}\nonumber\\
&L_i\text{: $i$-th order Legendre polynomial.} \nonumber
\end{align}\vspace{-\baselineskip}

Later, Vincent et al.~\cite{vincent2011new} generalized Huynh's work to obtain an infinite family of correction functions that satisfy energy stability:
\begin{align}
&g_{L}=\frac{(-1)^{k}}{2}\left[L_{k}-\left(\frac{\eta_{k} L_{k-1}+L_{k+1}}{1+\eta_{k}}\right)\right]
\text{ and }
g_{R}=\frac{1}{2}\left[L_{k}+\left(\frac{\eta_{k} L_{k-1}+L_{k+1}}{1+\eta_{k}}\right)\right],\label{eq:correction_functions_general_form}\\
&\text{where }\eta_{k}=\frac{c(2 k+1)\left(a_{k} k !\right)^{2}}{2},\text{ and }a_{k}=\frac{(2 k) !}{2^{k}(k !)^{2}}.\nonumber
\end{align}\vspace{-\baselineskip}

They were able to recover previous schemes, with the following $c$ parameters:
$c_{DG}=0$, $c_{SD}=\frac{2 k}{(2 k+1)(k+1)\left(a_k k !\right)^2}$, $c_{HU}=\frac{2(k+1)}{(2 k+1) k\left(a_k k !\right)^2}$.

The final stage consists of using the derivative of the total flux, $\hat{f}^{\delta}$ at each quadrature point to obtain the semi-discrete scheme:
\begin{equation}
    \partial_t\hat{u}_i^\delta+\sum\limits_{m=0}^{k}\hat{f}_m^{\delta D}d_rl_m(r_i)+(\hat{f}_L^{\delta I}-\hat{f}_L^{\delta D})d_rg_L(r_i)+(\hat{f}_R^{\delta I}-\hat{f}_R^{\delta D})d_rg_R(r_i)=0.
    \label{eq:semidiscrete_ESFR}
\end{equation}

\subsection{Bilinear Form}

To follow the outlines of Cockburn's proof on optimal error estimates for the discontinuous Galerkin method~\cite{cockburn1999discontinuous}, a bilinear form representing ESFR schemes has to be derived.
In this subsection, a bilinear form of the ESFR method applied to the one-dimensional linear advection is formulated. This derivation is similar to the approach used by Vincent et al. to build the Sobolev-type energy norm for the ESFR scheme~\cite{vincent2011new}, but it differs since they did not use a test function to obtain the norm.

The following proof is restricted to the linear case, therefore the flux introduced in the conservation law (\ref{eq:pb}) is $f=au$, where $a$ denotes the advection velocity. Under such an assumption, the semi-discrete scheme (\ref{eq:semidiscrete_ESFR}) can be written as:
\begin{equation}
\partial_t\hat{u}_i^\delta+\hat{a}\sum\limits_{m=0}^{k}\hat{u}_m^{\delta}d_rl_m(r_i)+(\hat{f}_L^{\delta I}-\hat{a}\hat{u}_L^{\delta})d_rg_L(r_i)+(\hat{f}_R^{\delta I}-\hat{a}\hat{u}_R^{\delta})d_rg_R(r_i)=0,
    \label{eq:semidiscrete_ESFR_linear}
\end{equation}
where $\hat{a}=\frac{2a}{\Delta_j}$, $\hat{u}_L^{\delta}=\hat{u}^{\delta}(-1,t)$ and $\hat{u}_R^{\delta}=\hat{u}^{\delta}(1,t)$. By multiplying (\ref{eq:semidiscrete_ESFR_linear}) by the Lagrange polynomial $l_i$ and summing over $i$, one obtains:
\begin{equation}
\partial_t\hat{u}^\delta+\hat{a}\partial_r\hat{u}^{\delta}+(\hat{f}_L^{\delta I}-\hat{a}\hat{u}_L^{\delta})d_rg_L+(\hat{f}_R^{\delta I}-\hat{a}\hat{u}_R^{\delta})d_rg_R=0.
    \label{eq:ESFR_linear}
\end{equation}

The bilinear form differs from that acquired for the discontinuous Galerkin method since it contains both the contribution of the solution and it's $k^{th}$ derivative. Let us multiply (\ref{eq:ESFR_linear}) by any test function $\hat{v}_h\in \mathcal{P}^k(\Omega_s)$, such that $v_h(x,t)=\hat{v}_h(\Gamma(x),t)$ and $v_h(t)\in V_h=V_h^k=\{v\in\mathcal{L}^1(\Omega):v|_{I_j}\in \mathcal{P}^k(I_j),j=1,...,N\}\hspace{3mm} \forall t\in[0,T]$, and integrate over $\Omega_s$
\begin{align}
\int_{-1}^1\hat{v}_h\partial_t\hat{u}^\delta dr+\hat{a}\int_{-1}^1\hat{v}_h\partial_r\hat{u}^{\delta}dr&+(\hat{f}_L^{\delta I}-\hat{a}\hat{u}_L^{\delta})\int_{-1}^1\hat{v}_hd_rg_Ldr\nonumber\\
&+(\hat{f}_R^{\delta I}-\hat{a}\hat{u}_R^{\delta})\int_{-1}^1\hat{v}_hd_rg_Rdr=0.
    \label{eq:ESFR_linear_int}
\end{align}

The final two terms are integrated by parts and through the use of the boundary conditions on the correction functions (\ref{eq:condgl}) and (\ref{eq:condgr}):
\begin{align}
    \int_{-1}^1\hat{v}_h\partial_t\hat{u}^\delta dr+\hat{a}\int_{-1}^1\hat{v}_h\partial_r\hat{u}^{\delta}dr&+(\hat{f}_L^{\delta I}-\hat{a}\hat{u}_L^{\delta})\left(-\hat{v}_{hL}-\int_{-1}^1\partial_r\hat{v}_hg_Ldr\right)\nonumber\\&+(\hat{f}_R^{\delta I}-\hat{a}\hat{u}_R^{\delta})\left(\hat{v}_{hR}-\int_{-1}^1\partial_r\hat{v}_hg_Rdr\right)=0.
    \label{eq:ESFR_linear_int_byparts}
\end{align}
Now differentiating (\ref{eq:ESFR_linear}) $k$ times in space, one obtains:
\begin{equation}
    \partial_t\left(\partial^{k}_r\hat{u}^\delta\right)+{\hat{a}\partial_r^{k+1}\hat{u}^\delta}+(\hat{f}_L^{\delta I}-\hat{a}\hat{u}_L^{\delta})d^{k+1}_rg_L+(\hat{f}_R^{\delta I}-\hat{a}\hat{u}_R^{\delta})d^{k+1}_rg_R=0.
    \label{eq:ESFR_linear_kderivative}
\end{equation}
Then multiplying by $\partial^k_r\hat{v}_h$ and integrating over $\Omega_s$, yields:
\begin{align}
    \int_{-1}^1\partial^k_r\hat{v}_h\partial_t\left(\partial^{k}_r\hat{u}^\delta\right)dr+\hat{a}\int_{-1}^1\partial^k_r\hat{v}_h\partial_r^{k+1}\hat{u}^\delta dr&+(\hat{f}_L^{\delta I}-\hat{a}\hat{u}_L^{\delta})\partial^k_r\hat{v}_hd^{k+1}_rg_L\int_{-1}^1dr\nonumber\\&+(\hat{f}_R^{\delta I}-\hat{a}\hat{u}_R^{\delta})\partial^k_r\hat{v}_hd^{k+1}_rg_R\int_{-1}^1dr=0.
    \label{eq:ESFR_linear_kderivative_int}
\end{align}

Now, both contributions are added together with a factor two for the solution (\ref{eq:ESFR_linear_int_byparts}) and a weight parameter $c$ for the $k^{th}$ derivative contribution (\ref{eq:ESFR_linear_kderivative_int}), one obtains:
\begin{align}
    2\int_{-1}^1\hat{v}_h\partial_t\hat{u}^\delta dr+c\int_{-1}^1\partial^k_r\hat{v}_h\partial_t\left(\partial^{k}_r\hat{u}^\delta\right)dr+2\hat{a}\int_{-1}^1\hat{v}_h\partial_r\hat{u}^{\delta}dr+c\hat{a}\int_{-1}^1\partial^k_r\hat{v}_h\partial_r^{k+1}\hat{u}^\delta dr&\nonumber\\
    +2(\hat{f}_L^{\delta I}-\hat{a}\hat{u}_L^{\delta})\left(-\hat{v}_{hL}-\int_{-1}^1\partial_r\hat{v}_hg_Ldr+c\partial^k_r\hat{v}_hd^{k+1}_rg_L\right)&\nonumber\\
    +2(\hat{f}_R^{\delta I}-\hat{a}\hat{u}_R^{\delta})\left(\hat{v}_{hR}-\int_{-1}^1\partial_r\hat{v}_hg_Rdr+c\partial^k_r\hat{v}_hd^{k+1}_rg_R\right)=0&.
    \label{eq:raw_sum}
\end{align}\vspace{-\baselineskip}

To ensure stability, the general form of the correction functions (\ref{eq:correction_functions_general_form}) were obtained as solutions to the following equations~\cite[Eq.(3.29) and Eq.(3.30)]{vincent2011new}:
\begin{align}
&\int_{-1}^1 g_L d_rl_i dr-c\left(d^k_r l_i\right)\left(d^{k+1}_r g_L\right)=0 \quad \forall i, \label{eq:assumption1}\\
&\int_{-1}^1 g_R d_rl_i dr-c\left(d^k_r l_i\right)\left(d^{k+1}_r g_R\right)=0 \quad \forall i,
\label{eq:assumption2}
\end{align}\vspace{-\baselineskip}

In addition, $\hat{v}_h$ being in $\mathcal{P}^k(\Omega_s)$, it can be expressed in Lagrange polynomial basis as \mbox{$\hat{v}_h=\sum\limits_{i=0}^k \hat{v_h}_il_i$} and thus the conditions (\ref{eq:assumption1}) and (\ref{eq:assumption2}) can be use to simplify (\ref{eq:raw_sum}):
\begin{align}
    &2\int_{-1}^1\hat{v}_h\partial_t\hat{u}^\delta dr+c\int_{-1}^1\partial^k_r\hat{v}_h\partial_t\left(\partial^{k}_r\hat{u}^\delta\right)dr+2\hat{a}\int_{-1}^1\hat{v}_h\partial_r\hat{u}^{\delta}dr
    \\ &+c\hat{a}\int_{-1}^1\partial^k_r\hat{v}_h\partial_r^{k+1}\hat{u}^\delta dr+2(\hat{f}_L^{\delta I}-\hat{a}\hat{u}_L^{\delta})\left(-\hat{v}_{hL}\right)+2(\hat{f}_R^{\delta I}-\hat{a}\hat{u}_R^{\delta})\left(\hat{v}_{hR}\right)=0.\nonumber
    \label{eq:sum_ref_space}
\end{align}\vspace{-\baselineskip}
By mapping back (\ref{eq:sum_ref_space}) to the physical element domain $I_j$, one obtains:
\begin{align}
    &2\int_{I_j}v_h\partial_tu^\delta dx+c\int_{I_j}\partial^k_xv_h\partial_t\left(\partial^{k}_xu^\delta\right)\left(\frac{\Delta_j}{2}\right)^{2k}dx&\nonumber\\
    &+2a\int_{I_j}v_h\partial_xu^{\delta}dx+ca\int_{I_j}\partial^k_xv_h\partial_x^{k+1}u^\delta \left(\frac{\Delta_j}{2}\right)^{2k}dx\nonumber\\
    &+2(f_{j-1/2}^{\delta I}-au_{j-1/2}^{\delta +})\left(-v_{h_{j-1/2}}^+\right)+2(f_{j+1/2}^{\delta I}-au_{j+1/2}^{\delta -})v_{h_{j+1/2}}^-=0,
    \label{eq:sum}
\end{align}
where superscripts $-$ and $+$ denote the left and right values at the interface.
In addition, we use the fully upwind numerical flux, as Cockburn did for DG:
\begin{equation}
    f^I(u,v)=a\frac{u+v}{2}-\frac{|a|}{2}(v-u).
    \label{eq:flux}
\end{equation}
Without lose of generality, let us consider $a>0$, then the last two terms in (\ref{eq:sum}) become:
\begin{align*}
    f_L^{\delta I}-au_L^{\delta +}&=\left\{a\frac{u^{\delta +}+u^{\delta -}}{2}-\frac{a}{2}(u^{\delta +}-u^{\delta -})-au^{\delta +}\right\}_{(x_{j-1/2})}\\
    &=-a\left(u^{\delta +}-u^{\delta -}\right)_{(x_{j-1/2})}\\
    &=-a[\![u^\delta]\!]_{j-1/2},
\end{align*}
and \qquad\qquad$f_R^{\delta I}-au^{\delta -}_R=0,$

where the symbol $[\![\cdot]\!]$ denotes the jump at the interface.

Finally, by summing over the entire physical domain and integrating over time, we acquire:
\begin{equation}
    \forall v_h/\forall t\in [0,T], \forall v_h(x,t) \in V_h \hspace{1cm} \mathcal{B}_{h_{FR}}(u^\delta,v_h)=0,
    \label{eq:bilinear_form_0}
\end{equation}
with the bilinear form finally defined as:
\begin{definition}{Bilinear form}
\begin{align}
     \mathcal{B}_{h_{FR}}(u^\delta,v_h)=&2\int_0^T\int_{\Omega}v_h\partial_tu^\delta dxdt+c\int_0^T\sum\limits_{j=1}^N\int_{I_j}\partial^k_xv_h\partial_t\left(\partial^{k}_xu^\delta\right)\left(\frac{\Delta_j}{2}\right)^{2k}dx\;dt\nonumber\\
     &+ca\int_0^T\sum\limits_{j=1}^N\int_{I_j}\partial^k_xv_h\partial_x^{k+1}u^\delta \left(\frac{\Delta_j}{2}\right)^{2k}dx\;dt\nonumber\\
     &+2a\int_0^T\sum\limits_{j=1}^N\int_{I_j}v_h\partial_xu^{\delta}dxdt+2a\int_0^T\sum\limits_{j=1}^N[\![u^\delta]\!]_{j-1/2}\left(v_{h_{j-1/2}}^+\right)dt
     \label{eq:bilinear_form}
\end{align}\vspace{-\baselineskip}
\end{definition}

\section{Derivation of the Error Estimate}
\label{chap:error_estimate}

This chapter aims to provide an error estimate for the ESFR scheme applied to the linear advection problem. It will be shown that the order of convergence depends on the ESFR parameter introduced above. The error estimate takes the form of Theorem~\ref{th:error_estimate}. The derivation of the proof can be found in section~\ref{subsec:Proof}; it follows the outlines of Cockburn's proof for the discontinuous Galerkin method~\cite[\S 2.7]{cockburn1999discontinuous}. 

\begin{theorem}{$\mathcal{L}^2$-Error estimate of ESFR schemes}
    \begin{align}
        \|u(T)-u^\delta(&T)\|_{L^{2}(\Omega)}\leq\|R_h(e(0))\|_{k,2}
        +\Biggl(|a|T(2\zeta_a+4\zeta_b\sqrt{C_k})(1+\Lambda_k)\left|u_{0}\right|_{H^{k+2}(\Omega)} \nonumber\\
        &\left.+C_{k,0}|u_0|_{\mathcal{H}^{k+1}(\Omega)}
         +|c|C_{k,k}\left(\frac{1}{2}\right)^k\sqrt{C_1\dots C_k}|u_0|_{\mathcal{H}^{k+1}(\Omega)}|a|\frac{T}{2}\right)(\Delta x)^{k+1}\nonumber\\
        &+\left(|c|\left(\frac{1}{2}\right)^k\sqrt{C_1\dots C_k}|u_0|_{\mathcal{H}^{k+1}(\Omega)}|a|\frac{T}{2}\right)(\Delta x)^k.
        \label{eq:error_estimate}
    \end{align}\vspace{-\baselineskip}
    \label{th:error_estimate}
\end{theorem}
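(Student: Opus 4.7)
The approach is to adapt Cockburn's optimal-order DG argument to the ESFR bilinear form $\mathcal{B}_{h_{FR}}$ from (\ref{eq:bilinear_form}), tracking the extra contributions that come from its two $c$-weighted pieces. First I would establish a consistency identity by substituting the smooth exact solution $u$ for $u^\delta$ in $\mathcal{B}_{h_{FR}}$. Using $\partial_t u = -a\partial_x u$ the first and fourth integrals cancel, and using $\partial_t\partial_x^k u = -a\partial_x^{k+1} u$ (obtained by differentiating the PDE $k$ times) the second and third integrals cancel as well; the jump term vanishes because $u$ is continuous across interfaces. Hence $\mathcal{B}_{h_{FR}}(u, v_h) = 0$ for every $v_h \in V_h$, and subtracting (\ref{eq:bilinear_form_0}) gives the Galerkin-type orthogonality $\mathcal{B}_{h_{FR}}(u - u^\delta, v_h) = 0$.

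Next I would introduce a projection $R_h : \mathcal{H}^{k+1}(\Omega) \to V_h$ of Gauss--Radau type, chosen so that $\|u - R_h u\|_{L^2(I_j)}$ and $\|\partial_t(u - R_h u)\|_{L^2(I_j)}$ satisfy the standard $O((\Delta x)^{k+1})$ polynomial approximation bounds, and so that the upwind jump of $R_h u$ together with the cross integral $\int_{I_j} \theta\,\partial_x(u - R_h u)\,dx$ vanish for every $\theta \in V_h$. This is exactly the projection that makes Cockburn's DG estimate optimal. Writing $u - u^\delta = \eta + \theta$ with $\eta = u - R_h u$ and $\theta = R_h u - u^\delta \in V_h$, and taking $v_h = \theta$ in the orthogonality identity, yields $\mathcal{B}_{h_{FR}}(\theta, \theta) = -\mathcal{B}_{h_{FR}}(\eta, \theta)$.

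Third, I would extract an energy identity from the left-hand side. The two time-derivative pieces of $\mathcal{B}_{h_{FR}}(\theta, \theta)$ integrate to $E(\theta(T)) - E(\theta(0))$ with $E(\theta) = \|\theta\|_{L^2(\Omega)}^2 + \frac{c}{2}\sum_{j=1}^N (\Delta_j/2)^{2k}\|\partial_x^k \theta\|_{L^2(I_j)}^2$; the $c$-weighted spatial term vanishes because $\partial_x^{k+1}\theta \equiv 0$ on each $I_j$; and the convective plus jump terms assemble into a non-negative dissipation supplied by the upwind flux. On the right-hand side, Cauchy--Schwarz combined with the approximation properties of $R_h$ bounds each of the ``DG-like'' pieces of $\mathcal{B}_{h_{FR}}(\eta,\theta)$ by the time integral of $O((\Delta x)^{k+1})$ times the energy norm of $\theta$, and these only contribute $(\Delta x)^{k+1}$-order terms to the final bound after Gronwall.

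The main obstacle, and the source of the anomalous $(\Delta x)^k$ term in (\ref{eq:error_estimate}), is the $c$-weighted spatial derivative integral $ca\sum_j \int_{I_j}\partial_x^k \theta \cdot \partial_x^{k+1}\eta \cdot (\Delta_j/2)^{2k}\,dx$. Since $\partial_x^{k+1}(R_h u) \equiv 0$ on each element, this reduces to $ca\sum_j \int_{I_j}\partial_x^k \theta \cdot \partial_x^{k+1} u \cdot (\Delta_j/2)^{2k}\,dx$, and Cauchy--Schwarz bounds it by $|c||a|\cdot\|(\Delta/2)^k \partial_x^k \theta\|_{L^2(\Omega)}\cdot(\Delta x)^k |u|_{\mathcal{H}^{k+1}(\Omega)}$, with the factor only $(\Delta x)^k$ because $\partial_x^{k+1} u$ lies outside the polynomial space and every extra derivative of $\eta$ costs one order of approximation. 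This is the price of the $c$-weighted term, and it disappears exactly when $c = 0$, recovering the DG rate. After a Gronwall argument, a triangle inequality $\|u(T) - u^\delta(T)\|_{L^2(\Omega)} \leq \|\eta(T)\|_{L^2(\Omega)} + \|\theta(T)\|_{L^2(\Omega)}$ combines the projection error with the energy bound on $\theta$ to reproduce the structure of Theorem~\ref{th:error_estimate}; the initial data term $\|R_h(e(0))\|_{k,2}$ arises as $\sqrt{E(\theta(0))}$. Propagating the numerical constants $\zeta_a, \zeta_b, C_k, \Lambda_k, C_{k,0}, C_{k,k}$ through each step is tedious but routine once the above framework is in place.
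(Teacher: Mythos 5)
Your proposal follows essentially the same route as the paper: the Galerkin orthogonality $\mathcal{B}_{h_{FR}}(u-u^\delta,v_h)=0$, the Gauss--Radau splitting (your $\theta=R_h u-u^\delta$ is the paper's $R_h(e)$ and your $-\eta$ is its $R_h(u)-u$), the energy identity with nonnegative upwind dissipation, Bramble--Hilbert-type bounds on the projection terms, the isolation of $ca\sum_j\int_{I_j}\partial_x^k\theta\,\partial_x^{k+1}u\,(\Delta_j/2)^{2k}dx$ as the sole $(\Delta x)^k$ contribution, and the concluding Gronwall and triangle-inequality steps. The only slight imprecision is your claim that the cross integral $\int_{I_j}\theta\,\partial_x(u-R_h u)\,dx$ vanishes; for the interpolatory Gauss--Radau operator the paper uses, it does not vanish but is bounded at order $(\Delta x)^{k+1}$, which is how you in fact treat it later, so the argument is unaffected.
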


\subsection{Preliminary}

Although we are looking for an $\mathcal{L}^2$-error estimate we will first use the broken Sobolev type norm that has been employed to demonstrate the stability of the ESFR scheme. This norm was defined in~\cite[Eq.(3.19)]{vincent2011new} and repeated here for completeness:
\begin{definition}{Broken Sobolev norm of the state}

The broken Sobolev norm of the state is defined as,
    \begin{equation}
    \|u^\delta\|_{k,2}=\left(\sum\limits_{j=1}^N\int_{I_j}(u^\delta)^2+\frac{c}{2}\left(\partial^{k}_xu^\delta\right)^2\left(\frac{\Delta_j}{2}\right)^{2k}dx\right)^{\frac{1}{2}},
    \label{eq:broken_sobolev_norm}
\end{equation}
it is a valid norm provided $c\in]c_-(k),+\infty[$, where $c_-(k)=\frac{-2}{(2k+1)(a_kk!)^2}$ and $a_k=\frac{(2k)!}{2^k(k!)^2}.$
\label{def:broken_sobolev_norm}
\end{definition}

From Definition~\ref{def:broken_sobolev_norm}, the energy stability can be expressed by the following equation~\cite[Eq.(3.30)]{vincent2011new}:
\begin{equation}
    \forall t\hspace{5mm}\partial_t\|u^\delta(t)\|_{k,2}\leq0.
    \label{eq:stability}
\end{equation}

Proving this result was the object of the original paper on ESFR~\cite{vincent2011new}. Subsequently, by integrating over time, we obtain the final expression for energy stability for all ESFR schemes:
\begin{equation}
    \forall t\geq0, \|u^\delta(t)\|_{k,2}\leq\|u^\delta(0)\|_{k,2}.
    \label{eq:stability_integrated}
\end{equation}
According to Vincent et al.~\cite{vincent2011new}, stability in the Sobolev norm implies stability in the $\mathcal{L}^2$-norm, since norms are equivalent in finite dimension, in particular, the Sobolev norm is greater than the $\mathcal{L}^2-$norm,
\begin{align*}
    \|u^\delta(T)\|^2_{k,2}&= \|u^\delta(T)\|^2_{\mathcal{L}^2(\Omega)}+\frac{c}{2}\sum\limits_{j=1}^N\int_{I_j}\left(\partial^{k}_xu^\delta(T)\right)^2\left(\frac{\Delta_j}{2}\right)^{2k}dx,\\
    &\geq\|u^\delta(T)\|^2_{\mathcal{L}^2(\Omega)}.
\end{align*}\vspace{-\baselineskip}
Hence, the $\mathcal{L}^2$-norm at $T$ is bounded by the broken Sobolev norm at $t=0$:
\begin{equation}
    \|u^\delta(T)\|_{\mathcal{L}^2(\Omega)}\leq\|u^\delta(0)\|_{k,2}
    \label{eq:L2_stability}
\end{equation}
Equation~(\ref{eq:L2_stability}) guarantees the stability of all ESFR schemes; however, to this date a theorem that establishes the formal proof for the error estimate has not been presented. The following subsection is the first of such an attempt based on the author's knowledge.

\subsection{Proof of Theorem~\ref{th:error_estimate}}
\label{subsec:Proof}

In order to demonstrate the error estimate we need to find an expression for the norm of the error. To do so we will use the previously defined bilinear form (\ref{eq:bilinear_form}). First let us derive the error equation; by applying the bilinear form on both the approximate solution, $u^\delta$ and the exact solution, $u$ we have from the derivation of the bilinear form (\ref{eq:bilinear_form_0}):
\begin{equation*}
    \mathcal{B}_{h_{FR}}(u^\delta,v_h)=0 \hspace{5mm} \forall v_h:v_h(t)\in V_h \hspace{5mm}\forall t\in (0,T),\\
\end{equation*}
and for the exact solution, $u$,
\begin{align*}
    \mathcal{B}_{h_{FR}}(u,v_h)&=2\int_0^T\int_{\Omega}v_h(\partial_tu+a\partial_xu) dx\;dt\\
    &+c\int_0^T\sum\limits_{j=1}^N\int_{I_j}\partial^k_xv_h\partial_x^k(\partial_tu+a\partial_xu)\left(\frac{\Delta_j}{2}\right)^{2k}dx\;dt\\
    &+2a\int_0^T\sum\limits_{j=1}^N[\![u]\!]_{j-1/2}\left(v_{h_{j-1/2}}^+\right)dt \quad \forall v_h:v_h(t)\in V_h, \quad\forall t\in (0,T).
\end{align*}
By definition of the problem presented in (\ref{eq:pb}), $\partial_tu+a\partial_xu=0$ and for a smooth solution $u$ {\color{black}across the computational domain:
\begin{equation}
    f_L^{\delta I}-au_L^{+}=-a[\![u]\!]_{j-1/2}=0,
\end{equation}
where, $f_L^{\delta I}$ is the interface flux and hence $\mathcal{B}_{h_{FR}}(u,v_h)=0$.} If the error is defined as the difference between the exact and the approximate solution: 
\begin{equation}
     e=u-u^\delta,
     \label{eq:def_error}
\end{equation} 
and by bilinearity, we obtain the error equation,
\begin{equation}
        \forall v_h\in V_h, \hspace{5mm} {B}_{h_{FR}}(e,v_h)=0.
        \label{eq:error_equation}
\end{equation}

We now need to evaluate ${B}_{h_{FR}}(e,e)$ to obtain the norm of the error; however, since $e\notin V_h$ we instead need to consider {\color{black} evaluating the bilinear form of the projection of the error on $V_h$. We first, provide a definition of the projection of a function on the space $V_h$,} 
\begin{definition}{Projection on $V_h$}

Given a function $w\in \mathcal{L}^2(\Omega)$ that is continuous on each $I_j$, we define $R_h(w)$ as the only element of the finite element space $V_h$ such that~\cite[Eq.(2.34)]{cockburn1999discontinuous}:
\begin{align}
    &\forall j=1,...,N:\nonumber\\
    &R_h(w)(x_{j,l})-w(x_{j,l})=0, \hspace{5mm}l=0,...,k,
    \label{eq:Rh}
\end{align}
where $x_{j,l}$ are the Gauss-Radau quadrature points of the interval $I_j$~\cite[Eq.(2.35)]{cockburn1999discontinuous}. We take
\begin{equation}
    x_{j, k}= \begin{cases}x_{j+1 / 2} & \text { if } a>0 \\ x_{j-1 / 2} & \text { if } a<0\end{cases}
    \label{eq:GR}
\end{equation}
\label{def:rh}
\end{definition}

In order to obtain an expression for the norm and to establish the error estimate, we focus on the quantity: $\mathcal{B}_h(R_h(e),R_h(e))$, and by the definition of the bilinear form (\ref{eq:bilinear_form}), we have:
\begin{align*}
     \mathcal{B}_{h_{FR}}(R_h(e),R_h(e))=&2\int_0^T\int_{\Omega}R_h(e)\partial_tR_h(e) dx\;dt\nonumber\\
     &+c\int_0^T\sum\limits_{j=1}^N\int_{I_j}\partial^k_xR_h(e)\partial_t\left(\partial^{k}_xR_h(e)\right)\left(\frac{\Delta_j}{2}\right)^{2k}dx\;dt,\nonumber\\
     &+ca\int_0^T\sum\limits_{j=1}^N\int_{I_j}\partial^k_xR_h(e)\cancelto{0}{\partial_x^{k+1}R_h(e)}\hspace{3mm}\left(\frac{\Delta_j}{2}\right)^{2k}dx\;dt,\nonumber\\
     &+2a\int_0^T\sum\limits_{j=1}^N\int_{I_j}R_h(e)\partial_xR_h(e)dx\;dt\nonumber\\
     &+2a\int_0^T\sum\limits_{j=1}^N[\![R_h(e)]\!]_{j-1/2}\left(R_h(e)_{j-1/2}^+\right)dt.
\end{align*} \vspace{-\baselineskip}
Since $R_h(e)\in V_h$ then $\partial_x^{k+1}R_h(e)=0$. Recalling Definition~\ref{def:broken_sobolev_norm} of the Sobolev type norm, the expression for $\mathcal{B}_{h_{FR}}(R_h(e),R_h(e))$ can be further simplified:
\begin{align*}
     \mathcal{B}_{h_{FR}}&(R_h(e),R_h(e))=\int_0^T\partial_t\left(\sum\limits_{j=1}^N\int_{I_j}(R_h(e))^2+\frac{c}{2}\left(\partial^{k}_xR_h(e)\right)^2\left(\frac{\Delta_j}{2}\right)^{2k}dx\right)dt\\
     &+a\int_0^T\sum\limits_{j=1}^N\int_{I_j}\partial_x\left(R_h(e)\right)^2dx\;dt+2a\int_0^T\sum\limits_{j=1}^N[\![R_h(e)]\!]_{j-1/2}\left(R_h(e)_{j-1/2}^+\right)dt\\
     =&\|R_h(e(T))\|^2_{k,2}-\|R_h(e(0))\|^2_{k,2}\\
     &+a\int_0^T\sum\limits_{j=1}^N\left\{(R_h(e)^-)^2-(R_h(e)^+)^2+2(R_h(e)^+)^2-2R_h(e)^-R_h(e)^ +\right\}_{j-1/2}dt\\
     =&\|R_h(e(T))\|^2_{k,2}-\|R_h(e(0))\|^2_{k,2} +a\int_{0}^{T} \sum\limits_{j=1}^N[\![R_h(e)]\!]_{j-1/2}^{2}dt.
\end{align*}\vspace{-\baselineskip}

It results in norms of the projection of the error with the addition of a positive term:
\begin{equation}
    \mathcal{B}_{h_{FR}}(R_h(e),R_h(e))=\|R_h(e(T))\|^2_{k,2}-\|R_h(e(0))\|^2_{k,2}+2\Theta_T(R_h(e)),
    \label{eq:start}
\end{equation}
where : 
    $\forall f: \hspace{5mm}\Theta_{T}\left(f\right)=\frac{a}{2} \int_{0}^{T} \sum\limits_{j=1}^N[\![f(t)]\!]_{j-1 / 2}^{2} d t.$

Then it remains to estimate the left-hand-side of (\ref{eq:start}). By definition of the expression for the error, (\ref{eq:def_error}), and using the linearity of the projection, we have:
\begin{align}
    R_h(e)-e&=R_h(u-u^\delta)-(u-u^\delta)\nonumber\\
    &=R_h(u)-u^\delta-u+u^\delta\nonumber\\
    &=R_h(u)-u.\label{eq:expanded_Rhe}
\end{align}
{\color{black}We can exploit this result by subtracting and adding back the error $e$ from the first term of the bilinear form.} 
Then the error equation (\ref{eq:error_equation}) enables us to further simplify the left-hand-side expression from (\ref{eq:start}):
\begin{align*}
    \mathcal{B}_{h_{FR}}(R_h(e),R_h(e))&=\mathcal{B}_{h_{FR}}(R_h(e)-e,R_h(e))+\mathcal{B}_{h_{FR}}(e,R_h(e))\\
    &=\mathcal{B}_{h_{FR}}(R_h(e)-e,R_h(e))+0\\
    &=\mathcal{B}_{h_{FR}}(R_h(u)-u,R_h(e)).
\end{align*}\vspace{-\baselineskip}
This equality allows us to expand the left-hand-side of (\ref{eq:start}) using the definition of the bilinear form (\ref{eq:bilinear_form}) and integrating by parts its second term, we obtain: 
\begin{align}
\mathcal{B}_{h_{FR}}(R_{h}(u)&-u, v_{h}) =2\int_{0}^{T} \int_{\Omega}\left(R_{h}\left(\partial_{t} u\right)(x, t)-\partial_{t} u(x, t)\right) v_{h}(x, t) dx\;dt \nonumber\\
&+2a\int_0^T\sum\limits_{j=1}^N\left\{(R_h(u)-u)^-\left(v_{h}^-\right)\right\}_{j+1/2}-\left\{(R_h(u)-u)^+\left(v_{h}^+\right)\right\}_{j-1/2}dt\nonumber\\
&-2\int_{0}^{T} \sum\limits_{j=1}^N\int_{I_{j}} a\left(R_{h}(u)(x, t)-u(x, t)\right) \partial_{x} v_{h}(x, t) dx\;dt\nonumber\\
&+c\int_0^T\sum\limits_{j=1}^N\int_{I_j}\partial^k_xv_h\partial_t\left(\partial^{k}_x[R_h(u)-u]\right)\left(\frac{\Delta_j}{2}\right)^{2k}dx\;dt\nonumber\\
&+ca\int_0^T\sum\limits_{j=1}^N\int_{I_j}\partial^k_xv_h\partial^{k+1}_x[R_h(u)-u]\left(\frac{\Delta_j}{2}\right)^{2k}dx\;dt\nonumber\\
&+2a\int_0^T\sum\limits_{j=1}^N[\![R_h(u)-u]\!]_{j-1/2}\left(v_{h_{j-1/2}}^+\right)dt.
\label{eq:err_exact_soln}
\end{align}
Since $R_h(u) \in V_h$ then naturally $\partial^{k+1}_xR_h(u)=0$. By Definition~\ref{def:rh} and the use of Gauss-Radau quadrature points (\ref{eq:GR}); if we assume that $a>0$ without lost of generality, we have: $(R_h(u)-u)^-=0$. Gathering only the surface terms from (\ref{eq:err_exact_soln}) leads to:
\begin{align*}
    &2a\int_0^T\sum\limits_{j=1}^N[\![R_h(u)-u]\!]_{j-1/2}\left(v_{h_{j-1/2}}^+\right)+\left\{\cancelto{0}{(R_h(u)-u)^-}\left(v_{h}^-\right)\right\}_{j+1/2}\\
    &\hspace{5cm}-\left\{(R_h(u)-u)^+\left(v_{h}^+\right)\right\}_{j-1/2}dt\\
    &=2a\int_0^T\sum\limits_{j=1}^N\left\{(R_h(u)-u)^+\left(v_{h}^+\right)\right\}_{j-1/2}-\left\{\cancelto{0}{(R_h(u)-u)^-}\left(v_{h}^+\right)\right\}_{j-1/2}\\
    &\hspace{5cm}-\left\{(R_h(u)-u)^+\left(v_{h}^+\right)\right\}_{j-1/2}dt\\
    &=0.
\end{align*}\vspace{-\baselineskip}
It remains:
\begin{align}
\mathcal{B}_{h_{FR}}\left(R_{h}(u)-u, v_{h}\right) &=2\int_{0}^{T} \int_{\Omega}\left(R_{h}\left(\partial_{t} u\right)(x, t)-\partial_{t} u(x, t)\right) v_{h}(x, t) dx\;dt \nonumber\\
&-2\int_{0}^{T} \sum\limits_{j=1}^N\int_{I_{j}} a\left(R_{h}(u)(x, t)-u(x, t)\right) \partial_{x} v_{h}(x, t) dx\;dt\nonumber\\
&+c\int_0^T\sum\limits_{j=1}^N\int_{I_j}\partial^k_xv_h\partial_t\left(\partial^{k}_x[R_h(u)-u]\right)\left(\frac{\Delta_j}{2}\right)^{2k}dx\;dt\nonumber\\
&+ca\int_0^T\sum\limits_{j=1}^N\int_{I_j}\partial^k_xv_h\partial^{k+1}_xu\left(\frac{\Delta_j}{2}\right)^{2k}dx\;dt.
\label{eq:bh(rh)}
\end{align}\vspace{-\baselineskip}

Then, to bound the first two remaining terms, we employ the main theorem of Bramble and Hilbert~\cite[Th.2]{bramble1970estimation}. This theorem aims to bound a linear functional. By adapting the existing proof, we aim to obtain a more precise bound that is tailored to our specific problem. Note that the proof holds for any $\mathcal{L}^p$-norm, however, for simplicity, we employ only the $\mathcal{L}^2$-norm.

First, let us assume that for any $j\in[\![0,N]\!]$, the interval $I_j$ is a bounded subdomain of $\mathbb{R}$ which is Euclidean vector space of dimension one. Before employing Bramble and Hilbert's theorem~\cite[Th.2]{bramble1970estimation}, several other mathematical tools are necessary, namely the quotient space, a modified norm on the Sobolev space, and a Markov inequality. 

\begin{definition}(see \cite[\S 3, p.114]{bramble1970estimation}){ Quotient space}

    For any interval $I_j$, the quotient space of $\mathcal{H}^{k+1}(I_j)$ with respect to $\mathcal{P}^k(I_j)$  is denoted $Q=\mathcal{H}^{k+1}(I_j)/\mathcal{P}^k(I_j)$. The element of $Q$ are the equivalence classes $[f]$, where $[f]$ is the class containing $f$. The equivalence relation is defined by:
    \begin{equation*}
        \forall f,g\in \mathcal{H}^{k+1}(I_j)\hspace{5mm} f\sim g \iff f-g\in \mathcal{P}^k(I_j).
    \end{equation*}
    \label{def:quotient_space}
\end{definition}\vspace{-\baselineskip}

This quotient space is a normed space defined by the following norm:
\begin{definition}(see \cite[\S 3, p.114]{bramble1970estimation})
    The norm on $Q$ is given by:
    \begin{equation*}
        \|[f]\|_Q=\inf\limits_{g\in[f]}\|g\|_{k+1,p,I_j}=\inf\limits_{p\in \mathcal{P}^k(I_j)}\|f+p\|_{k+1,p,I_j},
    \end{equation*}
    \label{def:normQ}
\end{definition}\vspace{-\baselineskip}

where $\|\cdot\|_{k+1,p,I_j}$ is a norm on Sobolev space that depends on interval length, defined as:
\begin{definition}(see \cite[(2.1),(2.3), p.113]{bramble1970estimation}){ Modified Sobolev Norm}
    \begin{align}
        &\|f\|_{p,k,I_j}^p=\sum\limits_{l=0}^k(\Delta_j)^{lp}|f|_{p,l,I_j}^p\\
        &|f|_{p,l,I_j}=\sum\limits_{|\boldsymbol{\alpha}|=l}\|D^{\boldsymbol{\alpha}} f\|_{p,I_j} \label{eq:seminorm}\\
        &\|f\|_{p,I_j}=\left(\frac{1}{(\Delta_j)^{\mathrm{n}}}\int\limits_{I_j}{|f|^pdx}\right)^\frac{1}{p} = \frac{1}{(\Delta_j)^ \frac{\mathrm{n}}{p}}\|f\|_{\mathcal{L}^p(I_j)},
    \end{align}\vspace{-\baselineskip}
where $\mathrm{n}$ is the dimension and $\boldsymbol{\alpha}$ is a multi-index:
    \begin{equation*}
       \boldsymbol{\alpha}=(\alpha_1,...,\alpha_{\mathrm{n}}) \text{ and } |\boldsymbol{\alpha}|=\sum\limits_{i=0}^{\mathrm{n}}\alpha_i\text{, }D^\alpha=\left(\frac{\partial}{\partial x_1}\right)^{\alpha_1} \cdots\left(\frac{\partial}{\partial x_{\mathrm{n}}}\right)^{\alpha_{\mathrm{n}}}.
    \end{equation*}
    \label{def:Modified_Sobolev_norms}
\end{definition}\vspace{-\baselineskip}

The quotient space is related to this semi-norm (\ref{eq:seminorm}) through the following equivalence lemma:

\begin{lemma}(see \cite[Th.1]{bramble1970estimation})
    The norm on the quotient space is equivalent to the semi-norm on Sobolev space, more precisely:
    \begin{align*}
        &\exists \zeta \text{ independent of } \Delta_j \text{ and } f \text{ such that: }\\
        &\forall f \in \mathcal{H}^{k+1}(I_j) \hspace{5mm} (\Delta_j)^{k+1}|f|_{p,k+1,I_j}\leq\|[f]\|_Q\leq \zeta(\Delta_j)^{k+1}|f|_{p,k+1,I_j}.
    \end{align*}\vspace{-\baselineskip}
    \label{th:equivalent_norms}
\end{lemma}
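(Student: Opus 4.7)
The plan is to establish the two inequalities separately. The left inequality follows almost immediately from the structure of the modified Sobolev norm: for any representative $g = f + q$ with $q \in \mathcal{P}^k(I_j)$, the top-order semi-norm is unchanged, $|g|_{p,k+1,I_j} = |f|_{p,k+1,I_j}$, because $D^{k+1}q = 0$. Retaining only the $l = k+1$ term in the sum defining $\|g\|_{p,k+1,I_j}^p$ gives
\begin{equation*}
  \|g\|_{p,k+1,I_j}^p \;\geq\; (\Delta_j)^{(k+1)p}\,|f|_{p,k+1,I_j}^p,
\end{equation*}
and taking the infimum over $q \in \mathcal{P}^k(I_j)$ yields the lower bound.

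The substantive content is the upper bound, and the plan is the standard Bramble--Hilbert / Deny--Lions argument in three steps. First, I would pull everything back to the reference interval $\hat{I} = [-1,1]$ through the affine mapping $\Gamma_j$ of (\ref{eq:mapping}). The factors $(\Delta_j)^{lp}$ built into Definition~\ref{def:Modified_Sobolev_norms} are precisely what is needed to render the modified norm (and its quotient) scale-invariant: a direct change of variables shows $\|f\|_{p,k+1,I_j}$ and $|f|_{p,k+1,I_j}$ transform into $\|\hat{f}\|_{p,k+1,\hat{I}}$ and $|\hat{f}|_{p,k+1,\hat{I}}$ up to identical factors, so it suffices to produce a fixed constant $\hat{\zeta}$ on $\hat{I}$.

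Second, on $\hat{I}$ I would exhibit a concrete projection $\Pi_k\hat{f} \in \mathcal{P}^k(\hat{I})$ — for instance the unique polynomial of degree $\leq k$ whose moments $\int_{\hat{I}} D^\alpha(\hat{f} - \Pi_k\hat{f})\,d\hat{x}$ vanish for all $|\alpha|\leq k$ — and apply an iterated one-dimensional Poincaré inequality to control each lower-order semi-norm:
\begin{equation*}
  |\hat{f} - \Pi_k\hat{f}|_{p,l,\hat{I}} \;\leq\; c_l\,|\hat{f}|_{p,k+1,\hat{I}}, \qquad 0 \leq l \leq k+1.
\end{equation*}
Summing over $l$ and choosing $g = \hat{f} - \Pi_k\hat{f}$ as the representative bounds $\|[\hat{f}]\|_{Q(\hat{I})}$ by $\hat{\zeta}\,|\hat{f}|_{p,k+1,\hat{I}}$ with $\hat{\zeta} = (\sum_l c_l^p)^{1/p}$. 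Finally, I would undo the change of variables so the bound transfers to $I_j$ with the prefactor $(\Delta_j)^{k+1}$ and the same constant $\zeta = \hat{\zeta}$, independent of $\Delta_j$ and of $f$.

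The main obstacle lies in Step two, in rigorously obtaining the iterated Poincaré estimate with a \emph{single} constant valid for every $\hat{f} \in \mathcal{H}^{k+1}(\hat{I})$. The cleanest route to avoid constructing the polynomial $\Pi_k\hat{f}$ by hand is a compactness/contradiction argument: if no such $\hat{\zeta}$ existed, there would be a sequence $\hat{f}_n$ normalized so that $\inf_{q \in \mathcal{P}^k}\|\hat{f}_n+q\|_{p,k+1,\hat{I}} = 1$ while $|\hat{f}_n|_{p,k+1,\hat{I}} \to 0$. Rellich--Kondrachov compactness of the embedding $\mathcal{H}^{k+1}(\hat{I}) \hookrightarrow \mathcal{H}^{k}(\hat{I})$ would extract a subsequential limit whose $(k{+}1)$-st derivative vanishes, forcing it to lie in $\mathcal{P}^k(\hat{I})$ and contradicting the normalization. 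This compactness step, together with confirming that the prefactors $(\Delta_j)^{lp}$ really do cancel under scaling to give a $\Delta_j$-independent constant, is the delicate part; the rest is bookkeeping.
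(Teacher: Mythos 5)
Your proposal is correct, and its core is the same argument the paper uses, though you package it differently. The lower bound is identical to the paper's: the top-order seminorm is invariant under adding a polynomial of degree $\leq k$, and keeping only the $l=k+1$ term of the modified norm before taking the infimum gives $(\Delta_j)^{k+1}|f|_{p,k+1,I_j}\leq\|[f]\|_Q$. For the upper bound, the paper works directly on $I_j$ and imports the two key facts from Bramble--Hilbert/Morrey: the existence of the unique polynomial $\bar p\in\mathcal{P}^k(I_j)$ with $\int_{I_j}D^\alpha(f+\bar p)\,dx=0$ for $|\alpha|\leq k$ (Lemma~\ref{lem:BHlem1}), and the Poincar\'e-type estimate $|g|_{p,i,I_j}\leq\zeta\Delta_j^{k+1-i}|g|_{p,k+1,I_j}$ for such mean-zero-derivative functions (Lemma~\ref{lem:BHlem2}); these are exactly your projection $\Pi_k\hat f$ and your iterated Poincar\'e inequality, so your ``Step two'' reproduces the paper's proof in substance. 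What you do differently is (i) insert a preliminary pull-back to the reference interval, relying on the exact scale-invariance of the modified norm of Definition~\ref{def:Modified_Sobolev_norms} (this checks out: the $(\Delta_j)^{lp}$ weights and the $(\Delta_j)^{-\mathrm{n}/p}$ normalization make $\|f\|_{p,k+1,I_j}=\|\hat f\|_{p,k+1,\hat I}$ and the quotient norms coincide, so a constant on $\hat I$ transfers unchanged), and (ii) offer a compactness/contradiction (Deny--Lions) argument via Rellich--Kondrachov as an alternative to constructing $\Pi_k\hat f$ explicitly. Both variants are sound; the paper's route has the advantage of not needing the scaling bookkeeping (the $\Delta_j$ powers fall out of Morrey's lemma directly), while your compactness alternative avoids any explicit construction at the cost of a nonconstructive constant $\hat\zeta$ --- which is harmless here since the paper never estimates $\zeta$ anyway. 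One small caveat: for general $p$ the compactness step should be phrased for $W^{k+1,p}(\hat I)\hookrightarrow W^{k,p}(\hat I)$ rather than $\mathcal{H}^{k+1}\hookrightarrow\mathcal{H}^k$, matching the $p$-seminorms in the statement.
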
\vspace{-\baselineskip}
\begin{proof}
{\color{black}The proof can be found in appendix \ref{app:equivalent_morms}.}
\end{proof}

This lemma is required to prove the subsequent approximation theorem for linear functionals. Lastly, we restate the Markov inequality derived by Rivière et al.~\cite{ozisik2010constants} as it would be used within our rederivation of Bramble and Hilbert~\cite[Th.2]{bramble1970estimation}.
\begin{theorem}(see~\cite[Th.1]{ozisik2010constants}){Markov type inequality}
    \begin{equation}
        \forall p\in \mathcal{P}^k(I_j)\hspace{5mm}\left\|\partial_xp\right\|_{\mathcal{L}^2(I_j)} \leq \frac{2 \sqrt{C_k} }{\Delta_j}\|p\|_{\mathcal{L}^2(I_j)},
        \label{eq:Markov2}
    \end{equation}
    where $C_1=3, C_2=15, C_3=\frac{45+\sqrt{1605}}{2}$ and $C_4=\frac{105+3\sqrt{805}}{2}$.
    \label{th:markov}
\end{theorem}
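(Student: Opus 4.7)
The plan is to reduce the Markov-type inequality to the reference interval $[-1,1]$ and then recast the bound as the computation of the largest eigenvalue of an explicit matrix. First, I would apply the affine mapping $r=\Gamma_j(x)$ from~(\ref{eq:mapping}), under which $dx=(\Delta_j/2)\,dr$ and $\partial_x=(2/\Delta_j)\,\partial_r$. This yields the scalings $\|p\|_{\mathcal{L}^2(I_j)}^2=(\Delta_j/2)\|\hat{p}\|_{\mathcal{L}^2(-1,1)}^2$ and $\|\partial_x p\|_{\mathcal{L}^2(I_j)}^2=(2/\Delta_j)\|\partial_r\hat{p}\|_{\mathcal{L}^2(-1,1)}^2$, so the factor $2/\Delta_j$ in~(\ref{eq:Markov2}) arises automatically from the change of variables and it suffices to prove the dimensionless inequality $\|\partial_r\hat{p}\|_{\mathcal{L}^2(-1,1)}^2\le C_k\,\|\hat{p}\|_{\mathcal{L}^2(-1,1)}^2$ on $\mathcal{P}^k(-1,1)$, with the same constant $C_k$.

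Second, I would identify $C_k$ as the squared operator norm of differentiation on $\mathcal{P}^k(-1,1)$, expressed in the orthonormal Legendre basis $\tilde{L}_i=\sqrt{(2i+1)/2}\,L_i$. In these coordinates, $\|\hat{p}\|_{\mathcal{L}^2(-1,1)}^2=c^\top c$ for $\hat{p}=\sum c_i\tilde{L}_i$, and differentiation is represented by a matrix $A\in\mathbb{R}^{(k+1)\times(k+1)}$ with $A_{ij}=\langle\partial_r\tilde{L}_j,\tilde{L}_i\rangle_{\mathcal{L}^2(-1,1)}$; hence $\|\partial_r\hat{p}\|_{\mathcal{L}^2(-1,1)}^2=c^\top A^\top A\,c$ and the sharp constant is $C_k=\lambda_{\max}(A^\top A)$. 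Using the classical Legendre identity $(2i+1)L_i=L_{i+1}'-L_{i-1}'$ iterated to express $L_j'$ as a combination of lower-degree $L_i$, together with the $\sqrt{(2i+1)/2}$ renormalization, one obtains $A_{ij}=\sqrt{(2i+1)(2j+1)}$ whenever $i<j$ with $j-i$ odd, and $A_{ij}=0$ otherwise.

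Third, I would exploit the fact that $A_{ij}$ vanishes unless $j-i$ is odd: then $(A^\top A)_{jk}$ vanishes unless $j$ and $k$ share the same parity, so $A^\top A$ decouples into an even-index block and an odd-index block. For $k=1$ and $k=2$ each nontrivial block is $1\times 1$, yielding $C_1=3$ and $C_2=15$ at once. For $k=3$ the odd-index block is the symmetric $2\times 2$ matrix with diagonal $(3,42)$ and off-diagonal $\sqrt{21}$, whose characteristic polynomial $\lambda^2-45\lambda+105=0$ has larger root $(45+\sqrt{1605})/2$; an analogous $2\times 2$ even-index block for $k=4$ (with diagonal $(15,90)$ and off-diagonal $9\sqrt{5}$) gives $(105+3\sqrt{805})/2$. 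The main obstacle is that no closed form for $C_k$ is available at arbitrary $k$, because the two blocks grow in size with $k$ and the eigenvalues are generically irrational; however, for the four values listed in the theorem the computation reduces to diagonalising at most a single $2\times 2$ symmetric matrix, a routine quadratic-root calculation once the parity decomposition has been exposed.
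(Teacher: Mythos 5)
Your proposal is correct and follows essentially the same route as the paper's proof in Appendix~C: reduce to the reference interval by the affine scaling (which produces the $2/\Delta_j$ factor), then identify $C_k$ as the largest eigenvalue of the Legendre stiffness matrix in the orthonormal basis --- your $A^\top A$ is exactly the paper's matrix $\boldsymbol{S}$, since $\partial_r$ maps $\mathcal{P}^k$ into itself and Parseval gives $(A^\top A)_{jk}=\langle\partial_r\tilde{L}_j,\partial_r\tilde{L}_k\rangle$. The only difference is that you push further than the paper, which computes only $C_1$ and $C_2$ explicitly: your parity decomposition and the closed-form entries $\sqrt{(2i+1)(2j+1)}$ correctly yield the quoted values $C_3=(45+\sqrt{1605})/2$ and $C_4=(105+3\sqrt{805})/2$ from $2\times2$ blocks.
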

\begin{proof}
    {\color{black}The proof of the second inequality Eq.(\ref{eq:Markov2}) is restated in appendix \ref{app:markov}.}
\end{proof}

Now that we have provided definitions for the quotient space, a modified norm on the Sobolev space, and the Markov inequality theorem, we finally state the primary theorem from Bramble and Hilbert~\cite[Th.2]{bramble1970estimation} in its original form. 
\begin{theorem}(see \cite[Th.2]{bramble1970estimation}){ Bramble Hilbert lemma on linear functional} 

    Let $F$ be a linear functional on $\mathcal{H}^{k+1}(I_j)$ which satisfies:
    \begin{align}
        (i)&\hspace{3mm}|F(f)|\leq C' \|f\|_{2,k+1,I_j} \hspace{3mm}\forall f\in \mathcal{H}^{k+1}(I_j)\text{ with } C' \text{independent of }\Delta_j \text{ and } f.\label{eq:Bh_first_condition}\\
        (ii)&\hspace{3mm}F(p)=0\hspace{3mm} \forall p\in \mathcal{P}^k(I_j)\label{eq:Bh_second_condition}\\
    \text{Then } &|F(f)|\leq C_1 (\Delta_j)^{k+1} |f|_{2,k+1,I_j},\forall f\in \mathcal{H}^{k+1}(I_j)\text{ and } C_1\text{independent of }\Delta_j \text{ and } f.\nonumber
    \end{align}
    \label{th:BH_linear_functional}
\end{theorem}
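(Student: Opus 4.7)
The plan is to exploit condition (ii), that $F$ annihilates $\mathcal{P}^k(I_j)$, to pass from a bound in the full Sobolev norm to a bound in the quotient-space norm, and then to convert the latter into a seminorm bound via the equivalence Lemma~\ref{th:equivalent_norms}. This is the classical Bramble--Hilbert argument, and all the machinery has already been assembled in Definitions~\ref{def:quotient_space}--\ref{def:Modified_Sobolev_norms} together with Lemma~\ref{th:equivalent_norms}.

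First I would observe, using linearity of $F$ and condition (ii), that for every $p\in\mathcal{P}^k(I_j)$ we have $F(f)=F(f)+F(p)=F(f+p)$. Combining this identity with the continuity estimate (i) yields
\begin{equation*}
|F(f)| \;=\; |F(f+p)| \;\leq\; C'\,\|f+p\|_{2,k+1,I_j} \qquad \forall\, p\in\mathcal{P}^k(I_j).
\end{equation*}
Since the left-hand side is independent of $p$, I can take the infimum over $p\in\mathcal{P}^k(I_j)$ on the right, and by Definition~\ref{def:normQ} this infimum is exactly $\|[f]\|_Q$. Hence
\begin{equation*}
|F(f)| \;\leq\; C'\,\|[f]\|_Q.
\end{equation*}

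Next I would apply the right-hand inequality of Lemma~\ref{th:equivalent_norms}, $\|[f]\|_Q \leq \zeta(\Delta_j)^{k+1}|f|_{2,k+1,I_j}$, to obtain
\begin{equation*}
|F(f)| \;\leq\; C'\zeta\,(\Delta_j)^{k+1}\,|f|_{2,k+1,I_j},
\end{equation*}
which is the desired estimate with $C_1 := C'\zeta$. The constant $C_1$ is independent of $\Delta_j$ and of $f$: $C'$ is independent by hypothesis~(i), and $\zeta$ is independent by Lemma~\ref{th:equivalent_norms}.

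The only subtle point, and the place where one could slip, is making sure that the constant produced by passing through the quotient space does not pick up a hidden $\Delta_j$ dependence. This is exactly why the modified, scaling-aware Sobolev norm of Definition~\ref{def:Modified_Sobolev_norms} was introduced and why the equivalence lemma was stated with a $\Delta_j$-independent $\zeta$; once those ingredients are in place, the argument is just linearity plus an infimum plus one invocation of the equivalence lemma. The Markov-type inequality of Theorem~\ref{th:markov} is not needed for this statement itself; it will enter later when one estimates the derivative terms appearing in $\mathcal{B}_{h_{FR}}(R_h(u)-u,v_h)$.
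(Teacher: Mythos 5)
Your argument is correct and follows exactly the same route as the paper's proof in Appendix~B: linearity plus condition~(ii) gives $|F(f)|=|F(f+p)|$, condition~(i) and the infimum over $\mathcal{P}^k(I_j)$ yield the quotient-norm bound, and Lemma~\ref{th:equivalent_norms} converts it to the seminorm estimate with $C_1=C'\zeta$. Your closing remark about the $\Delta_j$-independence of the constants and the non-necessity of the Markov inequality is also consistent with the paper.
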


\begin{proof}
    the proof can be found in~\cite[Th.2]{bramble1970estimation} {\color{black}and is restated in appendix \ref{app:BH_linear_functional}.}
\end{proof}

{\color{black}As presented, Theorem~\ref{th:BH_linear_functional} provides bounds for linear functionals. For the theorem to be employed for our goal of establishing an error estimate for the ESFR scheme requires a rederivation of Theorem~\ref{th:BH_linear_functional} such as to ensure that the form can easily represent the terms in (\ref{eq:bh(rh)}). Lemma~\ref{lem:2first_terms} is an application of Theorem~\ref{th:BH_linear_functional} to the first two integrals of (\ref{eq:bh(rh)}).}

\begin{lemma}(see~\cite[Lem.22]{cockburn1999discontinuous}) if $w\in \mathcal{H}^{k+2}(I_j)$and $v_h\in \mathcal{P}^k(I_j)$ then
\begin{equation}
    \left|\int_{I_j}(R_h(w)-w)(x)v_h(x)dx\right|\leq c_k(\Delta x)^{k+1}|w|_{\mathcal{H}^{k+1}(I_j)}\|v_h\|_{\mathcal{L}^2(I_j)},
    \label{eq:first_term_bound}
\end{equation}
and
\begin{equation}
    \left|\int_{I_j}(R_h(w)-w)(x)\partial_xv_h(x)dx\right|\leq c_k(\Delta x)^{k+1}|w|_{\mathcal{H}^{k+2}(I_j)}\|v_h\|_{\mathcal{L}^2(I_j)}.
     \label{eq:second_term_bound}
\end{equation}
Both inequalities have a distinct constants $c_k$, which is solely dependent on k.
\label{lem:2first_terms}
\end{lemma}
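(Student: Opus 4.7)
The plan is to realize each integral as a linear functional on the right Sobolev space that vanishes on a suitable polynomial subspace, so that Theorem~\ref{th:BH_linear_functional} can be invoked directly. Fix $v_h\in\mathcal{P}^k(I_j)$ throughout.

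For the first inequality, define $F_1:\mathcal{H}^{k+1}(I_j)\to\mathbb{R}$ by $F_1(w)=\int_{I_j}(R_h(w)-w)\,v_h\,dx$. Linearity of $F_1$ is immediate from the linearity of $R_h$. To check condition (ii) of Theorem~\ref{th:BH_linear_functional}, observe that for $p\in\mathcal{P}^k(I_j)$, the difference $R_h(p)-p$ is a polynomial of degree at most $k$ that vanishes at the $k+1$ Gauss--Radau points $x_{j,0},\dots,x_{j,k}$, hence $R_h(p)=p$ and $F_1(p)=0$. For condition (i), apply Cauchy--Schwarz to get $|F_1(w)|\leq\|R_h(w)-w\|_{\mathcal{L}^2(I_j)}\|v_h\|_{\mathcal{L}^2(I_j)}$, then bound $\|R_h(w)-w\|_{\mathcal{L}^2(I_j)}$ by $C\|w\|_{2,k+1,I_j}$ using the one-dimensional Sobolev embedding $\mathcal{H}^1(I_j)\hookrightarrow C^0(I_j)$ applied in the modified norm of Definition~\ref{def:Modified_Sobolev_norms}, whose $\Delta_j$-weighted scaling makes the embedding constant independent of $\Delta_j$. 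Theorem~\ref{th:BH_linear_functional} then gives $|F_1(w)|\leq c_k(\Delta_j)^{k+1}|w|_{2,k+1,I_j}\|v_h\|_{\mathcal{L}^2(I_j)}$; converting the modified seminorm to the standard $\mathcal{H}^{k+1}(I_j)$ seminorm yields (\ref{eq:first_term_bound}).

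For the second inequality, define $F_2(w)=\int_{I_j}(R_h(w)-w)\,\partial_x v_h\,dx$. The key observation is that $F_2$ vanishes on the \emph{larger} space $\mathcal{P}^{k+1}(I_j)$, not merely on $\mathcal{P}^k(I_j)$. Indeed, for $p\in\mathcal{P}^{k+1}(I_j)$, $R_h(p)-p$ is a polynomial of degree at most $k+1$ that still vanishes at all $k+1$ Gauss--Radau points, so $R_h(p)-p=\alpha\prod_{l=0}^k(x-x_{j,l})$ for some $\alpha\in\mathbb{R}$. The integrand $(R_h(p)-p)\,\partial_x v_h$ is therefore a polynomial of degree at most $(k+1)+(k-1)=2k$, which is integrated exactly by the $(k+1)$-point Gauss--Radau rule; since $R_h(p)-p$ is zero at every quadrature node, the integral vanishes. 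Applying Theorem~\ref{th:BH_linear_functional} on $\mathcal{H}^{k+2}(I_j)$ (with the condition (i) bound obtained exactly as above, but using $\partial_x v_h$ in place of $v_h$) produces $|F_2(w)|\leq\tilde c_k(\Delta_j)^{k+2}|w|_{2,k+2,I_j}\|\partial_x v_h\|_{\mathcal{L}^2(I_j)}$. Finally, the Markov-type inequality of Theorem~\ref{th:markov} gives $\|\partial_x v_h\|_{\mathcal{L}^2(I_j)}\leq (2\sqrt{C_k}/\Delta_j)\|v_h\|_{\mathcal{L}^2(I_j)}$, consuming exactly one factor of $\Delta_j$ and leaving the claimed $(\Delta x)^{k+1}|w|_{\mathcal{H}^{k+2}(I_j)}\|v_h\|_{\mathcal{L}^2(I_j)}$ bound after converting the modified seminorm.

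The main obstacle is the uniform-in-$\Delta_j$ verification of condition (i) of Bramble--Hilbert for the operator $R_h$: one must keep careful track of how the embedding constant and the cardinal-basis norms of Gauss--Radau interpolation scale in the weighted norm of Definition~\ref{def:Modified_Sobolev_norms}, since Theorem~\ref{th:BH_linear_functional} requires $C'$ to be independent of $\Delta_j$. The decisive structural ingredient, however, is the polynomial-reproduction argument for $F_2$ on $\mathcal{P}^{k+1}$ rather than only $\mathcal{P}^k$: this extra order of vanishing is precisely what compensates for the $1/\Delta_j$ loss from Markov's inequality and lets the two inequalities come out at the same $(\Delta x)^{k+1}$ order.
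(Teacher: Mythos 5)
Your proof is correct and follows the same skeleton as the paper's: realize each integral as a linear functional, verify the two hypotheses of the Bramble--Hilbert lemma (Theorem~\ref{th:BH_linear_functional}), and recover the matching $(\Delta x)^{k+1}$ order for the second integral by paying one power of $\Delta_j$ to the Markov inequality (Theorem~\ref{th:markov}). There are two local differences worth noting. First, for hypothesis (i) you bound $\|R_h(w)-w\|_{\mathcal{L}^2(I_j)}$ via the scaled one-dimensional embedding $\mathcal{H}^1(I_j)\hookrightarrow C^0(I_j)$, whereas the paper routes through the Lebesgue-constant estimate $\|R_h(w)-w\|_{\mathcal{L}^2(I_j)}\leq(1+\Lambda_k)\|P_h(w)-w\|_{\mathcal{L}^2(I_j)}$ and the $\mathcal{L}^2$-projection bound; both are legitimate, and your route makes explicit the point-evaluation issue that the operator-norm notation $\Lambda_k=\|R_h\|_{\mathcal{L}^2(I_j)}$ quietly sidesteps, at the cost of the careful $\Delta_j$-scaling you correctly flag. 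Second, and more substantively, you prove the vanishing of $F_2$ on $\mathcal{P}^{k+1}(I_j)$ directly by the exactness of the $(k+1)$-point Gauss--Radau rule on degree-$2k$ integrands; the paper instead cites the orthogonality property (Proposition~\ref{prop:GRprop}, Cockburn's Eq.~(2.36)) with $\varphi=\partial_x v_h\in\mathcal{P}^{k-1}$ and dismisses the second inequality as ``analogous.'' Your explicit identification of this extra degree of polynomial reproduction---and of its role in offsetting the $1/\Delta_j$ from Markov---is exactly the step the paper leaves implicit, so your write-up is, if anything, more complete on the decisive point.
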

\begin{proof}

First, let define a linear functional on $\mathcal{H}^{k+2}(I_j)$:
\begin{align}
    &\forall v_h \in \mathcal{P}^k(I_j), \forall w\in \mathcal{H}^{k+2}(I_j) \nonumber\\
    &F_{v_h}(w)=\frac{1}{\sqrt{\Delta_j}}\int_{I_j}(R_h(w)-w)(x)v_h(x)dx,
    \label{eq:functional_Gvh}
\end{align}
and the corresponding quotient space:
\begin{equation*}
    Q=\mathcal{H}^{k+1}(I_j)/P_k(I_j).
\end{equation*}

We obtain the first hypothesis $(i)$ of Theorem~\ref{th:BH_linear_functional} by employing the Cauchy-Schwartz inequality on $F_{v_h}(w)$. Then we can use the property of the Lebesgue constant to eliminate the norm of $R_h(u)$: $\|R_h(w)-w\|_{\mathcal{L}^2(I_j)}\leq (1+\Lambda_k)\|P_h(w)-w\|_{\mathcal{L}^2(I_j)}$; where the Lebesgue constant $\Lambda_k$ is the operator of the norm of $R_h$ with respect to the $\mathcal{L}^2$-norm: $\Lambda_k=\|R_h\|_{\mathcal{L}^2(I_j)}$ and only depends on the location of the quadrature points. $P_h(u)$ is defined as the best polynomial approximation of $u$ in $\mathcal{L}^2$ space~\cite[Eq.2.32]{cockburn1999discontinuous}. In other words, $P_h$ is the $\mathcal{L}^2$ orthogonal projection on $\mathcal{P}^k(I_j)$. 
At this juncture we can split the norm of the projection error with the triangle inequality and take advantage of the property of orthogonal projection: $\|P_h(w)\|_{\mathcal{L}^2(I_j)}\leq \|w\|_{\mathcal{L}^2(I_j)}$,
\begin{align}
    |F_{v_h}(w)|&\leq \frac{1}{\sqrt{\Delta_j}}\|R_h(w)-w\|_{\mathcal{L}^2(I_j)}\|v_h\|_{\mathcal{L}^2(I_j)}\nonumber\\
    &\leq \frac{1}{\sqrt{\Delta_j}}(1+\Lambda_k)\|P_h(w)-w\|_{\mathcal{L}^2(I_j)}\|v_h\|_{\mathcal{L}^2(I_j)}\nonumber\\
    &\leq \frac{1}{\sqrt{\Delta_j}}(1+\Lambda_k)\left(\|P_h(w)\|_{\mathcal{L}^2(I_j)}+\|w\|_{\mathcal{L}^2(I_j)}\right)\|v_h\|_{\mathcal{L}^2(I_j)}\nonumber\\
    &\leq \frac{1}{\sqrt{\Delta_j}}2(1+\Lambda_k)\|w\|_{\mathcal{L}^2(I_j)}\|v_h\|_{\mathcal{L}^2(I_j)}\nonumber\\
    &\leq \frac{1}{\sqrt{\Delta_j}}C"\|w\|_{\mathcal{L}^2(I_j)}\|v_h\|_{\mathcal{L}^2(I_j)}.\label{eq:Fvh_i}
\end{align}
We can then recast the constant as $C'=C"\|v_h\|_{\mathcal{L}^2(I_j)}$ and finally use Definition~\ref{def:Modified_Sobolev_norms} of norms to compare them and obtain the result in the modified Sobolev norm:
\begin{align*}
    {\color{black}|F_{v_h}(w)|}&\leq C'\|w\|_{2,I_j}\\
    &\leq C'\|w\|_{2,k+1,I_j}.
\end{align*}

Condition $(ii)$ of Theorem~\ref{th:BH_linear_functional} is easier to prove thanks to the advantageous property of Gauss-Radau quadrature points for error estimates,

\begin{proposition}(see \cite[Eq.(2.36)]{cockburn1999discontinuous}) Orthogonality Property for Gauss-Radau Quadrature
\begin{align}
&\forall \varphi \in P^{\ell}\left(I_{j}\right), \quad \ell \leq k, \quad \forall p \in P^{2 k-\ell}\left(I_{j}\right): \nonumber\\
&\int_{I_{j}}\left(R_{h}(p)(x)-p(x)\right) \varphi(x) d x=0. \label{eq:GRprop}
\end{align}
\label{prop:GRprop}
\end{proposition}
Since $v_h$ belongs to $\mathcal{P}^k(I_j)$, then we are able to show the second assumption (ii) of Theorem~\ref{th:BH_linear_functional}:
 \begin{equation}
     \forall p \in \mathcal{P}^k(I_j) \hspace{3mm} F_{v_h}(p)=0.
     \label{eq:Fvh_ii}
 \end{equation}
{\color{black}Since we established both assumptions (i) and (ii), we can proceed with the application of Theorem~\ref{th:BH_linear_functional}. Nevertheless, it is possible to derive an alternate expression for the constant $C_1$ that appears in this theorem by rederiving its proof for a particular case of $F_{v_h}.$}

First using the linearity of $F_{v_h}$ and (\ref{eq:Fvh_ii}), we have
\begin{equation*}
    |F_{v_h}(w)|=|F_{v_h}(w+p)|\hspace{3mm} \forall p \in \mathcal{P}^k(I_j).
\end{equation*}
Next with assumption (i) of Theorem~\ref{th:BH_linear_functional}, namely (\ref{eq:Fvh_i}), we get:
\begin{equation*}
    |F_{v_h}(w)|\leq C"\|w+p\|_{2,k+1,I_j}\|v_h\|_{\mathcal{L}^2(I_j)}.
\end{equation*}
Subsequently, taking $\inf\limits_{p\in \mathcal{P}^k(I_j)}$, we recover the norm in quotient space,
\begin{equation*}
    |F_{v_h}(w)|\leq C"\|[w]\|_Q\|v_h\|_{\mathcal{L}^2(I_j)}.
\end{equation*}
We can now apply Theorem~\ref{th:equivalent_norms}, to obtain a bound in terms of a modified Sobolev semi-norm, then by multiplying both sides by $\sqrt{\Delta_j}$ we get it in terms of the Sobolev semi-norm,
\begin{align*}
    |F_{v_h}(w)|&\leq \zeta_aC"(\Delta x)^{k+1}|w|_{2,k+1,I_j}\|v_h\|_{\mathcal{L}^2(I_j)}\\
    \left|\int_{I_j}(R_h(w)-w)(x)v_h(x)dx\right|&\leq \zeta_aC"(\Delta x)^{k+1}|w|_{\mathcal{H}^{k+1}(I_j)}\|v_h\|_{\mathcal{L}^2(I_j)}\\
    \left|\int_{I_j}(R_h(w)-w)(x)v_h(x)dx\right|&\leq 2\zeta_a(1+\Lambda_k)(\Delta x)^{k+1}|w|_{\mathcal{H}^{k+1}(I_j)}\|v_h\|_{\mathcal{L}^2(I_j)}.
\end{align*}
That completes the proof for the first inequality, (\ref{eq:first_term_bound}), of Lemma \ref{lem:2first_terms}. The procedure to bound the second inequality, (\ref{eq:second_term_bound}), is analogous. We obtain:

\begin{equation}
    \left|\int_{I_j}(R_h(w)-w)(x)\partial_xv_h(x)dx\right|\leq 2\zeta_b(1+\Lambda_k)(\Delta x)^{k+2}|w|_{\mathcal{H}^{k+2}(I_j)}\|\partial_xv_h\|_{\mathcal{L}^2(I_j)}.\label{eq:inequality_first_term}
\end{equation}

To bound the space derivative term, we can use the Markov type inequality (Theorem~\ref{th:markov}) for polynomials, (\ref{eq:inequality_first_term}) simplifies to:

\begin{align*}
    \left|\int_{I_j}(R_h(w)-w)(x)\partial_xv_h(x)dx\right|&\leq 4\zeta_b(1+\Lambda_k) \sqrt{C_k}(\Delta x)^{k+1}|w|_{\mathcal{H}^{k+2}(I_j)}\|v_h\|_{\mathcal{L}^2(I_j)}.
\end{align*}
\end{proof}

This completes the proof of Lemma~\ref{lem:2first_terms}. We can provide a bound for the first two terms of (\ref{eq:bh(rh)}) using Lemma~\ref{lem:2first_terms} with $w=\partial_tu$ for the first term and $w=u$ for the second term as well as $v_h=R_h(e)$ in both. We can now restate (\ref{eq:bh(rh)}) with the application of Lemma~\ref{lem:2first_terms}:

\begin{align}
     |\mathcal{B}_{h}(R_{h}(u)-u, R_h&(e))|
    \leq 4\zeta_a(1+\Lambda_k) (\Delta x)^{k+1}\int_{0}^{T} \sum_{1 \leq j \leq N}|\partial_{t} u|_{\mathcal{H}^{k+1}(I_j)}\-\|R_h(e)\|_{\mathcal{L}^2(I_j)} d t\nonumber\\
    &+ a8\zeta_b(1+\Lambda_k) \sqrt{C_k}(\Delta x)^{k+1}\int_{0}^{T} \sum_{1 \leq j \leq N}| u|_{\mathcal{H}^{k+2}(I_j)}\-\|R_h(e)\|_{\mathcal{L}^2(I_j)} d t\nonumber\\
    &+|c|\int_0^T\sum\limits_{j=1}^N\int_{I_j}|\partial^k_xR_h(e)\partial_t\left(\partial^{k}_x[R_h(u)-u]\right)|\left(\frac{\Delta_j}{2}\right)^{2k}dx\;dt\nonumber\\
    &+|ac|\int_0^T\sum\limits_{j=1}^N\int_{I_j}|\partial^k_xR_h(e)\partial^{k+1}_xu|\left(\frac{\Delta_j}{2}\right)^{2k}dx\;dt.
\label{eq:2first_terms_bounded}
\end{align}

For the first term in (\ref{eq:2first_terms_bounded}) we used: $|\partial_{t} u|_{\mathcal{H}^{k+1}(I_j)}=|a||u|_{\mathcal{H}^{k+2}(I_j)}$ by definition of the linear advection problem.
In addition, solving the linear advection equation with periodic boundary condition implies: $\forall t |u(t)|_{\mathcal{H}^{k+2},(I_j)}\leq|u_0|_{\mathcal{H}^{k+2}(I_j)}$, then:
\begin{align*}
|\mathcal{B}_{h}(R_{h}(u)&-u, R_h(e))| \\
\leq & 4\zeta_a(1+\Lambda_k) (\Delta x)^{k+1}\int_{0}^{T} \sum_{1 \leq j \leq N}|a|| u|_{\mathcal{H}^{k+2}(I_j)}\-\|R_h(e)\|_{\mathcal{L}^2(I_j)} d t\\
&+ a8\zeta_b(1+\Lambda_k) \sqrt{C_k}(\Delta x)^{k+1}\int_{0}^{T} \sum_{1 \leq j \leq N}| u|_{\mathcal{H}^{k+2}(I_j)}\-\|R_h(e)\|_{\mathcal{L}^2(I_j)} d t\\
&+|c|\int_0^T\sum\limits_{j=1}^N\int_{I_j}|\partial^k_xR_h(e)\partial_t\left(\partial^{k}_x[R_h(u)-u]\right)|\left(\frac{\Delta_j}{2}\right)^{2k}dx\;dt\\
&+|ac|\int_0^T\sum\limits_{j=1}^N\int_{I_j}|\partial^k_xR_h(e)\partial^{k+1}_xu|\left(\frac{\Delta_j}{2}\right)^{2k}dx\;dt,
\end{align*}

\noindent and by gathering the first two terms, we obtain a further simplified form:

\begin{align}
    \mathcal{B}_{h}(R_{h}(u)&-u, R_{h}(e))\nonumber\\ \leq& 2(2\zeta_a+4\zeta_b\sqrt{C_k})(1+\Lambda_k)(\Delta x)^{k+1}|a|\left|u_{0}\right|_{H^{k+2}(\Omega)} \int_{0}^{T}\left\|R_{h}(e(t))\right\|_{L^{2}(\Omega)} d t\nonumber\\ &+c\int_0^T\sum\limits_{j=1}^N\int_{I_j}\partial^k_xR_h(e)\partial_t\left(\partial^{k}_x[R_h(u)-u]\right)\left(\frac{\Delta_j}{2}\right)^{2k}dx\;dt\nonumber\\
    &+ac\int_0^T\sum\limits_{j=1}^N\int_{I_j}\partial^k_xR_h(e)\partial^{k+1}_xu\left(\frac{\Delta_j}{2}\right)^{2k}dx\;dt.
    \label{eq:factorisation_first_terms}
\end{align}

Now it remains to bound the last two terms in ~\ref{eq:factorisation_first_terms}. These two terms are specific to the ESFR scheme. For the penultimate term, we need an approximation result that can be found in Ciarlet's book~\cite[Th.3.1.5]{ciarlet2002finite}. A simplified version of this theorem for a one-dimensional element $I_j$ can be written as:

\begin{theorem}(see \cite[Th.3.1.5]{ciarlet2002finite}) {Error approximation in Sobolev space}

    For any polynomial interpolation $\Pi$ that leaves invariant the polynomials of degree $\leq k$ there exists a constant $C_{k,m}$ independent of $I_j$ such that:
    \begin{equation}
        \forall w \in \mathcal{H}^{k+1}(I_j),\hspace{5mm} |w-\Pi(w)|_{\mathcal{H}^{m}(I_j)}\leq C_{k,m} (\Delta_j)^{k+1-m}|w|_{\mathcal{H}^{k+1}(I_j)}, \hspace{5mm}0\leq m\leq k+1.
    \end{equation}
\label{th:approximation}
\end{theorem}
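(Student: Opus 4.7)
The plan is to reduce the statement on $I_j$ to a fixed reference interval $\hat{I}=[-1,1]$ and then apply the Bramble--Hilbert lemma (Theorem~\ref{th:BH_linear_functional}) there. The key structural point is that the hypothesis ``$\Pi$ leaves $\mathcal{P}^k$ invariant'' is precisely condition~(\ref{eq:Bh_second_condition}) of Bramble--Hilbert applied to $\mathrm{Id}-\Pi$; the scaling $(\Delta_j)^{k+1-m}$ will arise entirely from pulling the estimate on $\hat{I}$ back to $I_j$ through the affine map $\Gamma_j$ from~(\ref{eq:mapping}).

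First I would change variables. For $w\in\mathcal{H}^{k+1}(I_j)$ set $\hat{w}=w\circ\Gamma_j^{-1}$ and $\hat{\Pi}\hat{w}=(\Pi w)\circ\Gamma_j^{-1}$. Because $d/dx=(2/\Delta_j)\,d/dr$ and the change of variables in the integral contributes a Jacobian $\Delta_j/2$, a routine calculation gives the scaling identity
\begin{equation*}
|w|_{\mathcal{H}^\ell(I_j)} = \left(\frac{\Delta_j}{2}\right)^{1/2-\ell} |\hat{w}|_{\mathcal{H}^\ell(\hat{I})}, \qquad \ell\geq 0.
\end{equation*}
Moreover $\hat{\Pi}$ inherits from $\Pi$ the property of leaving $\mathcal{P}^k(\hat{I})$ invariant, independently of $j$.

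Next I would apply Bramble--Hilbert on the reference element. For each integer $\ell\leq m$ and any $\phi\in\mathcal{L}^2(\hat{I})$, introduce the linear functional
\begin{equation*}
F_{\ell,\phi}(\hat{w}) = \int_{\hat{I}} \partial_r^\ell(\hat{w}-\hat{\Pi}\hat{w})(r)\,\phi(r)\,dr.
\end{equation*}
Cauchy--Schwarz together with the boundedness of $\hat{\Pi}:\mathcal{H}^{k+1}(\hat{I})\to\mathcal{H}^m(\hat{I})$ on the fixed reference element yields hypothesis~(\ref{eq:Bh_first_condition}), and the fact that $\hat{w}-\hat{\Pi}\hat{w}\equiv 0$ whenever $\hat{w}\in\mathcal{P}^k(\hat{I})$ yields~(\ref{eq:Bh_second_condition}). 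Theorem~\ref{th:BH_linear_functional}, applied with $\Delta_{\hat{I}}=2$ absorbed into the constant, then gives $|F_{\ell,\phi}(\hat{w})|\leq\tilde{C}_{k,\ell}\,\|\phi\|_{\mathcal{L}^2(\hat{I})}\,|\hat{w}|_{\mathcal{H}^{k+1}(\hat{I})}$. Taking the supremum over $\|\phi\|_{\mathcal{L}^2(\hat{I})}=1$ and summing over $\ell\leq m$ produces
\begin{equation*}
|\hat{w}-\hat{\Pi}\hat{w}|_{\mathcal{H}^m(\hat{I})} \leq \hat{C}_{k,m}\,|\hat{w}|_{\mathcal{H}^{k+1}(\hat{I})}.
\end{equation*}
Applying the scaling identity to both sides (the left picks up $(\Delta_j/2)^{1/2-m}$, the right $(\Delta_j/2)^{1/2-(k+1)}$) and absorbing the power of $2$ into the constant yields the advertised bound $|w-\Pi w|_{\mathcal{H}^m(I_j)}\leq C_{k,m}(\Delta_j)^{k+1-m}|w|_{\mathcal{H}^{k+1}(I_j)}$.

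The main obstacle is not the scaling, which is essentially automatic, but rather showing that $\hat{\Pi}$ is indeed bounded from $\mathcal{H}^{k+1}(\hat{I})$ into $\mathcal{H}^m(\hat{I})$ with a constant depending only on $k$ and $m$; this is what legitimately places Bramble--Hilbert into play. The hypothesis as stated in the theorem only imposes polynomial reproduction, so some additional structural information on $\Pi$ is tacitly required (for instance that it is defined through continuous point evaluations, as is the case for the Gauss--Radau interpolant $R_h$ of Definition~\ref{def:rh} for which the theorem will be invoked). Once that continuity is established on the reference element, the constant $\hat{C}_{k,m}$ depends only on $k$, $m$, and the reference configuration, which is precisely what is needed for $C_{k,m}$ to be independent of $I_j$.
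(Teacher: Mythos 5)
Your proof is correct and follows exactly the route of the source the paper defers to: the paper offers no proof of this theorem at all (it is quoted from Ciarlet, Th.~3.1.5, with no appendix restatement, unlike the other auxiliary results), and Ciarlet's argument is precisely your combination of a Bramble--Hilbert estimate on the fixed reference interval with the affine scaling identity $|w|_{\mathcal{H}^\ell(I_j)}=(\Delta_j/2)^{1/2-\ell}|\hat{w}|_{\mathcal{H}^\ell(\hat{I})}$. Your closing observation --- that polynomial reproduction alone does not suffice and that boundedness of $\hat{\Pi}$ from $\mathcal{H}^{k+1}(\hat{I})$ into $\mathcal{H}^{m}(\hat{I})$ is a tacit extra hypothesis --- is also accurate: it appears as an explicit hypothesis in Ciarlet's Theorem~3.1.4 and is satisfied by the Gauss--Radau interpolant $R_h$ of Definition~\ref{def:rh}, which is the only operator to which the theorem is applied in this paper.
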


Let us establish an upper bound for the penultimate term of (\ref{eq:factorisation_first_terms}), by inverting the temporal and spatial derivatives and using the Cauchy-Schwartz inequality on\\ $\int_{I_j} |\partial^k_xv_h \partial_t \left(\partial^{k}_x[u-R_h(u)]\right)|dx$, we can 
{\color{black}then bound the term:} $\|\partial^k_xv_h\|_{\mathcal{L}^2(I_j)}$, with a Markov-type inequality, Theorem~\ref{th:markov}. The interpolation error term, $\|\partial^{k}_x[\partial_tu-R_h(\partial_tu)]\|_{\mathcal{L}^2(I_j)}$, can be bound using Theorem~\ref{th:approximation}, since $R_{h{|_{I_j}}}$ is the $k$-th order Lagrange interpolation. The Cauchy inequality, $ \left(\sum_{j=1}^N a_j b_j\right)^2 \leq\left(\sum_{j=1}^N a_j^2\right)\left(\sum_{j=1}^N b_j^2\right)$, allows us to reorganize the sum. We also use $|\partial_{t}u|_{\mathcal{H}^{k}(\Omega)}=|a||u|_{\mathcal{H}^{k+1}(\Omega)}$ as well as $\forall t|u|_{\mathcal{H}^{k+1}(\Omega)}\leq|u_0|_{\mathcal{H}^{k+1}(\Omega)}$ to simplify the temporal integral:
\begin{align}
    |c|\int_0^T&\sum\limits_{j=1}^N\int_{I_j}|\partial^k_xv_h\partial_t\left(\partial^{k}_x[u-R_h(u)]\right)|\left(\frac{\Delta_j}{2}\right)^{2k}dx\;dt\nonumber\\
    &\leq|c|\int_0^T\sum\limits_{j=1}^N\|\partial^k_xv_h\|_{\mathcal{L}^2(I_j)}\|\partial^{k}_x[\partial_tu-R_h(\partial_tu)]\|_{\mathcal{L}^2(I_j)}\left(\frac{\Delta_j}{2}\right)^{2k}dt\nonumber\\
    &\leq|c|\int_0^T\sum\limits_{j=1}^N\left(\frac{2}{\Delta_j}\right)^{k}\sqrt{C_1\dots C_k}\|v_h\|_{\mathcal{L}^2(I_j)}|\partial_tu-R_h(\partial_tu)|_{\mathcal{H}^{k}(I_j)}\left(\frac{\Delta_j}{2}\right)^{2k}dt\nonumber\\
    &\leq|c|\sqrt{C_1\dots C_k}\int_0^T\sum\limits_{j=1}^N\left(\frac{\Delta_j}{2}\right)^{k}\|v_h\|_{\mathcal{L}^2(I_j)}C_{k,k}\Delta_j^{k+1-k}|\partial_tu|_{\mathcal{H}^{k+1}(I_j)}dt\nonumber\\
    &\leq|c|C_{k,k}\left(\frac{1}{2}\right)^{k}\sqrt{C_1\dots C_k}\Delta x^{k+1}\int_0^T\|v_h\|_{\mathcal{L}^2(\Omega)}|\partial_tu|_{\mathcal{H}^{k+1}(\Omega)}dt\nonumber\\
    &\leq|c|C_{k,k}\left(\frac{1}{2}\right)^{k}\sqrt{C_1\dots C_k}\Delta x^{k+1}\int_0^T\|v_h\|_{\mathcal{L}^2(\Omega)}|a||u|_{\mathcal{H}^{k+2}(\Omega)}dt\nonumber\\
    &\leq|c|C_{k,k}\left(\frac{1}{2}\right)^{k}\sqrt{C_1\dots C_k}\Delta x^{k+1}|a||u_0|_{\mathcal{H}^{k+2}(\Omega)}\int_0^T\|v_h\|_{\mathcal{L}^2(\Omega)}dt,\label{eq:penultimate_term}
\end{align}

The procedure is similar for the last term of (\ref{eq:factorisation_first_terms}), except that there is no interpolation error to bound:
\begin{align}
    |ac|\int_0^T\sum\limits_{j=1}^N\int_{I_j}|\partial^k_xv_h\partial^{k+1}_xu|&\left(\frac{\Delta_j}{2}\right)^{2k}dx\;dt\nonumber\\
    \leq&|ac|\left(\frac{\Delta x}{2}\right)^{k}\sqrt{C_1\dots C_k}|u_0|_{\mathcal{H}^{k+1}(\Omega)}\int_0^T\|v_h\|_{\mathcal{L}^2(\Omega)}dt.\label{eq:Last_term}
\end{align}

To summarize, we obtained a bound for each term in (\ref{eq:factorisation_first_terms}), so we are able to estimate the left-hand-side of (\ref{eq:start}). By gathering all bounds (\ref{eq:factorisation_first_terms}), (\ref{eq:penultimate_term}) and (\ref{eq:Last_term}), (\ref{eq:start}) can be written as,
\begin{multline}
    \|R_h(e(T))\|^2_{k,2}+2\Theta_T(R_h(e)) \leq \|R_h(e(0))\|^2_{k,2}\\
    +\Biggl[\Biggl(2(2\zeta_a+4\zeta_b\sqrt{C_k})(1+\Lambda_k)\left|u_{0}\right|_{H^{k+2}(\Omega)}\\
    +\left.|c|C_{k,k}\left(\frac{1}{2}\right)^k\sqrt{C_1\dots C_k}|u_0|_{\mathcal{H}^{k+2}(\Omega)}\right)(\Delta x)^{k+1}\\
    \left. +\left(|c|\left(\frac{1}{2}\right)^k\sqrt{C_1\dots C_k}|u_0|_{\mathcal{H}^{k+1}(\Omega)}\right)(\Delta x)^k\right] |a|\int_{0}^{T}\left\|R_{h}(e(t))\right\|_{L^{2}(\Omega)}dt.   
\end{multline}
Since we have:
\begin{equation}
    \|R_h(e(T))\|^2_{k,2}\geq\left\|R_{h}(e(T))\right\|_{L^{2}(\Omega)}^2,
    \label{eq:l2_sobolev_comparison}
\end{equation} 
and $\Theta_T(R_h(e))\geq0$, it can be simplified as,
\begin{multline}
    \left\|R_{h}(e(T))\right\|_{L^{2}(\Omega)}^2 \leq \|R_h(e(0))\|^2_{k,2}\\
    +\Biggl[\Biggl(2(2\zeta_a+4\zeta_b\sqrt{ C_k})(1+\Lambda_k)\left|u_{0}\right|_{H^{k+2}(\Omega)}\\
    +\left.|c|C_{k,k}\left(\frac{1}{2}\right)^k\sqrt{C_1\dots C_k}|u_0|_{\mathcal{H}^{k+2}(\Omega)}\right)(\Delta x)^{k+1}\\
    \left. +\left(|c|\left(\frac{1}{2}\right)^k\sqrt{C_1\dots C_k}|u_0|_{\mathcal{H}^{k+1}(\Omega)}\right)(\Delta x)^k\right] |a|\int_{0}^{T}\left\|R_{h}(e(t))\right\|_{L^{2}(\Omega)}dt.  
\end{multline}

{\color{black}To finalize the proof we need to resolve the temporal integral term. Achieving this culminating step requires the use of a variation of Gronwall's Lemma. We restate it for completeness:}
\begin{theorem}(see \cite[Proposition 1.2]{barbu2016differential}) {Variation {\color{black}of the} Gronwall Lemma}

    Let $f:[0, T] \rightarrow \mathbb{R}$ be a continuous function which satisfies the following relation:
    \begin{equation}
        \frac{1}{2} f^2(t) \leq \frac{1}{2} f_0^2+\int_a^t \Psi(s) f(s) d s, \quad t \in[0, T],
    \label{eq:gronwall_lemma_condition}
    \end{equation}
    where $f_0 \in \mathbb{R}$ and $\Psi$ are non-negative continuous in $[0, T]$. Then the estimation
    \begin{equation}
        |f(t)| \leq\left|f_0\right|+\int_a^t \Psi(s) d s, \quad t \in[0, T]
        \label{eq:gronwall_lemma_result}
    \end{equation}
    holds.
    \label{th:Gronwall_lemma}
\end{theorem}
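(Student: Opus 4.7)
The plan is to reduce the statement to a chain-rule computation on $\sqrt{2F(t)}$, where $F$ is the right-hand side of (\ref{eq:gronwall_lemma_condition}) viewed as a function of $t$. Since $\Psi \ge 0$, the integrand $\Psi(s)f(s)$ is bounded above by $\Psi(s)|f(s)|$, so without loss of generality I will work with the companion inequality $\tfrac{1}{2} f^2(t) \le \tfrac{1}{2} f_0^2 + \int_a^t \Psi(s)|f(s)|\,ds =: G(t)$. This already gives the pointwise bound $|f(t)| \le \sqrt{2G(t)}$, so it suffices to control $\sqrt{2G(t)}$.

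Next, since $\Psi$ and $|f|$ are continuous on $[0,T]$, the function $G$ is $C^1$ with $G'(t) = \Psi(t)|f(t)| \le \Psi(t)\sqrt{2G(t)}$. On any subinterval where $G > 0$, the chain rule yields
\begin{equation*}
\frac{d}{dt}\sqrt{2G(t)} \;=\; \frac{G'(t)}{\sqrt{2G(t)}} \;\le\; \Psi(t).
\end{equation*}
Integrating from $a$ to $t$ and using $\sqrt{2G(a)} = |f_0|$ produces $\sqrt{2G(t)} \le |f_0| + \int_a^t \Psi(s)\,ds$, and combining with $|f(t)| \le \sqrt{2G(t)}$ delivers (\ref{eq:gronwall_lemma_result}).

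The only delicate point, and thus the main obstacle, is justifying the division by $\sqrt{2G(t)}$ at instants where $G$ may vanish (e.g.\ when $f_0 = 0$ and $f$ is zero on an initial subinterval). I would handle this by a standard perturbation argument: replace $G$ by $G_\varepsilon(t) := G(t) + \varepsilon/2$ for $\varepsilon > 0$, so that $G_\varepsilon \ge \varepsilon/2 > 0$ throughout $[0,T]$. The same chain-rule computation then applies without obstruction and yields $\sqrt{2G_\varepsilon(t)} \le \sqrt{f_0^2 + \varepsilon} + \int_a^t \Psi(s)\,ds$; letting $\varepsilon \downarrow 0$, the continuity of $G$ and of the square root on $[0,\infty)$ allow passage to the limit uniformly in $t \in [0,T]$, recovering the stated bound and completing the proof.
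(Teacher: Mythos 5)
Your proposal is correct and follows essentially the same route as the paper's proof: both bound $|f|$ by the square root of twice an $\varepsilon$-perturbed right-hand side, differentiate that square root via the chain rule to get the bound by $\Psi$, integrate, and let $\varepsilon\downarrow 0$. Your version is in fact slightly cleaner in that you replace $f$ by $|f|$ in the integrand at the outset, which the paper does only implicitly when computing the derivative of its $g_\epsilon$.
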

\begin{proof}
{\color{black} The proof is restated in appendix \ref{app:Gonwall_lemma}.}
\end{proof}

Hence applying Theorem~\ref{th:Gronwall_lemma} with $f(t)=\left\|R_{h}(e(t))\right\|_{L^{2}(\Omega)}$ continuous on $[0,T]$, $f_0=\|R_h(e(0))\|_{k,2}$ and 
\begin{align*}
 \Psi=&\frac{1}{2}\Biggl[\Biggl(2(2\zeta_a+4\zeta_b\sqrt{C_k})(1+\Lambda_k)\left|u_{0}\right|_{H^{k+2}(\Omega)}\\
    &\left.+|c|C_{k,k}\left(\frac{1}{2}\right)^k\sqrt{C_1\dots C_k}|u_0|_{\mathcal{H}^{k+2}(\Omega)}\right)(\Delta x)^{k+1}\nonumber\\
    &\left. +\left(|c|\left(\frac{1}{2}\right)^k\sqrt{C_1\dots C_k}|u_0|_{\mathcal{H}^{k+1}(\Omega)}\right)(\Delta x)^k\right] |a|,   
\end{align*}
both $f_0$ and $\Psi$ are positive constants on $[0,T]$, we obtain:
\begin{align}
    \left\|R_{h}(e(T))\right\|_{L^{2}(\Omega)} &\leq \|R_h(e(0))\|_{k,2}\nonumber\\
    &+\left[\Biggl(2(2\zeta_a+4\zeta_b\sqrt{C_k})(1+\Lambda_k)\left|u_{0}\right|_{H^{k+2}(\Omega)}\right. \nonumber\\
    &\hspace{2cm}+\left.|c|C_{k,k}\left(\frac{1}{2}\right)^k\sqrt{C_1\dots C_k}|u_0|_{\mathcal{H}^{k+2}(\Omega)}\right)(\Delta x)^{k+1}\nonumber\\
    &\left. +\left(|c|\left(\frac{1}{2}\right)^k\sqrt{C_1\dots C_k}|u_0|_{\mathcal{H}^{k+1}(\Omega)}\right)(\Delta x)^k\right] |a|\frac{T}{2}.
    \label{eq:after_Gronwall}
\end{align}

To complete the proof of Theorem~\ref{th:error_estimate}, recall that the objective is not to bound the projection of the error but the error itself. By using (\ref{eq:expanded_Rhe}), we can write the norm of the error as:
\begin{align*}
\|e(T)\|_{L^{2}(\Omega)} & =\left\|e(T)-R_{h}(e(T))+R_{h}(e(T))\right\|_{L^{2}(\Omega)}\\
&=\left\|u(T)-R_{h}(u(T))+R_{h}(e(T))\right\|_{L^{2}(\Omega)},
\end{align*}
then we can split the norm with the triangle inequality and bound the interpolation error term with Theorem~\ref{th:approximation},

\begin{align}
\|e(T)\|_{L^{2}(\Omega)}&\leq\left\|u(T)-R_{h}(u(T))\right\|_{L^{2}(\Omega)}+\left\|R_{h}(e(T))\right\|_{L^{2}(\Omega)} \\
&\leq C_{k,0}(\Delta x)^{k+1}|u(T)|_{\mathcal{H}^{k+1}(\Omega)}+\left\|R_{h}(e(T))\right\|_{L^{2}(\Omega)}
\label{eq:before_conclusion}
\end{align}

Utilizing once again: $|u(T)|_{\mathcal{H}^{k+1}(\Omega)}\leq|u_0|_{\mathcal{H}^{k+1}(\Omega)}$ as well as (\ref{eq:after_Gronwall}) in the previous inequality (\ref{eq:before_conclusion}), allows us to achieve the proof of Theorem~\ref{th:error_estimate}.

\subsection{Estimation of Constants}

There are no estimates for the constants $\zeta_A$ and $\zeta_B$, appearing in (\ref{eq:error_estimate}). There is also no simple analytical expression for the operator norm of the projection $\Lambda_k=\|R_h\|_{\mathcal{L}^2(I_j)}$. Only constants $C_{k,0}$ and $C_{k,k}$ can be estimated using Lemma~\ref{lem:constant_estimation}.


Arcangeli and Gout refined the interpolation error estimate of Theorem~\ref{th:approximation} in the particular case for Lagrange interpolation~\cite[Th.1-1]{arcangeli1976evaluation}. Using our notation, in one-dimension Theorem~\cite[Th.1-1]{arcangeli1976evaluation} can be expressed as,
\begin{lemma}(see \cite[Th.1-1]{arcangeli1976evaluation})
\begin{equation}
    \forall w \in \mathcal{H}^{k+1}(I_j),\hspace{5mm} |w-\Pi(w)|_{\mathcal{H}^{m}(I_j)}\leq C_{k,m} (\Delta_j)^{k+1-m}|w|_{\mathcal{H}^{k+1}(I_j)}, \hspace{5mm}0\leq m\leq k+1
\end{equation}
with
\begin{equation}
    C_{k,m}=\frac{1}{k !} \frac{1}{k+\frac{1}{2}} \sum_{i=1}^{k+1} \max\limits_{x \in I_j}\left(\left|\partial^m_xl_i(x)\right|\right)
\end{equation}
where $l_i$ are Lagrange basis polynomial of $\mathcal{P}^k(I_j)$
\label{lem:constant_estimation}
\end{lemma}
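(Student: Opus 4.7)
The plan is to exploit the polynomial-preservation property of the Lagrange interpolant $\Pi$, combined with a Taylor expansion with integral remainder for $w \in \mathcal{H}^{k+1}(I_j)$. First, I would pick an endpoint $a \in I_j$ and write $w = T_k + R_k$, where $T_k \in \mathcal{P}^k(I_j)$ is the Taylor polynomial of $w$ at $a$ and $R_k(x) = \frac{1}{k!}\int_a^x (x-t)^k w^{(k+1)}(t)\, dt$ is the integral remainder. Since $\Pi$ leaves $\mathcal{P}^k(I_j)$ invariant, $\Pi(T_k) = T_k$, and the interpolation error collapses to $w - \Pi(w) = R_k - \sum_{i=1}^{k+1} R_k(x_i)\, l_i$, which moves the entire problem onto controlling $R_k$ and its point values $R_k(x_i)$.

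Next I would differentiate $m$ times in $x$, apply the triangle inequality, and estimate each piece in $L^2(I_j)$. For each point value $R_k(x_i)$, a Cauchy--Schwarz applied to the remainder integral detaches the $w^{(k+1)}$ factor cleanly as $|w|_{\mathcal{H}^{k+1}(I_j)}$, leaving an $L^2$-norm of the monomial $(x_i-\cdot)^k$ over a subinterval of $I_j$. The elementary calculation $\int_{I_j}(x_i-t)^{2k}\,dt \leq \frac{\Delta_j^{2k+1}}{2k+1}$ is exactly what produces the factor involving $\frac{1}{k+1/2}$. For the directly-differentiated remainder term $\partial^m_x R_k$, the same Cauchy--Schwarz trick applies, and the $m$ differentiations consume $m$ powers of $(x-t)$, yielding the correct scaling $\Delta_j^{k+1-m}$.

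Finally I would collect all constants. The prefactor $\frac{1}{k!}$ tracks back to Taylor's remainder formula; the factor $\frac{1}{k+1/2}$ emerges from the explicit evaluation $\int(x_i-t)^{2k}\,dt = \frac{\Delta_j^{2k+1}}{2k+1}$; and the sum $\sum_{i=1}^{k+1}\max_{x \in I_j}|\partial^m_x l_i(x)|$ arises from bounding the Lagrange contribution $\sum_i R_k(x_i)\,\partial^m_x l_i(x)$ by the $L^\infty$ norms of the $\partial^m_x l_i$ before integrating in $x$. The main obstacle is not establishing the correct order $\Delta_j^{k+1-m}$ (already available from the cruder Theorem~\ref{th:approximation}) but making the constant \emph{sharp}: the Taylor-remainder bound and the Lagrange-basis bound must be combined in exactly the right way so that the product yields precisely $\frac{1}{k!\,(k+1/2)}\sum_i \max|\partial^m_x l_i|$ and not a looser constant. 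Adapting the Arcangeli--Gout derivation to the one-dimensional interval $I_j$ should close this gap and conclude the proof.
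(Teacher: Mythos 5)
The paper does not actually prove this lemma: it is imported verbatim from Arcangeli and Gout \cite[Th.1-1]{arcangeli1976evaluation}, and the text offers only the citation. Your sketch reconstructs essentially the argument underlying that reference --- peel off the Taylor polynomial $T_k$ at a point of $I_j$, use $\Pi T_k=T_k$ to collapse the error to the integral remainder $R_k$ and its nodal values, then apply Cauchy--Schwarz to the kernel $(x-t)^k w^{(k+1)}(t)$ --- so the strategy is sound and does produce the order $\Delta_j^{k+1-m}$ together with the structural form of the constant (the $\tfrac{1}{k!}$ from the remainder, the sum $\sum_i\max|\partial_x^m l_i|$ from the Lagrange part). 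The one place your bookkeeping does not close is the factor $\tfrac{1}{k+1/2}$: Cauchy--Schwarz applied to $R_k(x_i)$ yields $\bigl(\int_{I_j}(x_i-t)^{2k}\,dt\bigr)^{1/2}\le \Delta_j^{k+1/2}/\sqrt{2k+1}$, i.e.\ a factor $1/\sqrt{2k+1}$ rather than $1/(k+\tfrac12)=2/(2k+1)$. In Arcangeli--Gout the denominator arises as $k+1-m-n/p$ only after a \emph{second} integration of the pointwise H\"older bound over the element (here $n=1$, $p=2$, giving $k+\tfrac12-m$, which the paper's restatement writes without the $-m$), so an additional integration step and more careful $m$-dependent accounting are needed before the product of your two estimates lands exactly on the stated $C_{k,m}$. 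You flag this yourself and defer to the source for the sharp constant; since the paper does the same, and since only the rate $\Delta_j^{k+1-m}$ and the explicit $C_{k,k}$, $C_{k,0}$ enter Theorem~\ref{th:error_estimate}, this is a repair of constants rather than of structure.
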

The $k$-th derivative simply becomes:
\begin{equation}
    \partial^k_xl_i(x)=\frac{k!}{\prod\limits^{k+1}_{q=1,q\neq i}(x_i-x_q)}=\text{constant}
\end{equation}
Then:
\begin{equation}
    C_{k,k}=\frac{1}{k+\frac{1}{2}} \sum_{i=1}^{k+1} \left(\left|\prod\limits^{k+1}_{q=1,q\neq i}(x_i-x_q)^{-1}\right|\right)
    \label{eq:Ckk}
\end{equation}

\section{Results}
\label{chap:results}

{\color{black}In this chapter, we present numerical results that validate the analytical error estimate derived in the previous chapter. Specifically, we test the estimate by solving the linear advection equation and comparing the numerical error with the established theoretical values. We compute the $\mathcal{L}^2$-error for different levels of mesh refinement and investigate the convergence rate of the numerical solutions.
Additionally, we explore the convergence properties of the method as it is employed to solve a more practical two-dimensional problem, namely the isentropic vortex flow governed by the Euler equations. For theses non-linear cases, the problems are carefully formulated to ensure the presence of known exact solutions in order to compute the $\mathcal{L}^2$-error. Additionally, we perform grid refinements to analyze the convergence order across a range of parameters $c$.}

\subsection{One-dimensional Linear Advection Equation}
\subsubsection{Problem}

The accuracy of the ESFR scheme with respect to the parameter $c$ is investigated by solving the one-dimensional linear advection equation:
\begin{align}
    &\partial_tu+a\partial_xu=0\text{ in }\Omega\times[0,T]\\
    &u(x,0)=\sin(x)\hspace{3mm}\forall x\in \Omega\\
    &\text{with periodic boundary conditions}\nonumber
\end{align}

The problem is defined in a domain $\Omega=[0, 2\pi]$ with periodic boundary conditions and an initial condition $u(x,0) = \sin(x)$, so the exact solution at the final time is simply: $u(x,T)=\sin(x-aT)$. For this simulation, the wave speed is set to $a=1$.

Spatial discretization is performed using the ESFR scheme implemented as a filtered discontinuous Galerkin method {\color{black}in strong form}, which is detailed in the paper by Allaneau et al.~\cite{allaneau2011connections} and Zwanenburg and Nadarajah~\cite{zwanenburg2016equivalence}.{\color{black} We use an upwind numerical flux as it is defined in (\ref{eq:flux}).} Numerical implementation for this work is based on a modification of the nodal DG code provided by Hesthaven and Warburton~\cite{hesthaven2007nodal}.
The examined polynomial orders span from $k=2$ to $k=5$.
The Legendre Gauss Lobatto (LGL) quadrature points are used. The range of the number of elements is adapted to polynomial order, ranging from 16 to 512.

The time evolution of the solution is computed using the low storage five-stage fourth-order explicit Runge-Kutta scheme (LSERK) with a final time of $T=\pi$. The details of this temporal discretization scheme are presented in~\cite{hesthaven2007nodal} and references therein. LSERK is commonly also referred to as RK45~\cite{vincent2011insights}. To ensure that the global error is dominated by the spatial discretization error, a sufficiently small time step is chosen.

\subsubsection{$\mathcal{L}^2$-error Results}

In order to quantify the accuracy of the ESFR error estimate for the one-dimensional linear advection equation, we compute the $\mathcal{L}^2$-error between the exact solution $u$ and the approximate solution $u^{\delta}$ at final time $T=\pi$. The exact $\mathcal{L}^2$-error is given by:
\begin{equation}
\|e\|_{\mathcal{L}^2(\Omega)}=\sqrt{\int_{\Omega} (u-u^{\delta})^2 dx}.
\end{equation}
To compute this integral, we use the integration quadrature rule with the same LGL quadrature points that were used to compute the approximate solution. The numerical $\mathcal{L}^2$-error is given by:
\begin{equation}
e^\delta=\sqrt{\sum_{j=1}^{N} \sum_{i=0}^{k} w_{i,j}(u(x_{i,j})-u^{\delta}(x_{i,j}))^2},
\label{eq:error_quadrature_rule}
\end{equation}
where $N$ is the number of elements, $k$ the polynomial order, $w_{i,j}$ and $x_{i,j}$ denote the quadrature weight and coordinate for the $i$-th LGL quadrature point in the $j$-th element, and $u(x_{i,j})$ and $u^{\delta}(x_{i,j})$ denote the exact and approximate solutions evaluated at $x_{i,j}$. 

{\color{black}Tables~\ref{tab:convergence_cDG} and ~\ref{tab:convergence_c+} and figure~\ref{fig:Convergence_linear_advection} show the $\mathcal{L}^2$-error of the solution against the number of elements for polynomial orders between 2 and 5 and for two ESFR schemes: parameterized by $c_{DG}=0$ and $c_+$ respectively. Where $c_+$ is the value of $c$ that allows the highest time step. This value depends on the degree of approximation and the temporal discretization of the scheme. It has been numerically determined by Vincent et al.~\cite{vincent2011insights} through a von Neumann analysis based on the stability requirement. In addition, tables~\ref{tab:convergence_cDG} and ~\ref{tab:convergence_c+} provide the $\mathcal{L}^\infty$-error of the solution as well as the odrer of accuracy (OOA) for both the $\mathcal{L}^2$-error and $\mathcal{L}^\infty$-error. According to the results, the expected order of accuracy is observed for each polynomial order and considered $c$ value.}

\begin{table}[htbp]
  \centering
  \sisetup{scientific-notation=true, round-mode=places, round-precision=2}
  \begin{adjustbox}{center}
  \begin{tabular}{cc||cccc}
     \multicolumn{2}{c||}{} &\multicolumn{4}{c}{$c_{DG}$}\\ 
     \hline
     $k$ & $N$ & $\mathcal{L}^2$-error & OOA & $\mathcal{L}^\infty$-error & OOA \\
    \hline
    \multirow{6}{5pt}{2}
    & 4 & \num{0.072745441} & - & \num{0.039983288} & -  \\
    & 8 & \num{0.010523548} & \num{2.789235689} & \num{0.007290794} & \num{2.455249203}  \\
    & 16 & \num{0.001334973} & \num{2.978738221} & \num{0.000985213} & \num{2.887568828}  \\
    & 32 & \num{0.00016751} & \num{2.994495477} & \num{0.000125428} & \num{2.973577495} \\
    & 64 & \num{2.10E-05} & \num{2.998662229} & \num{1.57E-05} & \num{2.993629658}  \\
    & 128 & \num{2.62E-06} & \num{2.999669969} & \num{1.97E-06} & \num{2.998434853}  \\
    \hline
    \multirow{6}{5pt}{3}
    & 4 & \num{0.005136294} & - & \num{0.021941387} & -\\
    & 8 & \num{0.000332137} & \num{3.950876128} & \num{0.001426368} & \num{3.943236224}  \\
    & 16 & \num{2.28052E-05} & \num{3.864342194} & \num{8.99836E-05} & \num{3.986541141} \\
    & 32 & \num{1.4258E-06} & \num{3.999524822} & \num{5.63816E-06} & \num{3.996365887}\\
    & 64 & \num{8.90992E-08} & \num{4.000213044} & \num{3.52597E-07} & \num{3.999130791}\\
    & 128 & \num{5.56924E-09} & \num{3.999859144} & \num{2.20406E-08} & \num{3.999787422} \\
    \hline
    \multirow{6}{5pt}{4}
    & 4 & \num{0.000345787} & - & \num{0.00122633} & -  \\
    & 8 & \num{1.11655E-05} & \num{4.952768479} & \num{3.95471E-05} & \num{4.954632745} \\
    & 16 & \num{3.53864E-07} & \num{4.979703502} & \num{1.24006E-06} & \num{4.995086002} \\
    & 32 & \num{1.12732E-08} & \num{4.972222684} & \num{3.8733E-08} & \num{5.000706663} \\
    & 64 & \num{3.61636E-10} & \num{4.962220252} & \num{1.21069E-09} & \num{4.999655224}\\
    & 128 & \num{1.14297E-11} & \num{4.983675766} & \num{3.78553E-11} & \num{4.999194826}\\
    \hline
    \multirow{6}{5pt}{5}
    & 2 & \num{0.001572988} & - & \num{0.004964984} & - \\
    & 4 & \num{2.13693E-05} & \num{6.20182329} & \num{6.53311E-05} & \num{6.247876045} \\
    & 8 & \num{3.46515E-07} & \num{5.946478497} & \num{1.04372E-06} & \num{5.967957827}\\
    & 16 & \num{5.45574E-09} & \num{5.988999155} & \num{1.63956E-08} & \num{5.99228762} \\
    & 32 & \num{8.54187E-11} & \num{5.997078807} & \num{2.56795E-10} & \num{5.996548371}\\
    & 64 & \num{1.33624E-12} & \num{4.477671722} & \num{4.02308E-12} & \num{5.996170938} \\
  \end{tabular}
    \end{adjustbox}
  \captionsetup{skip=0pt}
  \caption{1D linear advection, $\mathcal{L}^2$ and $\mathcal{L}^\infty$-convergence for polynomial order between 2 and 5 with $c_{DG}$}
  \label{tab:convergence_cDG}
\end{table}

\begin{table}[htbp]
  \centering
  \sisetup{scientific-notation=true, round-mode=places, round-precision=2}
  \begin{adjustbox}{center}
  \begin{tabular}{cc||cccc}
     \multicolumn{2}{c||}{} & \multicolumn{4}{c}{$c_+$} \\ 
     \hline
     $k$ & $N$ & $\mathcal{L}^2$-error & OOA & $\mathcal{L}^\infty$-error & OOA \\
    \hline
    \multirow{6}{5pt}{2}
    & 4  & \num{0.330836357} & - & \num{0.174221865} & - \\
    & 8  & \num{0.052405034} & \num{2.658340485} & \num{0.034149185} & \num{2.351002624} \\
    & 16 & \num{0.007007189} & \num{2.902797615} & \num{0.004962076} & \num{2.782835288} \\
    & 32 & \num{0.000897235} & \num{2.965277383} & \num{0.000657034} & \num{2.916903247} \\
    & 64 & \num{0.000113341} & \num{2.984811881} & \num{8.42E-05} & \num{2.96413887} \\
    & 128 & \num{1.42E-05} & \num{2.992851856} & \num{1.06E-05} & \num{2.983459987} \\
    \hline
    \multirow{6}{5pt}{3}
    & 4 & \num{0.005521328} & - & \num{0.02584185} & - \\
    & 8 & \num{0.00038193} & \num{3.853635558} & \num{0.001742279} & \num{3.890661602} \\
    & 16 & \num{2.74562E-05} & \num{3.798103447} & \num{0.000111263} & \num{3.968930956} \\
    & 32 & \num{1.76384E-06} & \num{3.960338507} & \num{6.99131E-06} & \num{3.992267192} \\
    & 64 & \num{1.10478E-07} & \num{3.996895137} & \num{4.37502E-07} & \num{3.99819974} \\
    & 128 & \num{6.91017E-09} & \num{3.998889682} & \num{2.73521E-08} & \num{3.9995655} \\
    \hline
    \multirow{6}{5pt}{4}
    & 4 & \num{0.000436003} & - & \num{0.001817129} & - \\
    & 8 & \num{1.79341E-05} & \num{4.603558103} & \num{6.53555E-05} & \num{4.797208506} \\
    & 16 & \num{5.90565E-07} & \num{4.92446883} & \num{2.059E-06} & \num{4.98829505} \\
    & 32 & \num{1.84859E-08} & \num{4.997598754} & \num{6.40175E-08} & \num{5.007331444} \\
    & 64 & \num{5.86511E-10} & \num{4.978124485} & \num{2.00247E-09} & \num{4.998613182} \\
    & 128 & \num{1.88689E-11} & \num{4.958076869} & \num{6.26645E-11} & \num{4.997988863} \\
    \hline
    \multirow{6}{5pt}{5}
    & 2 & \num{0.002423901} & - & \num{0.007433887} & - \\
    & 4 & \num{3.08164E-05} & \num{6.297487894} & \num{0.000114023} & 6.03 \\
    & 8 & \num{6.83561E-07} & \num{5.912404373} & \num{2.09684E-06} & 5.76 \\
    & 16 & \num{1.13492E-08} & \num{5.980109043} & \num{3.39111E-08} & 5.95 \\
    & 32 & \num{1.79794E-10} & \num{5.998900205} & \num{5.31255E-10} & 6.00\\
    & 64 & \num{2.81142E-12} & \num{6.009863629} & \num{8.24429E-12} & 6.01\normalsize\\
  \end{tabular}
    \end{adjustbox}
  \captionsetup{skip=0pt}
  \caption{1D linear advection, $\mathcal{L}^2$ and $\mathcal{L}^\infty$-convergence for polynomial order between 2 and 5 with ${c_+}$}
  \label{tab:convergence_c+}
\end{table}
\begin{figure}[ht]
    \centering
    \includegraphics[width=0.7\linewidth,trim=0 0 60 20,clip]{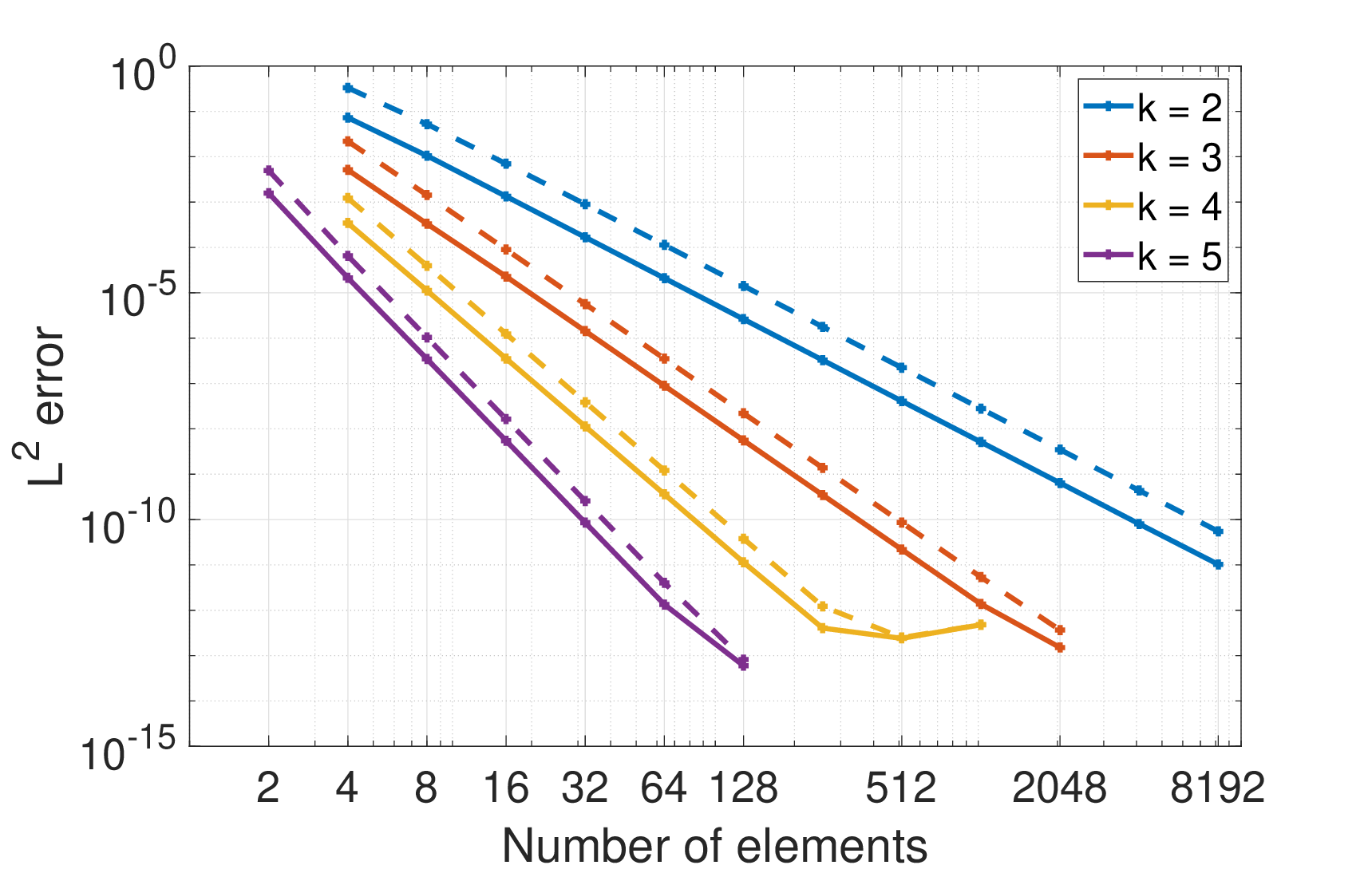}
    \captionsetup{skip=0pt}
    \caption{1D linear advection, convergence plot for different polynomial order, solid line $c_{DG}$, dash line $c_+$}
    \label{fig:Convergence_linear_advection}
\end{figure}
In addition, we compute the error estimate, Theorem~\ref{th:error_estimate}, as a function of $c$. We express the error estimate (\ref{eq:error_estimate}) with a simplified notation:
\begin{equation}
    \|e(T)\|_{L^{2}(\Omega)}\leq\lambda+(\mu+\nu |c|)\Delta x^{k+1}+\eta |c|\Delta x^k,
    \label{eq:simplifiederrorestimate}
\end{equation}
\begin{align*}
    \text{where: }&\lambda = \|R_h(e(0))\|_{k,2}, \\
    &\mu = |a|T(2\zeta_a+4\zeta_b\sqrt{C_k})(1+\Lambda_k)\left|u_{0}\right|_{H^{k+2}(\Omega)}+C_{k,0}|u_0|_{\mathcal{H}^{k+1}(\Omega)},\\
    &\nu = C_{k,k}\left(\frac{1}{2}\right)^k\sqrt{C_1\dots C_k}|u_0|_{\mathcal{H}^{k+1}(\Omega)}|a|\frac{T}{2},\\
    &\eta=\left(\frac{1}{2}\right)^k\sqrt{C_1\dots C_k}|u_0|_{\mathcal{H}^{k+1}(\Omega)}|a|\frac{T}{2},
\end{align*}
with $a=1$, $T=\pi$ and $\left|u_{0}\right|_{H^{k+2}(\Omega)}=\left|u_{0}\right|_{H^{k+1}(\Omega)}=\pi$, the value of $C_{k,k}$ is obtained using (\ref{eq:Ckk}). The $\lambda$ term is independent of $\Delta x$ and is small enough to be negligible. 

Since there is no analytical estimate for constants $\zeta_A$ and $\zeta_B$, $\mu$ is computed numerically, by setting $c$ to 0. The constant $\mu$ is then evaluated based on the computed numerical error for a specific grid size, $\Delta x$ and polynomial order, $k$ through the following relationship: $\mu={e^\delta}/{\Delta x^{k+1}}$. Once evaluated, the value, $\mu$ is kept frozen as we increase the value of parameter $c$.

{\color{black}Now that all constants are known either analytically or numerically, we are able to plot both the ($k+1$)- and $k$-th order terms as a function of $c$.
The results depicted in Figures~\ref{fig:error_p3_N32} and~\ref{fig:error_p3_N256} illustrate the error estimates for $k=3$ with 32 and 256 elements, respectively and Figures~\ref{fig:error_p5_N16} and~\ref{fig:error_p5_N64} provides visuals for $k=5$ with 16 and 64 elements. The curve for the ($k+1$)-th order term, $(\mu+\nu |c|)\Delta x^{k+1}$, is highlighted in green, while the $k$-th order term, $\eta |c|\Delta x^k$ is in blue.}

\begin{figure}[ht]
    \centering
    \begin{subfigure}{0.49\linewidth}
        \centering
        \includegraphics[width=\linewidth,trim=7 5 75 25,clip]{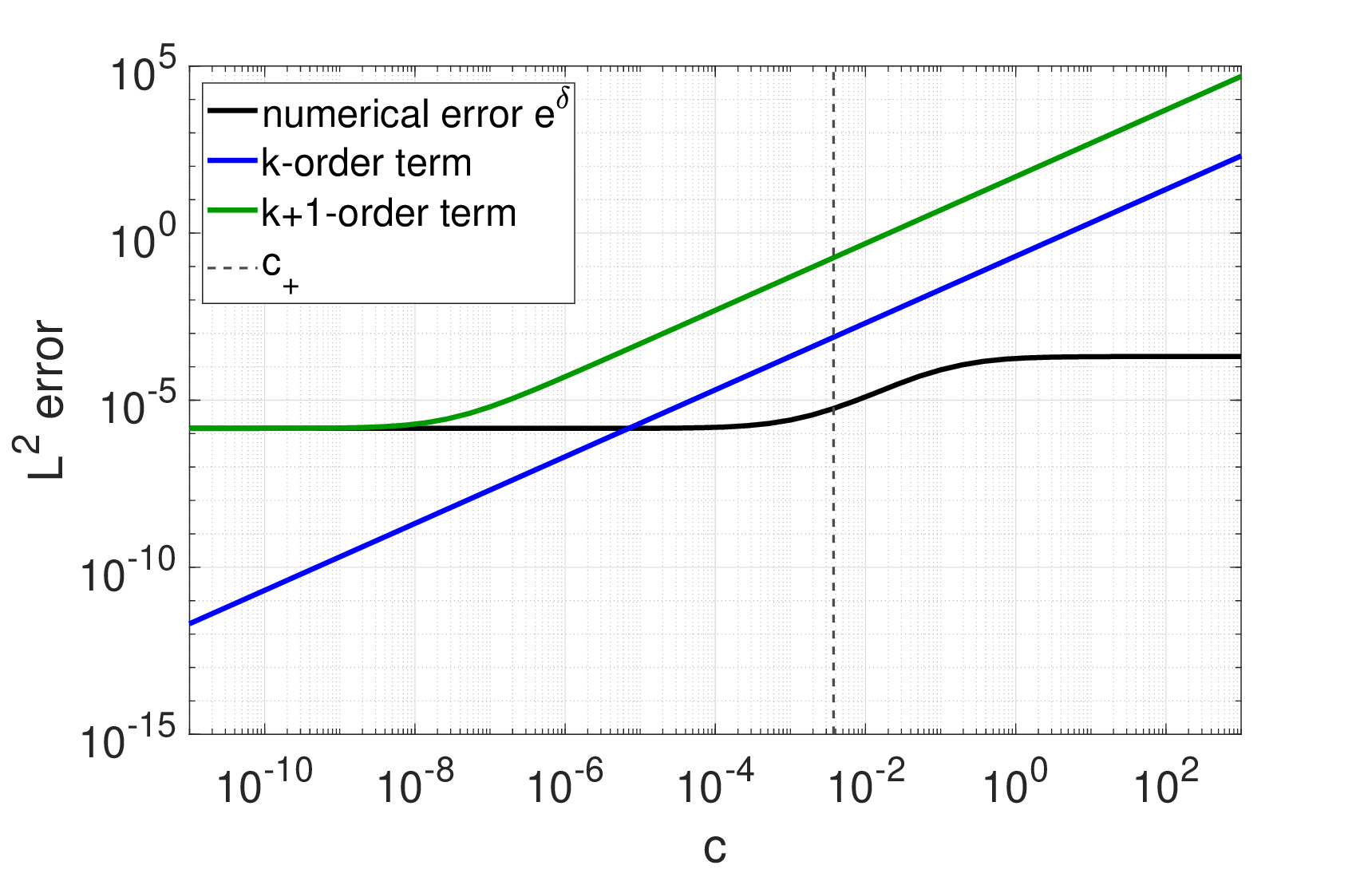}
        \captionsetup{skip=0pt}
        \caption{$N=32$ elements}
        \label{fig:error_p3_N32}
    \end{subfigure}
    \hfill
    \begin{subfigure}{0.49\linewidth}
        \centering
        \includegraphics[width=\linewidth,trim=7 5 75 25,clip]{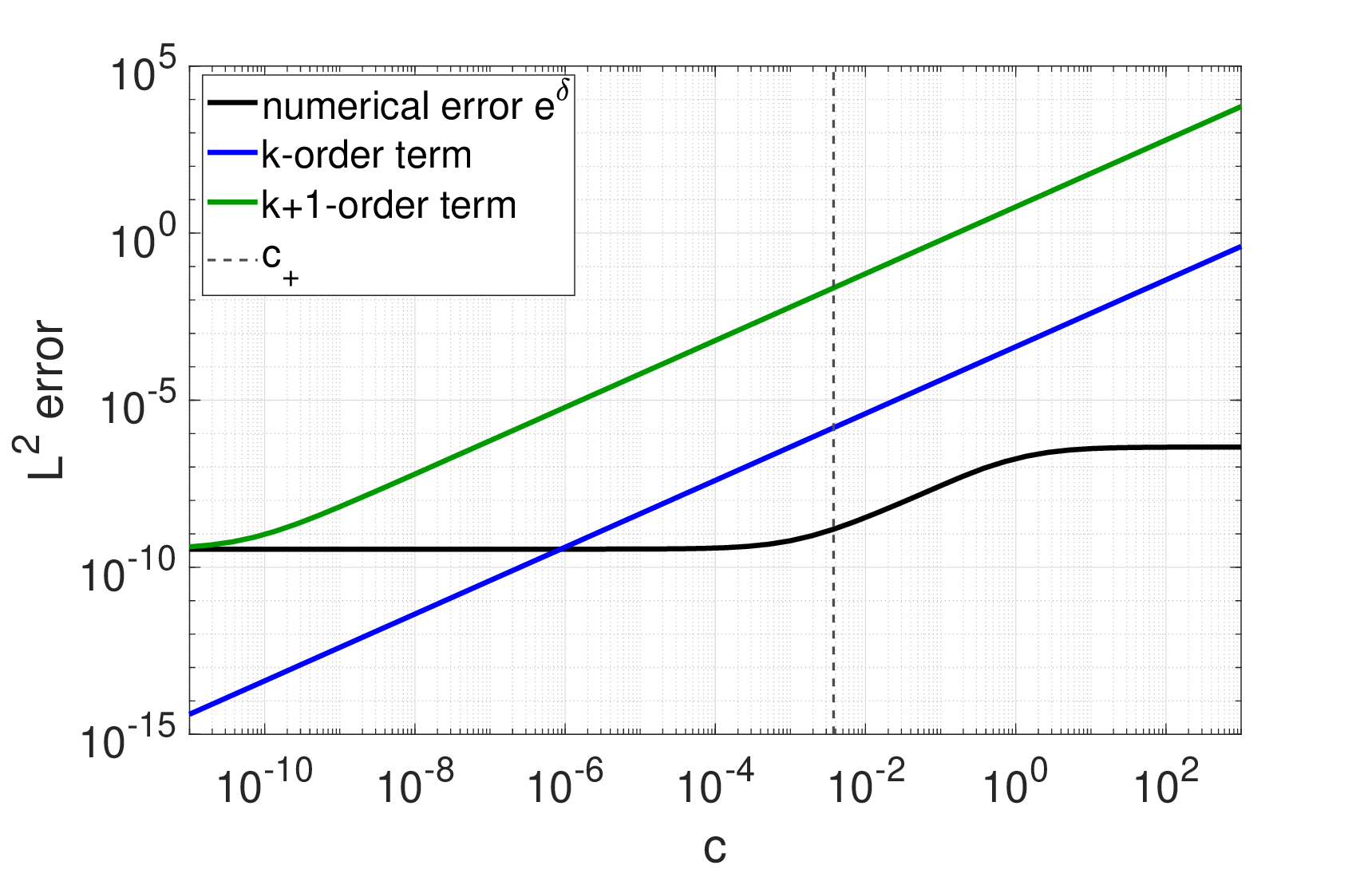}
        \captionsetup{skip=0pt}
        \caption{$N=256$ elements}
        \label{fig:error_p3_N256}
    \end{subfigure}
    \captionsetup{skip=-7pt}
    \caption{1D linear advection, numerical and estimated $\mathcal{L}^2$-error as function parameter $c$ between $10^{-11}$ and $10^3$, polynomial order $k=3$.}
    \label{fig:error_p3}
\end{figure}

\begin{figure}[ht]
    \centering
    \begin{subfigure}{0.49\linewidth}
        \centering
        \includegraphics[width=\linewidth,trim=7 5 58 25,clip]{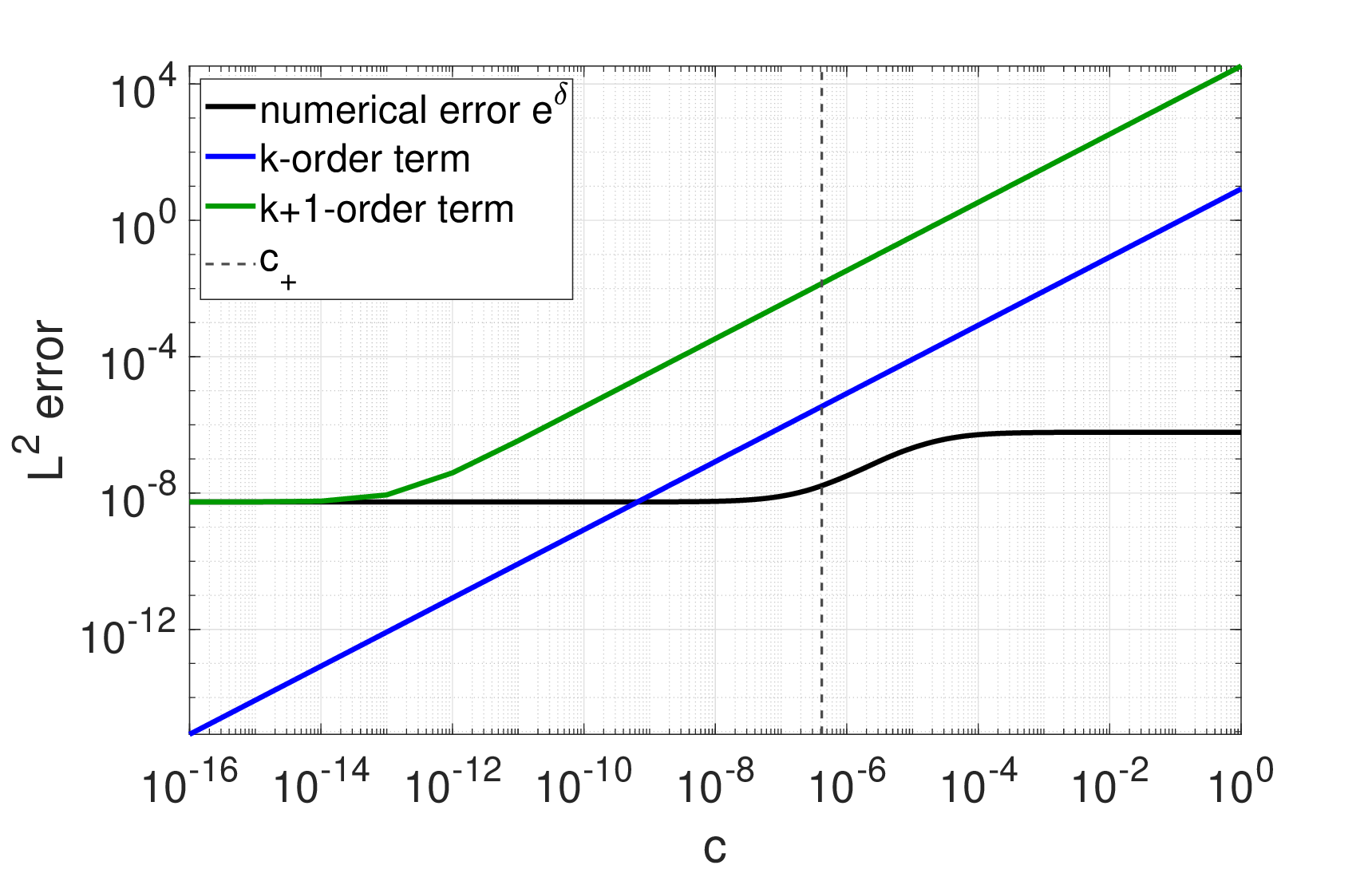}
        \captionsetup{skip=0pt}
        \caption{$N=16$ elements}
        \label{fig:error_p5_N16}
    \end{subfigure}
    \hfill
    \begin{subfigure}{0.49\linewidth}
        \centering
        \includegraphics[width=\linewidth,trim=7 5 58 25,clip]{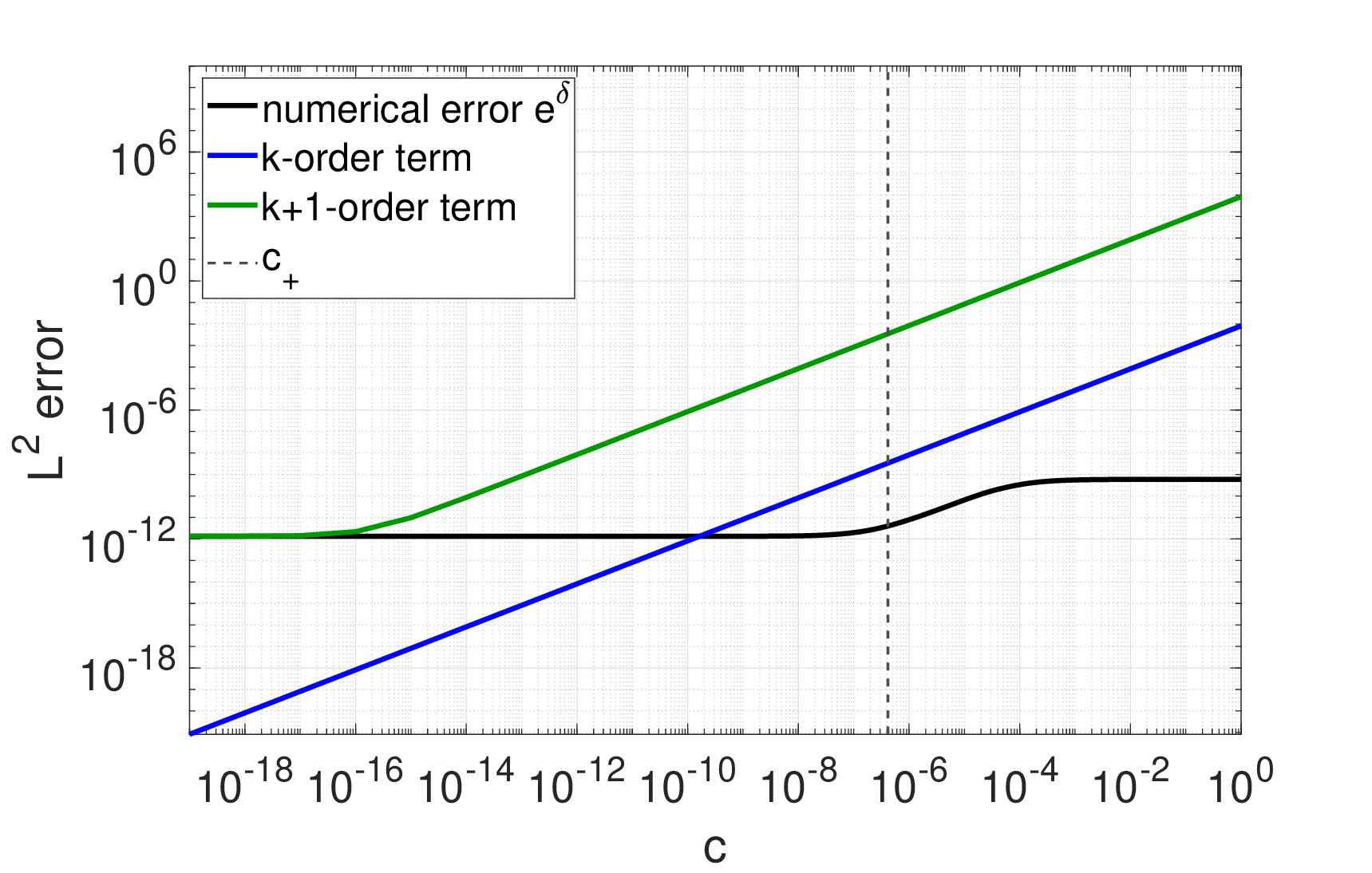}
        \captionsetup{skip=0pt}
        \caption{$N=64$ elements}
        \label{fig:error_p5_N64}
    \end{subfigure}
    \captionsetup{skip=-7pt}
    \caption{1D linear advection, numerical and estimated $\mathcal{L}^2$-error as function parameter $c$ between $10^{-16}$ and $1$, polynomial order $k=5$.}
    \label{fig:error_p5}
\end{figure}

In each case, by construction, the ($k+1$)-th order term (green) matches the numerical $\mathcal{L}^2$-error for very low values of the parameter $c$. The $\mathcal{L}^2$-error is always smaller than the estimate, but its behaviour differs for high values of $c$. The numerical $\mathcal{L}^2$-error $e^\delta$ exhibits almost no sensitivity to changes in $c$ for both low and high values of $c$ with a smooth increasing step in between, while the estimate is almost constant in $c$ only for low values of $c$ and increases linearly after. In other words, the term $\mu\Delta x^{k+1}$, which is independent of $c$, is dominant over $\nu |c|\Delta x^{k+1}+\eta |c|\Delta x^k$ only when $c$ is sufficiently small.

{\color{black}
\begin{remark}
    If we were able to find a sharper comparison between the $\mathcal{L}^2$-norm and Sobolev type norm of the projection of the error (\ref{eq:l2_sobolev_comparison}), we could obtain similar shapes of estimate and numerical error also at high values of $c$.
\end{remark}}

This study of the $\mathcal{L}^2$-error is too limited to draw general conclusions in regards to the sharpness of the estimate. First, we neglected the constant term $\lambda = \|R_h(e(0))\|_{k,2}$, and secondly, we performed a hybrid computation of the error estimation between analytical ($\nu,\eta$) and numerical ($\mu$) constants from (\ref{eq:simplifiederrorestimate}). Those numerically determined constants could be problem dependent and affect in either way the sharpness of the error estimate.

\subsubsection{Order of Accuracy Results}

In order to investigate the order of accuracy of the numerical method, we conducted a series of experiments with different polynomial orders and number of elements. The number of elements was chosen depending on the polynomial order, to ensure that the error remained higher than machine precision. The OOA was measured by computing the $\mathcal{L}^2$-error for each grid size and calculating the average slope of the $\mathcal{L}^2$-error versus the number of elements on a log scale.

Figure~\ref{fig:OOA_linear_advection} shows the order of convergence results for polynomial orders from 2 to 5. For each polynomial order, the ESFR scheme is ($k+1$)-th order accurate up to about $c_+$ and then drops down to $k$-th order accuracy for high values of $c$. Similar results were reported by Castonguay~\cite[Fig.3.6]{Castonguay}.

Since the decrease is monotonic, it is difficult to determine graphically the precise value of $c$ where the order is lost. {\color{black}Moreover, this value depends slightly on the 
grid size employed when computing the average order of accuracy.} As figure~\ref{fig:OOAk2_evolution} illustrates the lost of order occurs at lower values of $c$ for coarser grids.

\begin{figure}[ht]
    \centering
    \begin{subfigure}{0.49\linewidth}
    \centering
        \includegraphics[width=\linewidth,trim=20 0 50 27,clip]{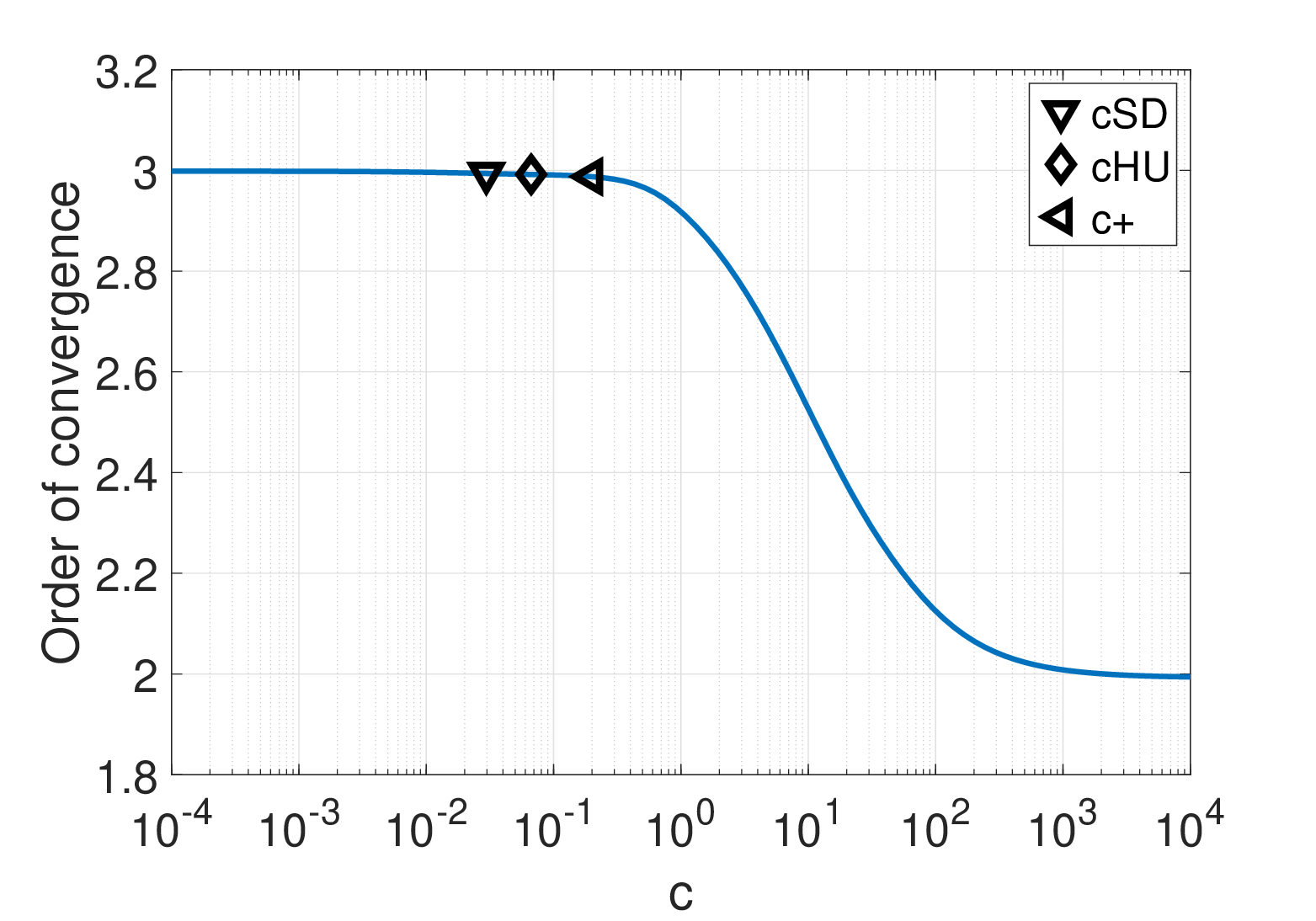}
        \captionsetup{skip=0pt}
        \caption{$k=2, N=16$ to $512$}
        \label{fig:OOAk2}
    \end{subfigure}
    \hfill
    \begin{subfigure}{0.49\linewidth}
    \centering
        \includegraphics[width=\linewidth,trim=20 0 50 27,clip]{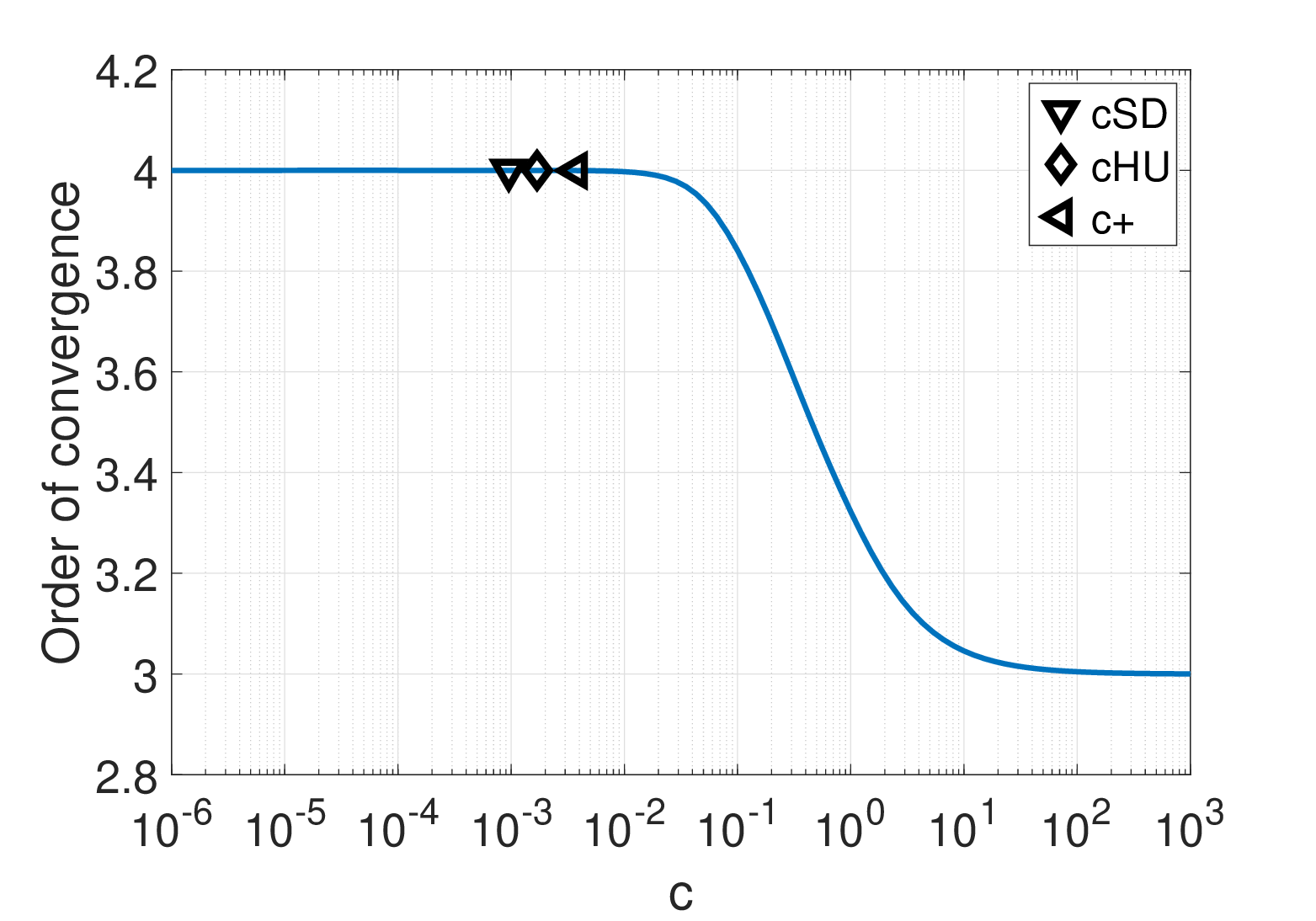}
        \captionsetup{skip=0pt}
        \caption{$k=3, N=32$ to $256$}
        \label{fig:OOAk3}
    \end{subfigure}
    
    \vspace{-3mm}
    
    \begin{subfigure}{0.48\linewidth}
    \centering
        \includegraphics[width=\linewidth,trim=20 0 50 27,clip]{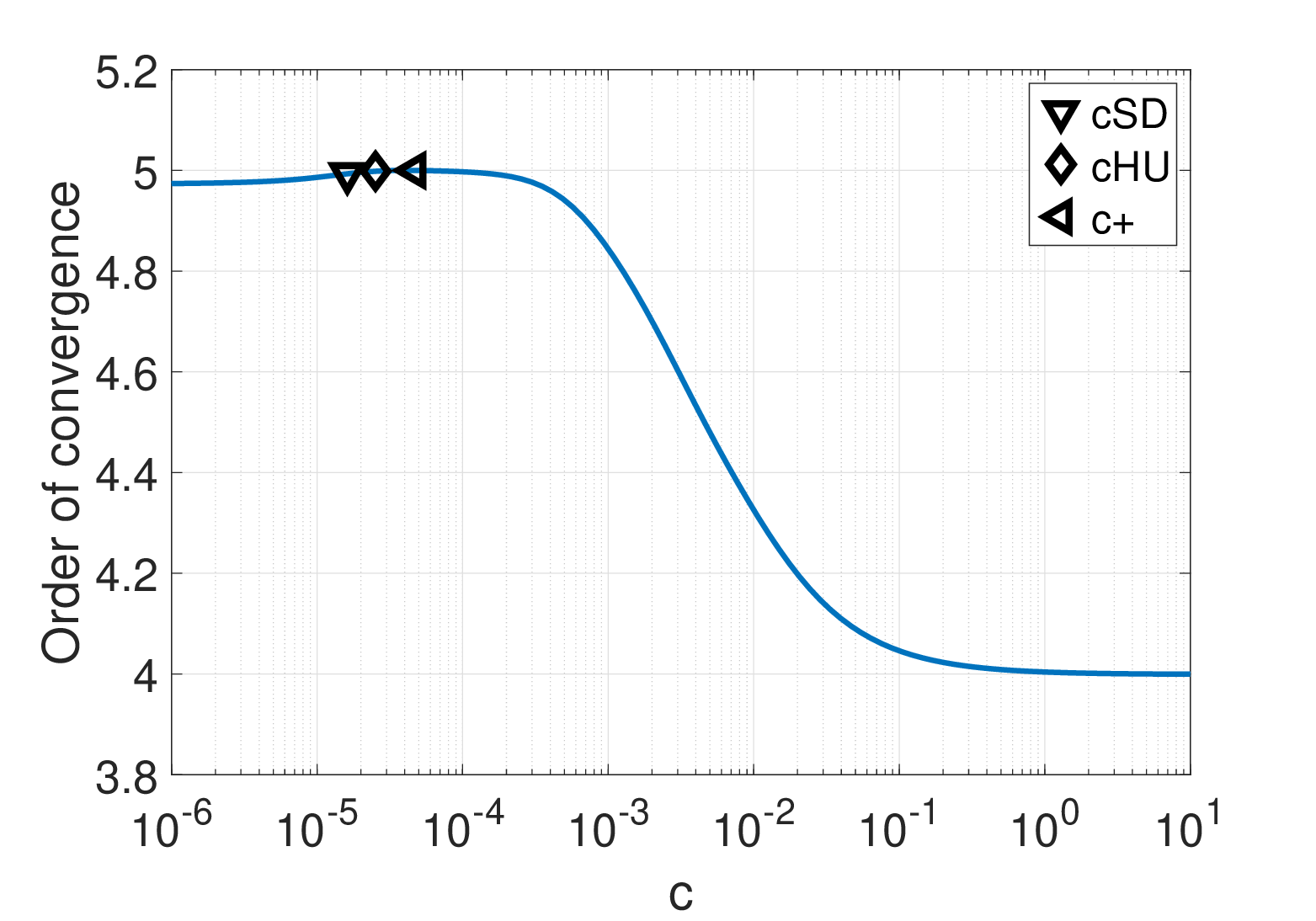}
        \captionsetup{skip=0pt}
        \caption{$k=4, N=16$ to $128$}
        \label{fig:OOAk4}
    \end{subfigure}
    \hfill
    \begin{subfigure}{0.48\linewidth}
    \centering
        \includegraphics[width=\linewidth,trim=20 0 50 27,clip]{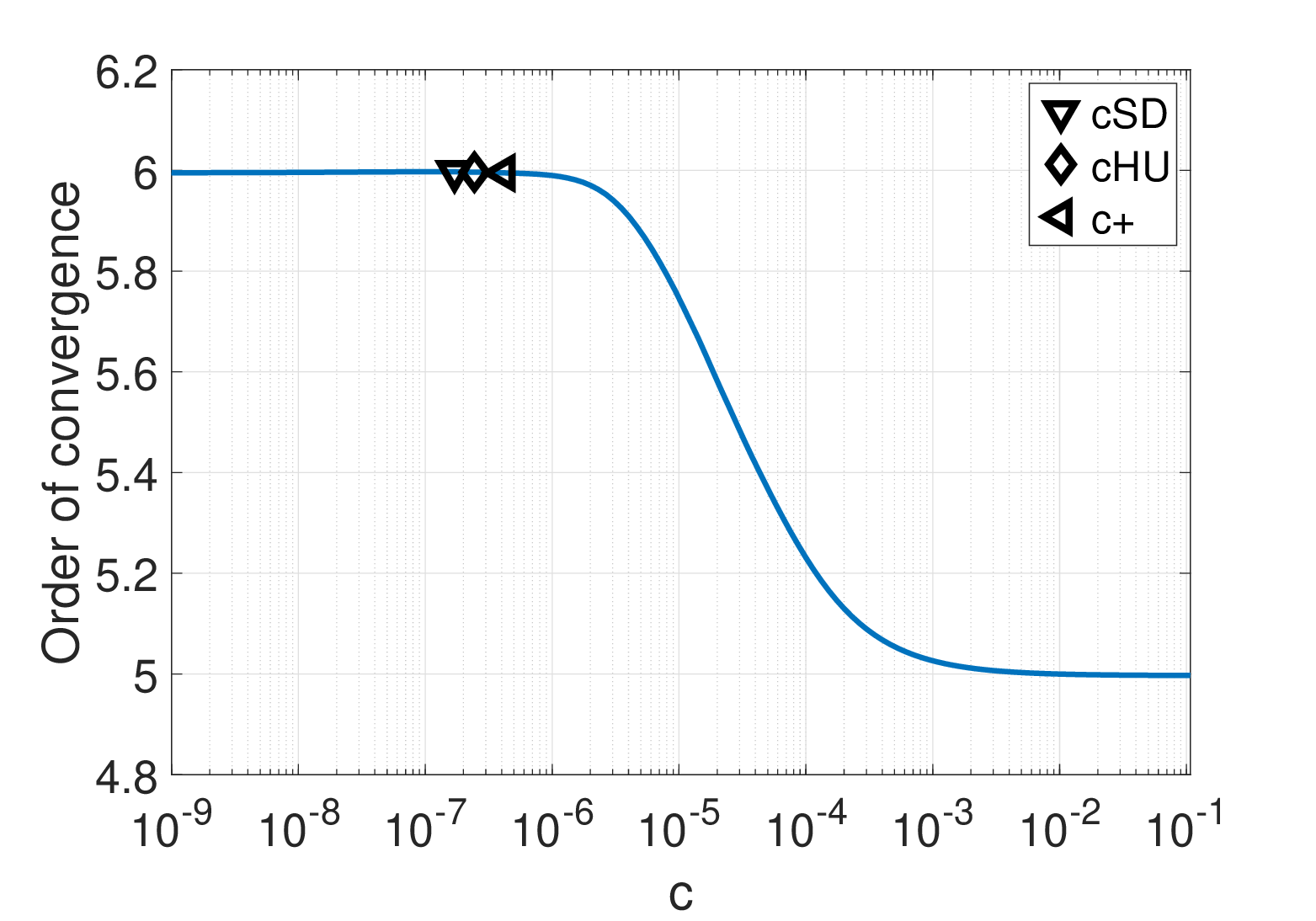}
        \captionsetup{skip=0pt}
        \caption{$k=5, N=32$ to $128$}
        \label{fig:OOAk5}
    \end{subfigure}
    \captionsetup{skip=-1pt}
    \caption{Order of convergence as a function of ESFR parameter $c$}
    \label{fig:OOA_linear_advection}
\end{figure}

\begin{figure}[ht]
    \centering
    \includegraphics[width=0.7\linewidth,trim=20 0 50 27,clip]{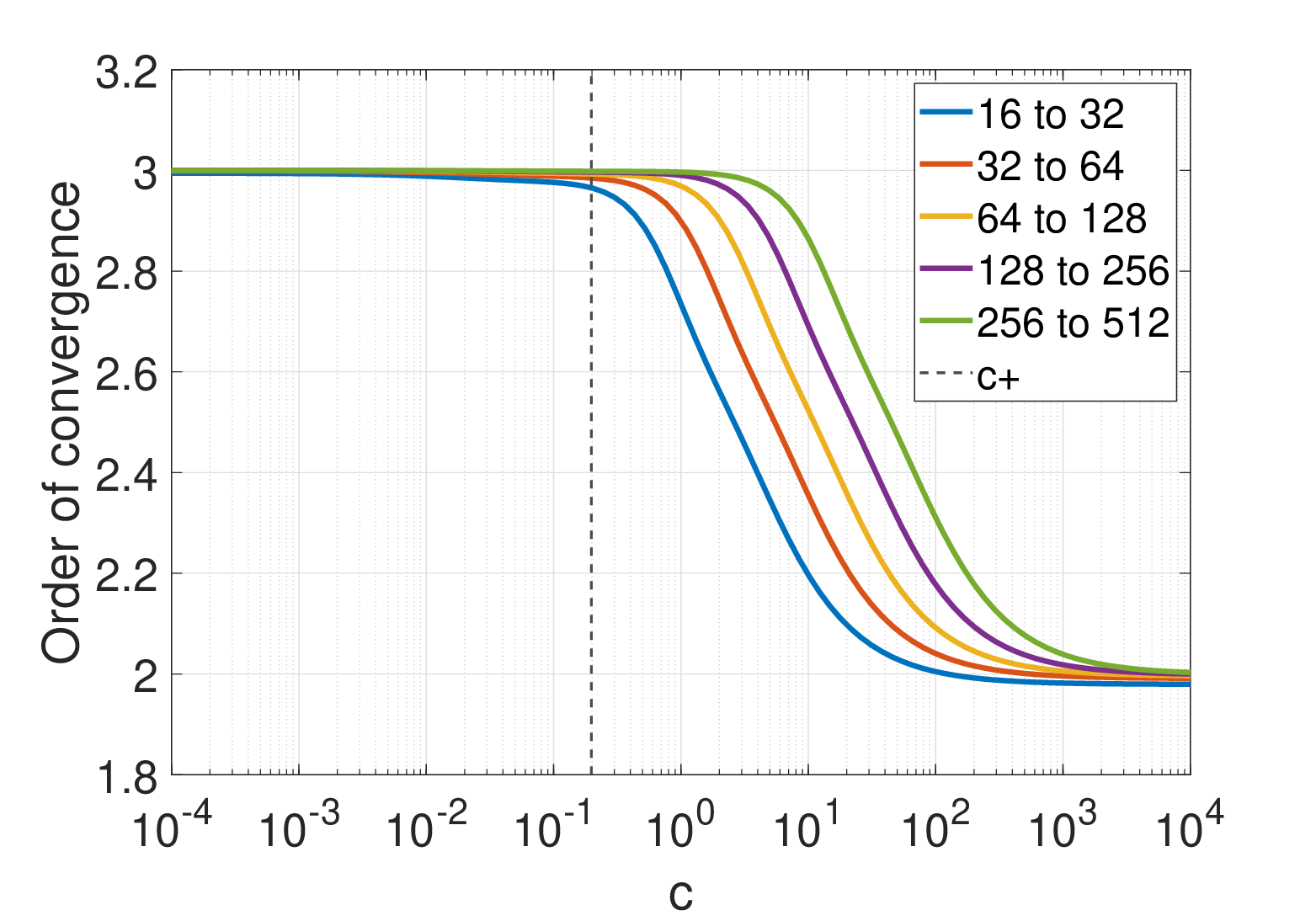}
    \captionsetup{skip=0pt}
    \caption{1D linear advection, $k=2$, convergence for different numbers of elements }
    \label{fig:OOAk2_evolution}
\end{figure}

\subsection{One-dimensional Inviscid Burgers Equation}

In this study, we extend our numerical investigation of the $\mathcal{L}^2$ error and order of convergence to the one-dimensional inviscid Burgers equation. While the error estimate derived before is only valid for the linear advection case, we aim to explore the behavior of the error and order of convergence in the nonlinear inviscid Burgers equation.

\subsubsection{Problem}

We now focus on the one-dimensional inviscid Burgers equation with a non-zero source term, which is a nonlinear hyperbolic PDE:
\begin{align}
&\partial_t u + u\partial_x u = \sin(\pi(x-t))\left[1-\cos(\pi(x-t))\right]\text{ in }\Omega\times[0,T]\label{eq:burgers}\\
&u(x,0)=\cos(\pi x)\hspace{3mm}\forall x\in \Omega
\end{align}
In this study, the domain of interest is $\Omega=[0;2]$. We consider the initial condition $u_0=\cos(\pi x)$, the source term has been designed such that the exact solution is simply: $u(x,t) = \cos(\pi(x-t))$. The simulation is run until a final time $T=2$.

We solve the problem numerically using the fourth-order explicit Runge-Kutta (RK4) time-stepping scheme. The Gauss-Lobatto (GL) quadrature points are used for the spatial discretization, and we take small enough time steps to avoid temporal discretization errors. We use the Lax-Friedrichs flux as the interface numerical flux. For these cases we employ our in-house CFD solver, \texttt{PHiLiP}~\cite{doug_shi_dong_2022_6600853}, to perform the numerical simulations.

\subsubsection{$\mathcal{L}^2$-Error Results}

The error estimate is computed using (\ref{eq:error_quadrature_rule}). 
In order to ensure the accuracy of the numerical integration, the error is overintegrated by adding 10 additional quadrature points to the $k+1$ solution points.

\begin{figure}[htp!]
    \centering
    \begin{adjustbox}{width=\textwidth,center}
    \begin{subfigure}{0.5\linewidth}
    \centering
        \includegraphics[width=\linewidth,trim=13 0 60 0,clip]{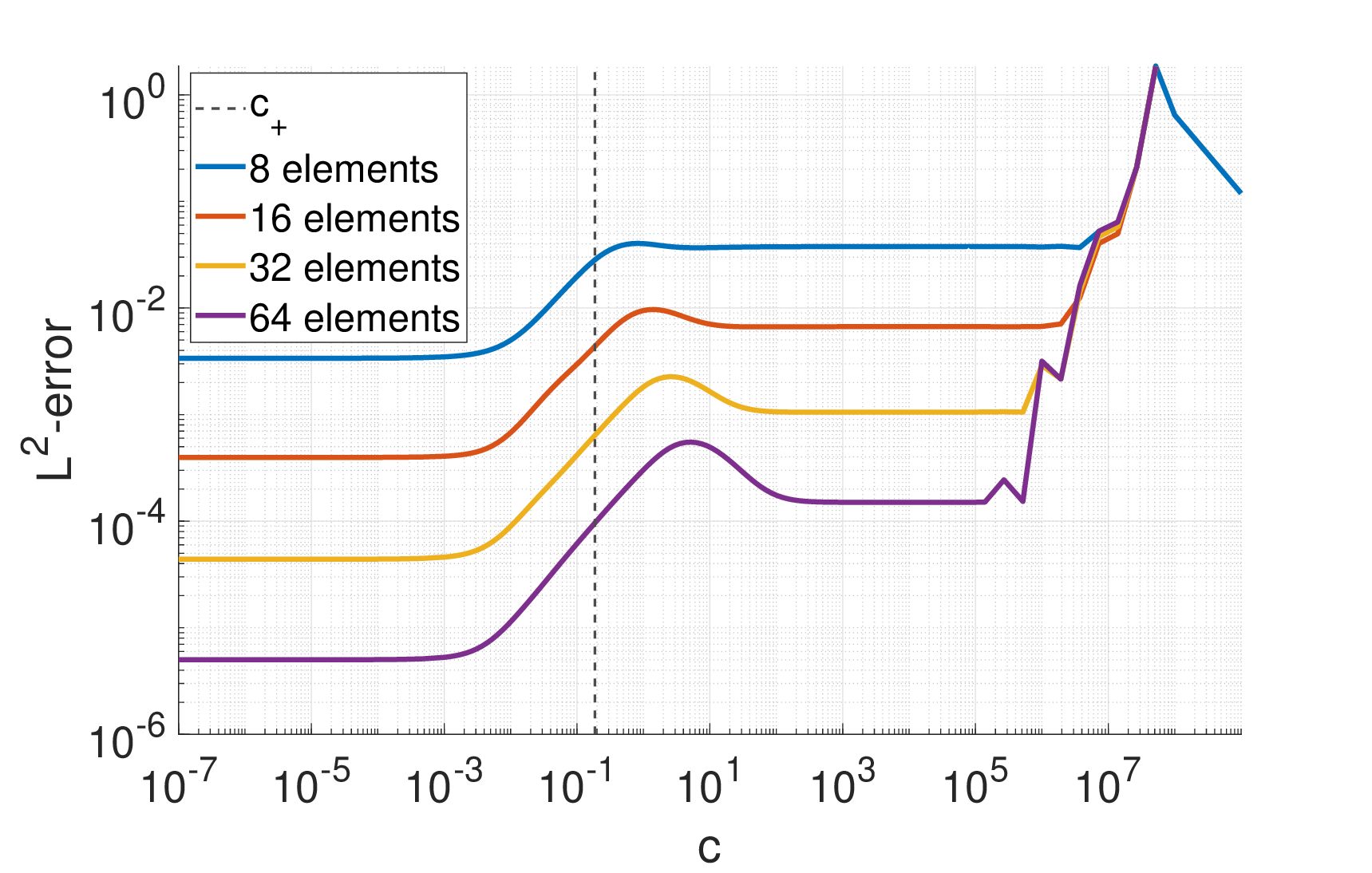}
        \captionsetup{skip=0pt}
        \caption{$k=2, N=8$ to $64$}
        \label{fig:L2_error_k2}
    \end{subfigure}
    \hfill
    \begin{subfigure}{0.5\linewidth}
    \centering
        \includegraphics[width=\linewidth,trim=13 0 60 0,clip]{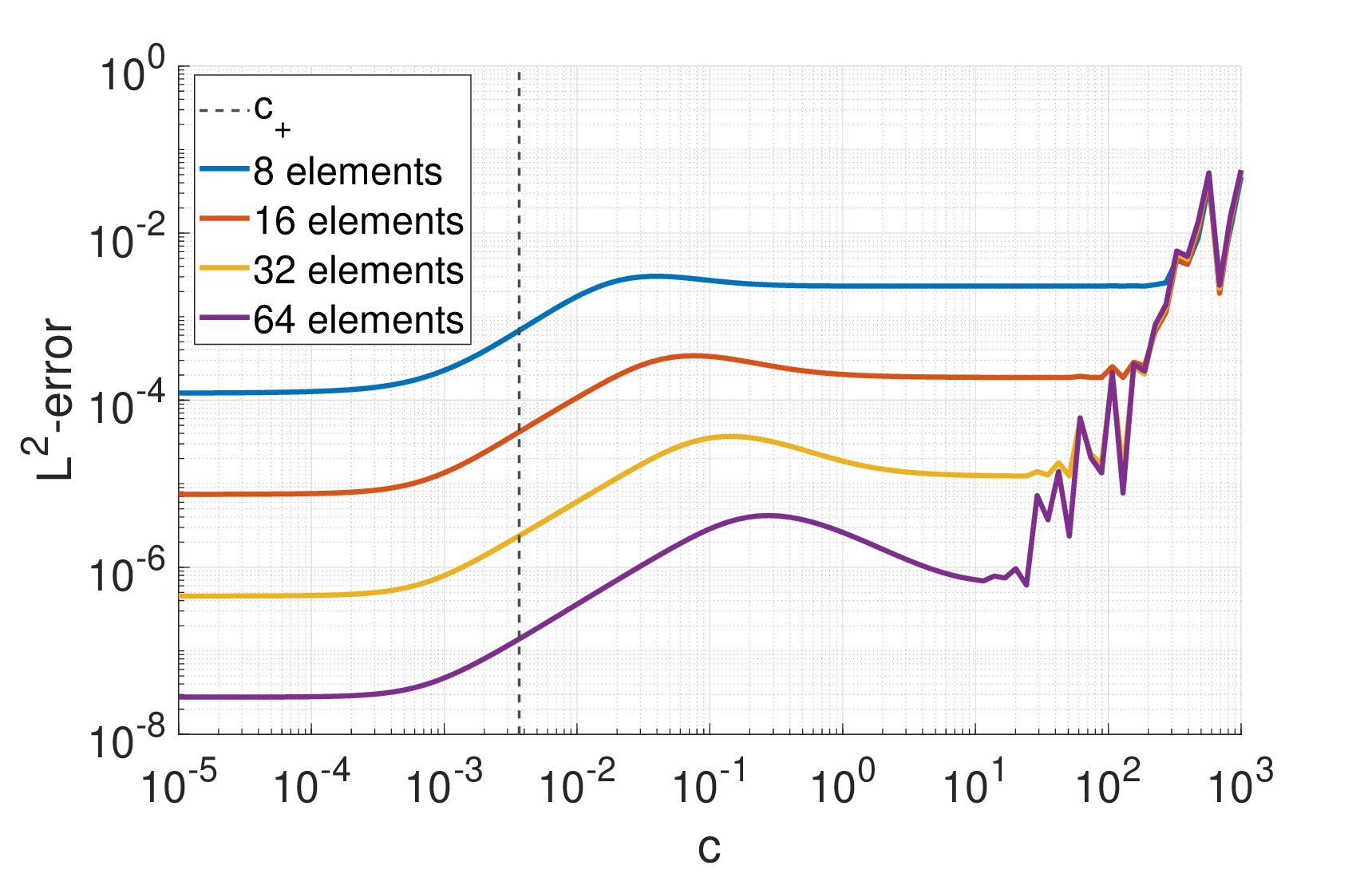}
        \captionsetup{skip=0pt}
        \caption{$k=3, N=8$ to $64$}
        \label{fig:L2_error_k3}
    \end{subfigure}
    \end{adjustbox}
    
    \vspace{-2mm}
    
    \begin{adjustbox}{width=\textwidth,center}
    \begin{subfigure}{0.5\linewidth}
    \centering
        \includegraphics[width=\linewidth,trim=13 0 57 0,clip]{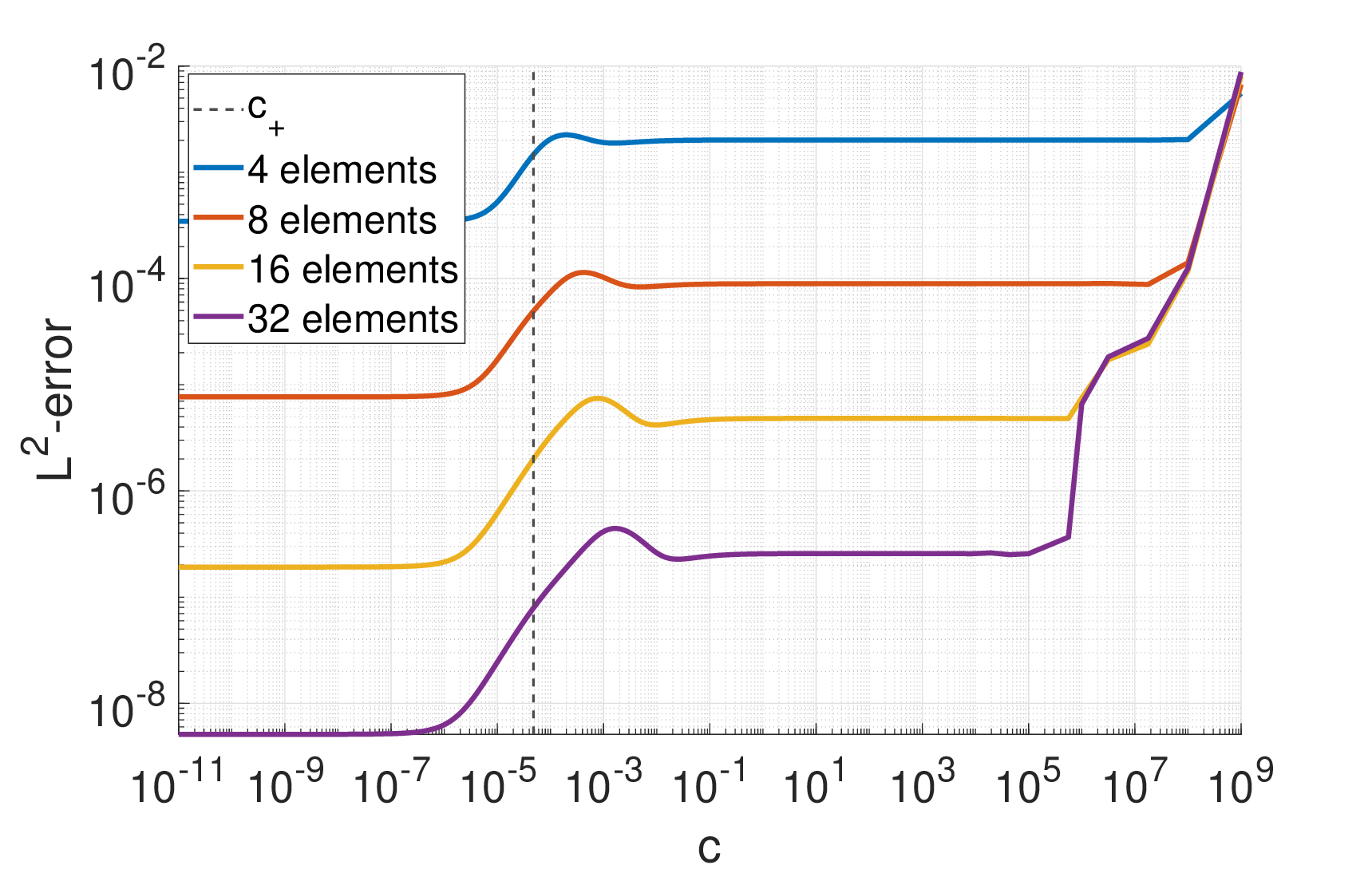}
        \captionsetup{skip=0pt}
        \caption{$k=4, N=4$ to $32$}
        \label{fig:L2_error_k4}
    \end{subfigure}
    \hfill
    \begin{subfigure}{0.5\linewidth}
    \centering
        \includegraphics[width=\linewidth,trim=7 0 70 0,clip]{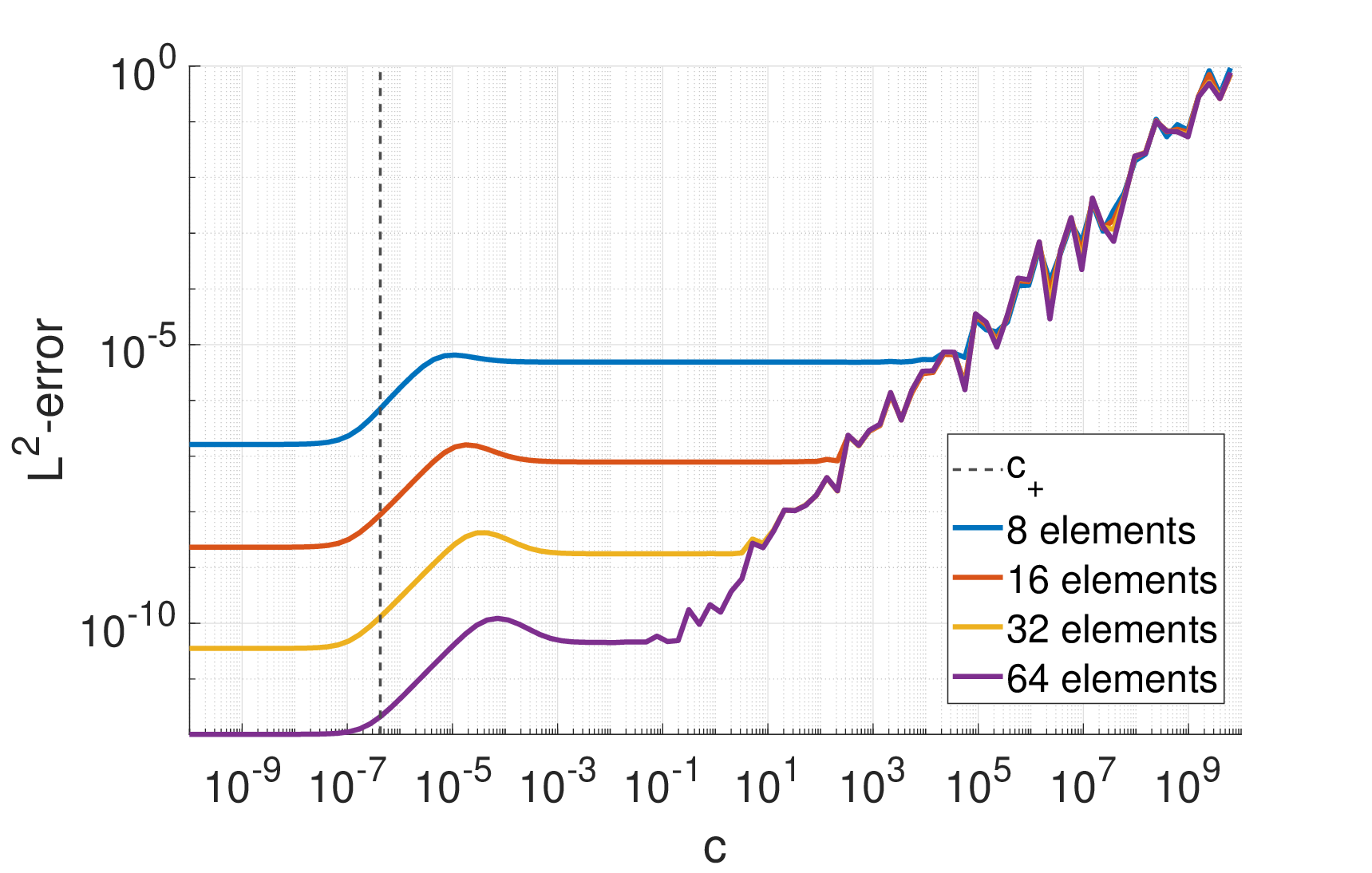}
        \captionsetup{skip=0pt}
        \caption{$k=5, N=8$ to $64$}
        \label{fig:L2_error_k5}
    \end{subfigure}
    \end{adjustbox}
    \captionsetup{skip=0pt}
    \caption{1D Burgers $\mathcal{L}^2$-error as a function of ESFR parameter $c$}
    \label{fig:L2_error_1D_Burgers}
\end{figure}

Figure~\ref{fig:L2_error_1D_Burgers} shows the $\mathcal{L}^2$-error as a function of the $c$ parameter for polynomial orders $k=2$ to 5. The behavior of the $\mathcal{L}^2$-error is similar to that observed for the linear advection (fig.\ref{fig:error_p3_N32}-\ref{fig:error_p3_N256}), since in both cases, the error is roughly constant in $c$ for both low and higher values of $c$ with an increase of degree of magnitude around $c_+$. However, contrary to the linear advection case, this jump is non-monotone for the Burgers' problem; where we observe a local maximum followed by a decrease before reaching a new plateau. At very high values of $c$, the error grows non-monotonically. As the number of degrees of freedom grows, through an increase in the number of elements, the sooner we observe the abrupt rise in the error. At a certain point, all errors take the same value regardless of the number of elements. This general behavior is common for all polynomial degrees, except for $k=3$ and 5, where the error starts to rise at lower values of $c$ compared to both $k=2$ and 4.

\subsubsection{Order of Convergence Results}

The order convergence is computed in the same way as for the linear advection case. Figure~\ref{fig:OOA_1D_Burgers} shows the order of convergence results for polynomial orders 2 to 5. At low values of $c$ the order of convergence is close to $k+1$ then there is a local minimum that reaches the $k$-th order before another plateau at about $k+1/2$. Contrary to the linear advection case, for high values of $c$, the order of convergence is reduced to zero. This break occurs around $c=10^1$ for odd polynomial degrees and $c=10^5$ for even.

\begin{figure}[htp!]
    \centering
    \begin{adjustbox}{width=\textwidth,center}
    \begin{subfigure}{0.5\linewidth}
    \centering
        \includegraphics[width=\linewidth,trim=30 0 65 0,clip]{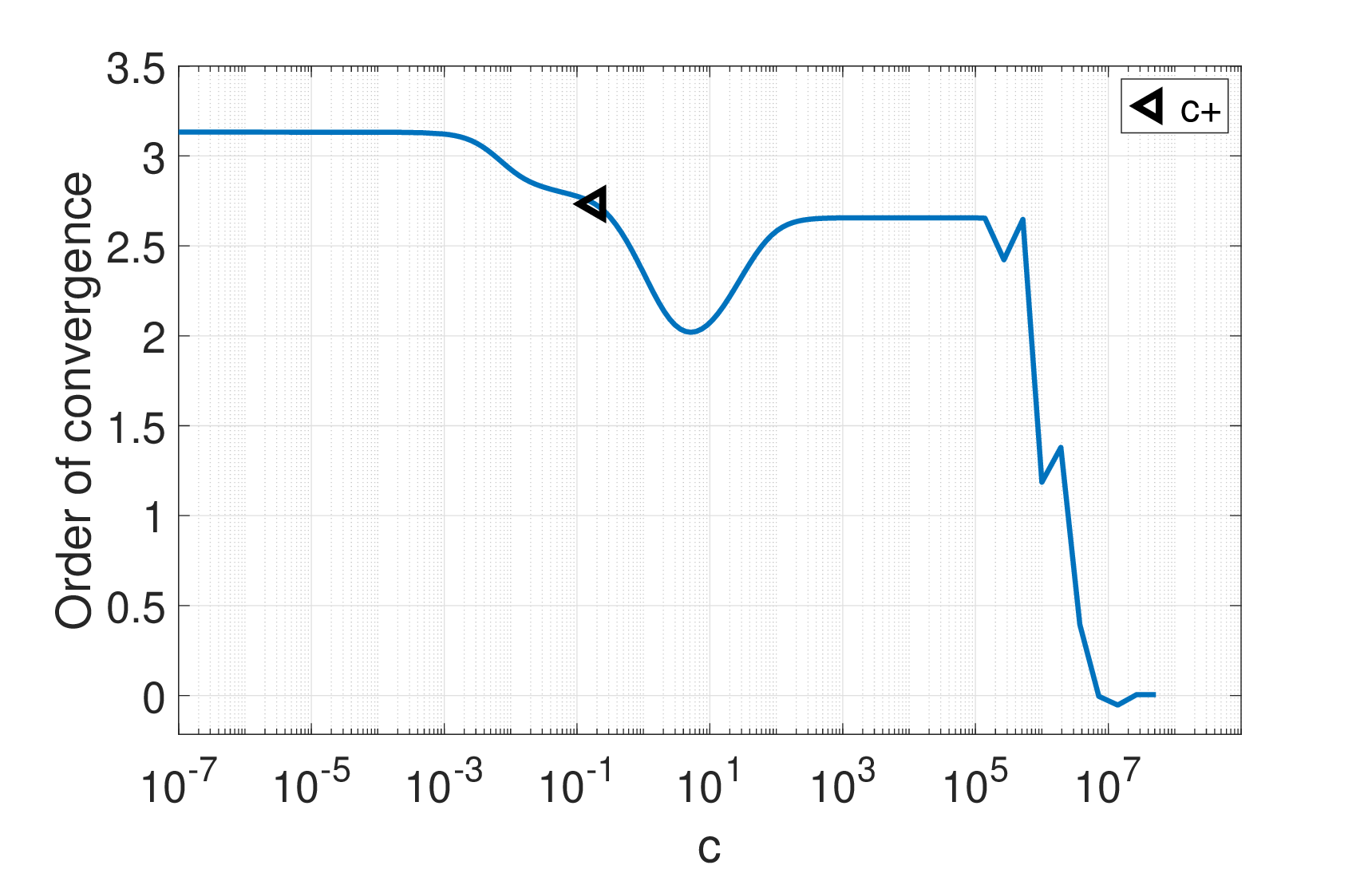}
        \captionsetup{skip=0pt}
        \caption{$k=2, N=8$ to $64$}
        \label{fig:OOAk2_burgers}
    \end{subfigure}
    \hfill
    \begin{subfigure}{0.5\linewidth}
    \centering
        \includegraphics[width=\linewidth,trim=30 0 58 0,clip]{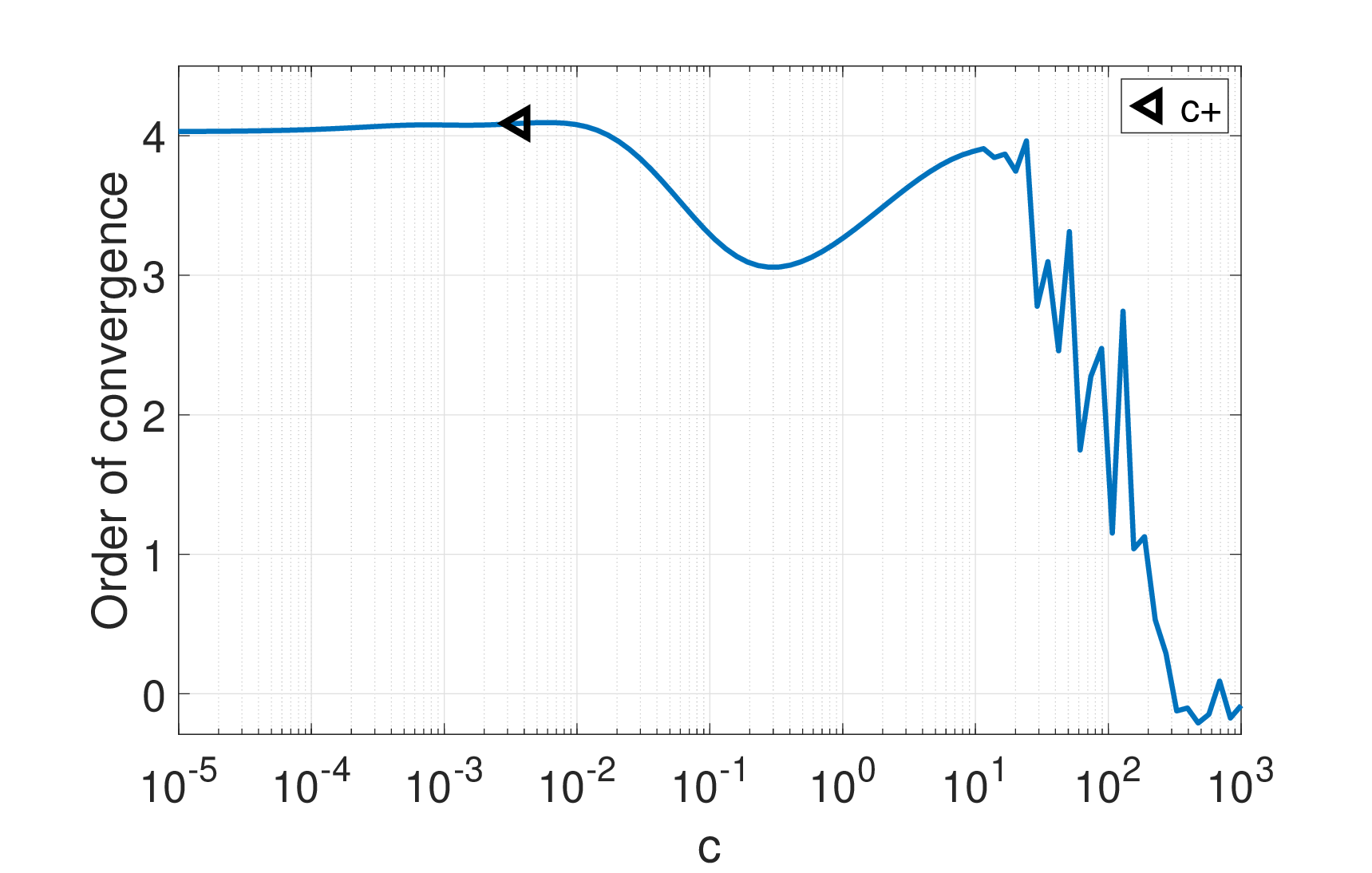}
        \captionsetup{skip=0pt}
        \caption{$k=3, N=8$ to $64$}
        \label{fig:OOAk3_burgers}
    \end{subfigure}
    \end{adjustbox}
    
    \vspace{-2mm}
    
    \begin{adjustbox}{width=\textwidth,center}
    \begin{subfigure}{0.5\linewidth}
    \centering
        \includegraphics[width=\linewidth,trim=30 0 58 0,clip]{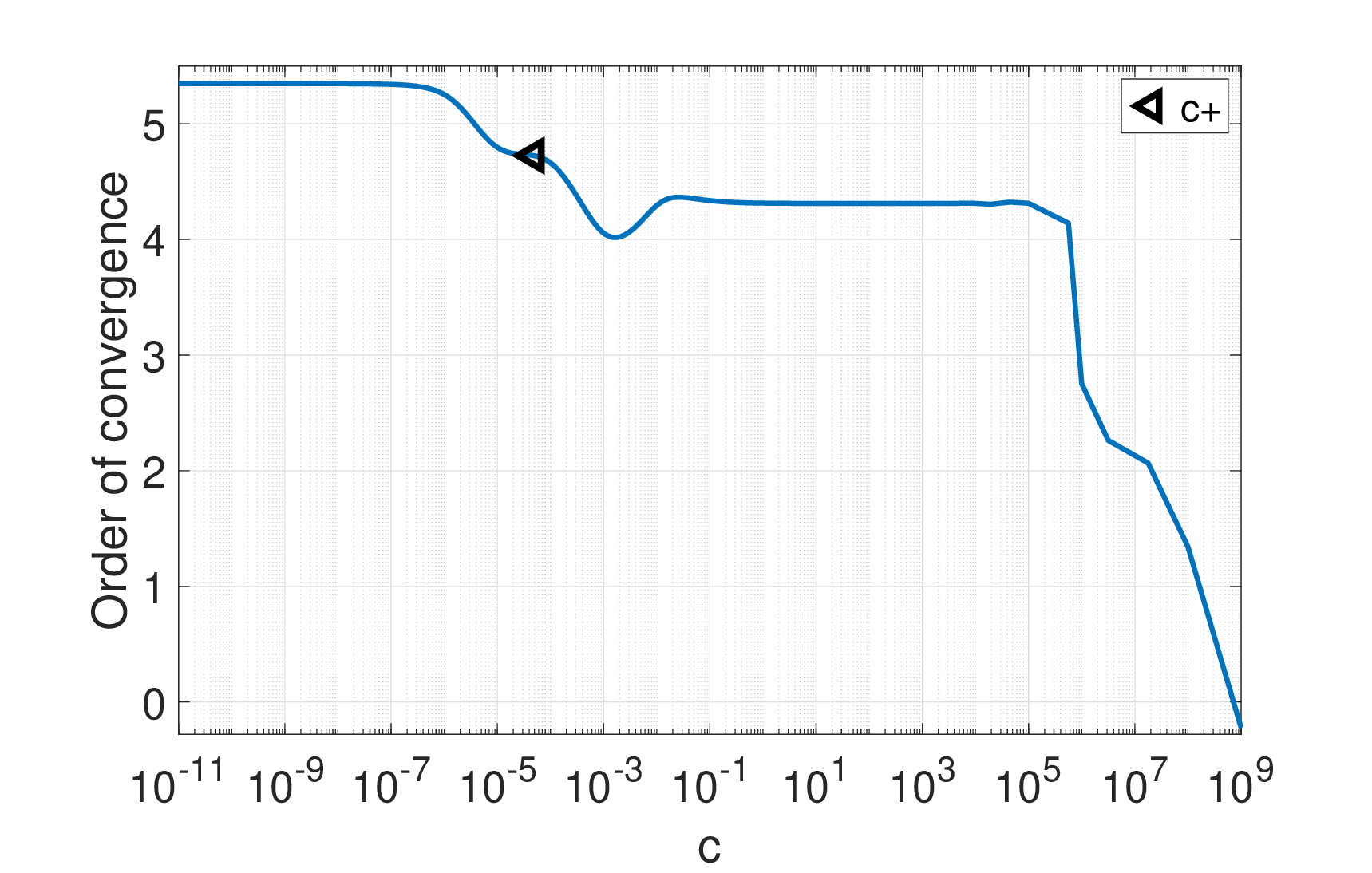}
        \captionsetup{skip=0pt}
        \caption{$k=4, N=4$ to $32$}
        \label{fig:OOAk4_burgers}
    \end{subfigure}
    \hfill
    \begin{subfigure}{0.5\linewidth}
    \centering
        \includegraphics[width=\linewidth,trim=30 0 60 0,clip]{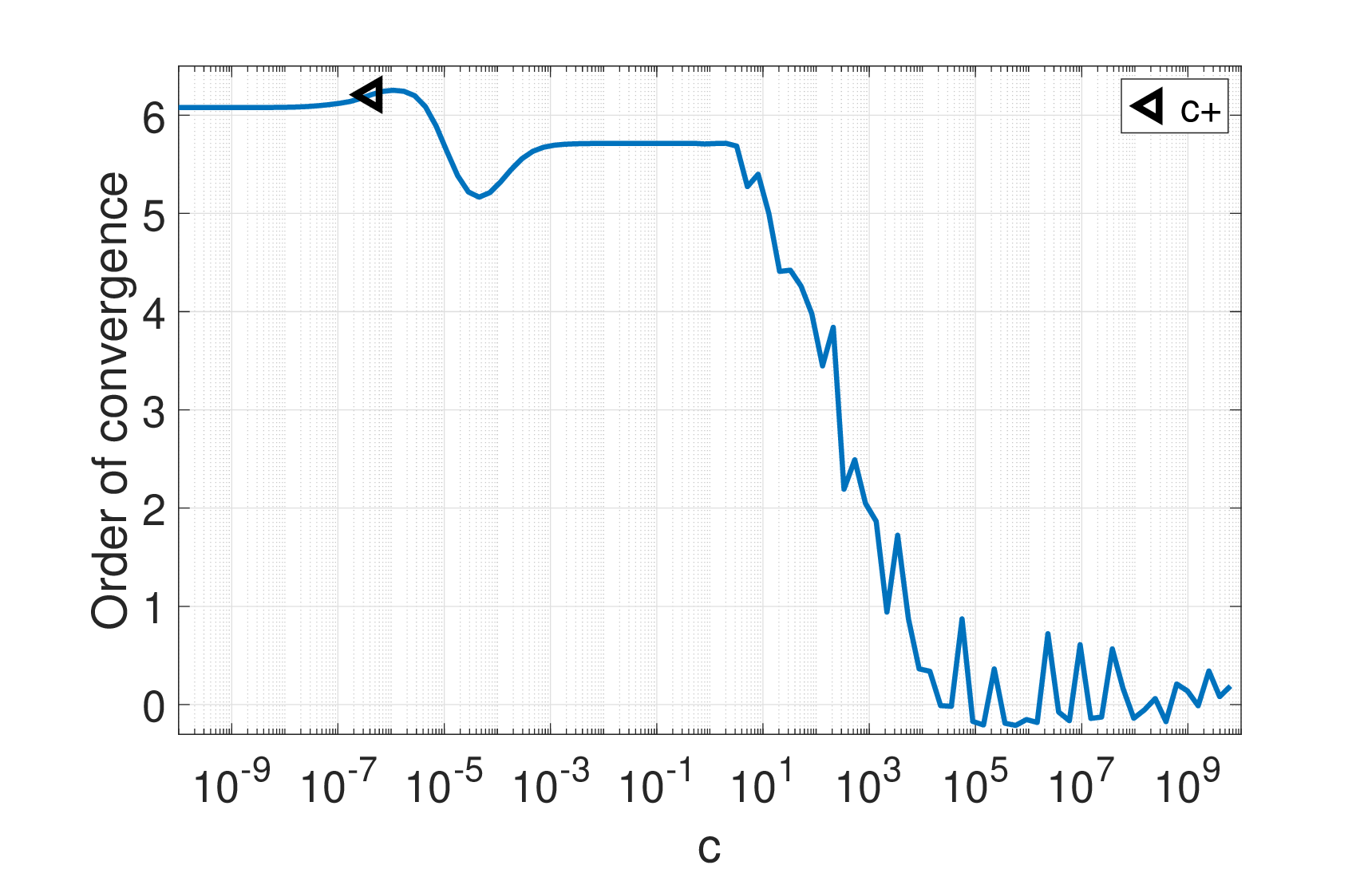}
        \captionsetup{skip=0pt}
        \caption{$k=5, N=8$ to $32$}
        \label{fig:OOAk5_burgers}
    \end{subfigure}
    \end{adjustbox}
    \captionsetup{skip=0pt}
    \caption{1D Burgers, order of convergence as a function of ESFR parameter $c$}
    \label{fig:OOA_1D_Burgers}
\end{figure}

\subsection{Two-dimensional Euler Equation}

While the one-dimensional cases provide valuable insights, extending the analysis to the two-dimensional Euler equation allows for a more comprehensive evaluation of the method's performance in realistic fluid dynamics simulations. By investigating the order of convergence in this context, we can gain a deeper understanding of the method's behavior and build confidence in its accuracy when applied to more practical engineering problems involving two-dimensional flows.

\subsubsection{Governing Equations}

This section describes a numerical experiment on a flow governed by the two-dimensional Euler equations. These equations can be written as a system of equations:

\begin{equation}
    \partial_tU+\partial_xF+\partial_yG=0
\end{equation}
with:
\begin{equation}
    U=\left(\begin{array}{c}
\rho \\
\rho u \\
\rho v \\
\rho \mathrm{e} 
\end{array}\right), \quad F=\left(\begin{array}{c}
\rho u \\
\rho u^2+\mathrm{p} \\
\rho u v \\
\rho u\left(\mathrm{e} +\frac{\mathrm{p}}{\rho}\right)
\end{array}\right), \quad G=\left(\begin{array}{c}
\rho v \\
\rho u v \\
\rho v^2+\mathrm{p} \\
\rho v\left(\mathrm{e} +\frac{\mathrm{p}}{\rho}\right)
\end{array}\right)
\end{equation}
where $u$ and $v$ are velocity components, $\rho$ is the density, $\mathrm{e}$ is the total energy per unit mass and the pressure $\mathrm{p}$ is computed with the equation of state:
\begin{equation}
    \mathrm{p}=(\gamma-1) \rho\left(\mathrm{e} -\frac{1}{2}\left(u^2+v^2\right)\right)
\end{equation}
where the ratio of specific heats $\gamma=1.4$ for air.

\subsubsection{Isentropic Vortex}
The isentropic Euler vortex problem serves as a benchmark to evaluate the order of accuracy of numerical methods due to the availability of exact solutions throughout the simulation. This particular test case holds great significance for high-order methods, as their theoretical accuracy should enable them to sustain the propagation of the vortex while minimizing the production of entropy over an extended duration. The ability to maintain a stable vortex structure without introducing unwanted numerical dissipation is particularly valuable in simulating turbulent flows~\cite{spiegel2015survey}.

The isentropic vortex test case has first been used by Shu~\cite{cockburn1998essentially} to compare the finite volume (FV) scheme with its high-order variation: weighted essentially non-oscillary (WENO). Later, this test was employed by Vincent~\cite{vincent2011insights} and Castonguay~\cite{Castonguay} in the ESFR context. There are several variations in the formulation of the isentropic vortex problem. We will use Spiegel's definition, which aims to unify most of them. He published an article to illustrate some limits of this numerical test and provides recommendations~\cite{spiegel2015survey}.

The initial conditions for the vortex problem are established by adding perturbations in velocity $(u,v)$ and temperature $\theta$ to a uniform mean flow. These perturbations are carefully designed to maintain a constant entropy, $S = p/\rho^\gamma$, across the entire computational domain. These perturbations are derived from the following Gaussian function:
\begin{equation}
    g(x,y)=\beta\exp\left(-\frac{1}{2 \sigma^2}\left[\left(\frac{x}{l}\right)^2+\left(\frac{y}{l}\right)^2\right]\right),
    \label{eq:Gaussian_perturbation}
\end{equation}
where $\beta$ is the maximum strength of the perturbation and the parameter $\sigma$ represents the standard deviation that determines the width of the vortex, while $l$ corresponds to a characteristic length scale.

Using the Gaussian function (\ref{eq:Gaussian_perturbation}), the perturbation in velocity and temperature can be expressed as follows:

\begin{minipage}{0.25\textwidth}
    \begin{equation}
        \Delta u =-\frac{y}{l} g
    \end{equation}
\end{minipage}
\hfill
\begin{minipage}{0.25\textwidth}
    \begin{equation}
        \Delta v =+\frac{x}{l} g
    \end{equation}
\end{minipage}
\hfill
\begin{minipage}{0.35\textwidth}
    \begin{equation}
        \Delta \theta =-\frac{(\gamma-1)}{2 \sigma^2} g^2
    \end{equation}
\end{minipage}

Finally, using these perturbation functions, the initial conditions in terms of primitive variables are given by:

\begin{minipage}{0.40\textwidth}
    \begin{align}
        \tilde{u}_{0} & =M_{\infty} \cos \alpha+\Delta u \label{eq:Euler_IC_u}\\
        \tilde{v}_{0} & =M_{\infty} \sin \alpha+\Delta v \label{eq:Euler_IC_v}
    \end{align}
\end{minipage}
\hfill
\begin{minipage}{0.40\textwidth}
    \begin{align}
        \tilde{\rho}_0 & =(1+\Delta \theta)^{\frac{1}{\gamma-1}} \label{eq:Euler_IC_rho}\\
        \tilde{p}_0 & =\frac{1}{\gamma}(1+\Delta \theta)^{\frac{\gamma}{\gamma-1}}\label{eq:Euler_IC_p}
    \end{align}
\end{minipage}


In (\ref{eq:Euler_IC_u}), (\ref{eq:Euler_IC_v}), the Mach number is denoted by $M$ with the subscript $\infty$ representing the mean flow quantity. The parameter $\alpha$, represents the angle of attack in the mean flow and dictates the direction of propagation of the vortex. The subscript $0$ signifies initial condition quantities. The tilde accents on the primitive variables indicate their nondimensional nature, where $\rho_{\infty}, a_{\infty}, \text { and } \theta_{\infty}$ are chosen as characteristic scales for density, velocity, and temperature, respectively. Table~\ref{tab:isentropic_vortex_parameter} summarizes all parameters used to set up the isentropic vortex simulation.

\begin{table}[t!]
    \centering
    \begin{tabular}{ccccccccccc}
        \toprule
        $\alpha$ & $\gamma$ & $M_{\infty}$ & $\rho_{\infty}$ & $\mathrm{p}_{\infty}$ & $\theta_{\infty}$ & $l$ & $\sigma$ & $\beta$ &$\Omega$ & $T$\\
        \midrule[0.2pt]
        $45^{\circ}$ & 1.4 & $\sqrt{\frac{2}{\gamma}}$ & 1 & 1 & 1 & 1 & 1 & $M_{\infty} \frac{5 \sqrt{2}}{4 \pi} \mathrm{e}^{\frac{1}{2}}$ & $[-10, 10]^2$ & 1.4790\\
        \toprule
    \end{tabular}
    \captionsetup{skip=0pt}
    \caption{Parameters defining the isentropic vortex problem}
    \label{tab:isentropic_vortex_parameter}
\end{table}

Periodic boundary conditions are applied to the domain boundaries to mimic an infinite domain. According to Spiegel~\cite{spiegel2015survey}, this statement is valid only for a sufficiently large computational domain and sufficiently small final time, otherwise periodic boundary conditions all around the domain leads to an infinite array of coupled vortex instead of a single vortex in an infinite space. In addition, the local errors spread over the domain as a wave, reflect against the periodic boundary and accumulate in the domain. To prevent the coupling effect, we use a computational domain $\Omega=[-10, 10]^2$, and we terminate the simulation before the wave errors reach the boundary. The final simulation time is 1.4790 unit time, which represents $1/16$ of a period which is the time for the vortex to advect around the entire domain and return to the origin.

The numerical simulations are advanced in time using a fourth-order explicit Runge-Kutta scheme RK4 with a sufficiently small time step to ensure the temporal discretization does not impact significantly the error. To assess the convergence of the numerical solution, a grid of size $2^n\times 2^n$ is employed, where $n$ varies between 2 and 6 and is adapted to the polynomial order. The Gauss-Lobatto (GL) quadrature points are used for the spatial discretization, and we use the Roe~\cite{roe1981approximate} flux as the interface numerical flux.

The exact solution to this problem is simply the initial vortex convected by the mean flow. The $\mathcal{L}^2$-error is computed using (\ref{eq:error_quadrature_rule}) and the OOA is measured by computing the $\mathcal{L}^2$-error for each grid size and calculating the average slope of the $\mathcal{L}^2$-error versus the width of elements on a log scale.

 \subsubsection{Order of Convergence Results}

 \begin{figure}[htp!]
    \centering
    \begin{subfigure}{0.495\linewidth}
    \centering
        \includegraphics[width=\linewidth,trim=30 0 58 0,clip]{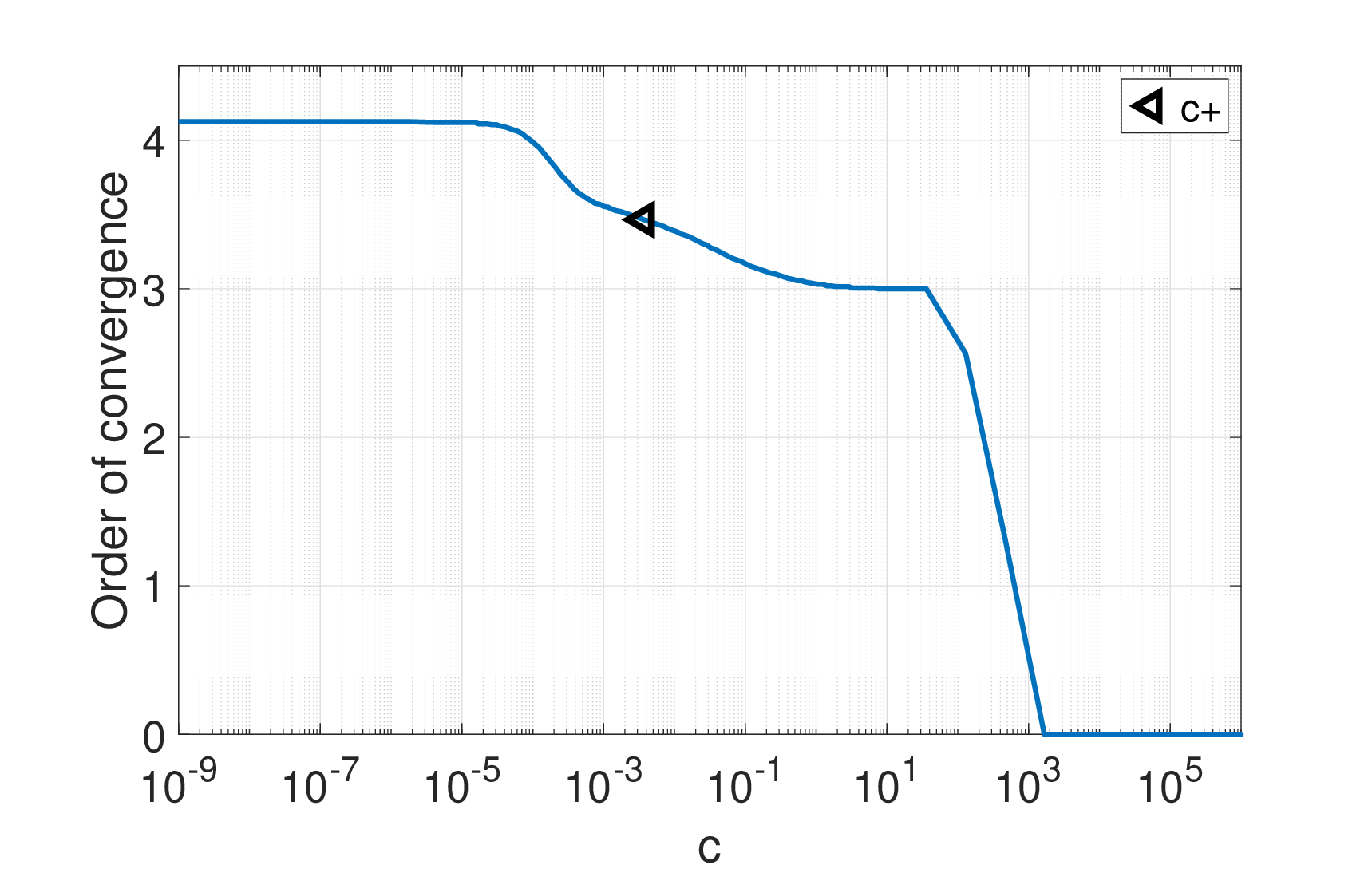}
        \captionsetup{skip=0pt}
        \caption{$k=3$, {\color{black} $N=32^2$, $64^2$ and $128^2$}}
        \label{fig:EulerOOAk3}
    \end{subfigure}
    \hfill
    \begin{subfigure}{0.495\linewidth}
    \centering
        \includegraphics[width=\linewidth,trim=30 0 58 0,clip]{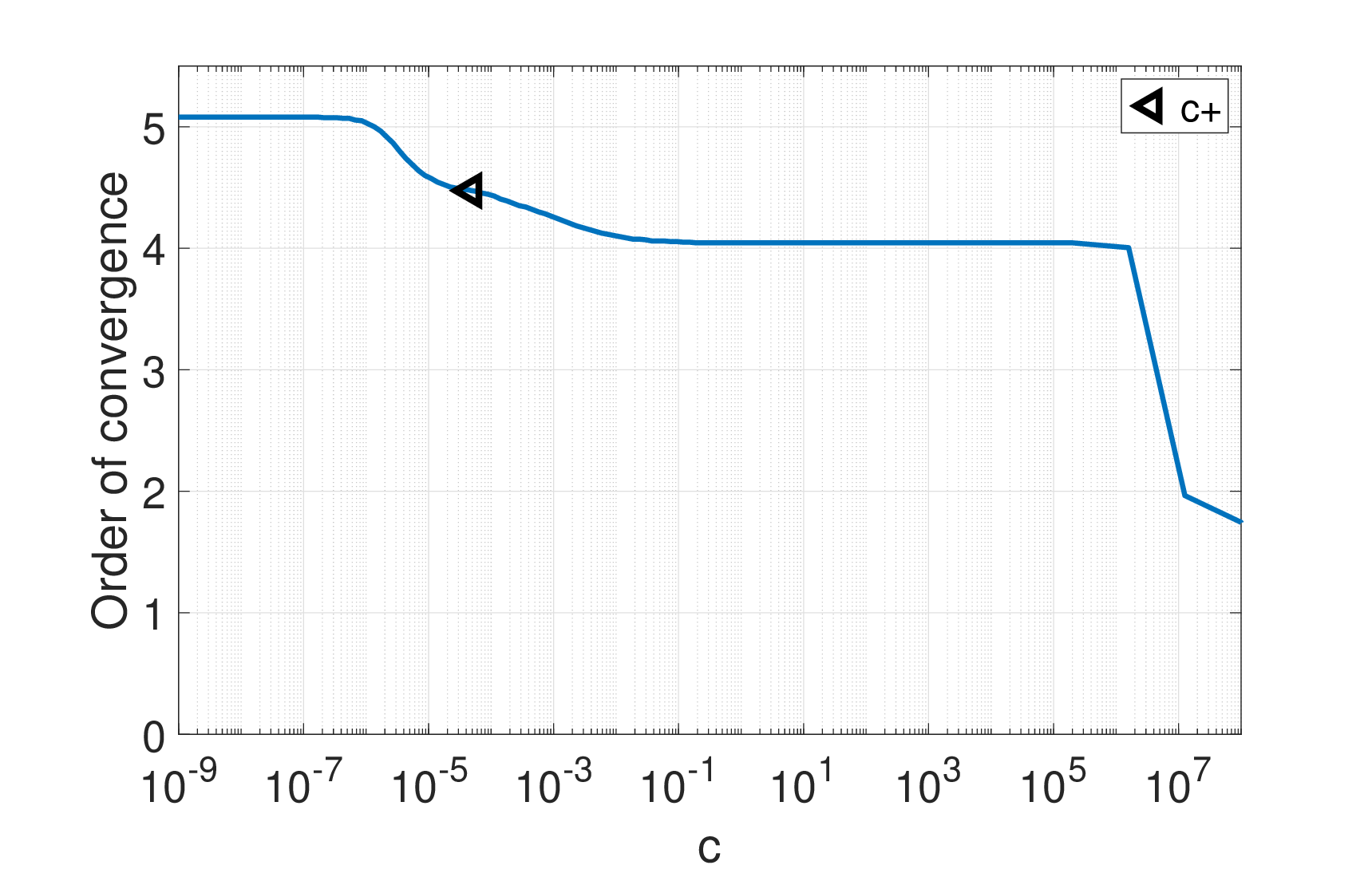}
        \captionsetup{skip=0pt}
        \caption{$k=4$, {\color{black} $N=32^2$, $64^2$ and $128^2$}}
        \label{fig:EulerOOAk4}
    \end{subfigure}
    \captionsetup{skip=0pt}
    \caption{2D Euler, isentropic vortex, order of convergence as a function of ESFR parameter $c$}
    \label{fig:OOA_2D_Euler}
\end{figure}

The order of convergence results for polynomial orders 3 and 4 are depicted in figure~\ref{fig:OOA_2D_Euler}. These results exhibit a similar trend to what we observed in the case of the one-dimensional linear advection. Specifically, we observe a convergence order of $k+1$ for low values of $c$ and this order decreases by one for higher values of $c$. However, there is a difference in that the loss of convergence order occurs earlier, before reaching the $c_+$ value. Additionally, at a certain point, the order of convergence collapses to zero, mirroring the behavior observed in the case of the one-dimensional Burgers' equation. Additionally, it is worth noting that, similar to the case of Burgers' equation, the order of convergence deteriorates to zero earlier for the third polynomial order compared to the fourth orders. More specifically, this occurs at approximately $c=10^1$ for the third order, whereas for the fourth-orders, it appears at approximately $10^5$. The $c_+$ value has been rigorously determined only for the linear advection, but since it is also used in practice for nonlinear problems, it is shown here.

\section{Conclusion}
\label{chap:conclusion}

In this work, an $h$-version of a priori error estimate for the ESFR scheme has been established. Our numerical experiments demonstrated that the error bound we derived matches the correct order of convergence for the linear advection. Moreover, the error estimate showed the role of the $c$ parameter on the order of convergence, but due to unknown constants, it is not possible to determine a value of the ESFR parameter $c$ where the order is lost.

Based on numerical experiments, it was observed that the $k+1$ order of convergence remains valid for certain non-linear problems in both one and two dimensions, particularly at low values of the $c$ parameter. However, as the parameter increases, the convergence order is limited to the order of approximation until it eventually diminishes to zero.

{\color{black}

Finally, this paper has focused on proving the error estimate for the ESFR scheme. However, it may be possible to extend this proof to the more general Sobolev flux reconstruction method (GSFR). This method, originally derived by Trojak~\cite{trojak2018generalised}, is very similar to the ESFR scheme except that it is proved to be stable with respect to the full Sobolev norm so that every derivative up to order $k$ are included.
It is a $k$ parameters scheme; allowing for greater flexibility to filter the discontinuous Galerkin mass matrix, however, it is not clear as of yet if this method allows for a further increase in the time step over and above the ESFR scheme. This extension would provide a more comprehensive understanding of the accuracy of the method and its potential for wider applications.}

\appendix
\renewcommand\thesection{\Alph{section}}
\section*{Appendices}
\section{Proof of lemma \ref{th:equivalent_norms}.}
\label{app:equivalent_morms}
The original proof of theorem \ref{th:equivalent_norms} can be found in \cite{bramble1970estimation}, it is restated here to ease the reading of the thesis. The proof uses two lemmas that are proved in \cite{morrey2009multiple}.

\begin{lemma} \cite[Lem. 1]{bramble1970estimation} or \cite[Th 3.6.10]{morrey2009multiple} For any $f\in \mathcal{H}^{k+1}(I_j)$ there is a unique polynomial p of degree less than or equal to $k$ such that $\int_{I_j} D^\alpha(f+p)dx=0$ for all $\alpha$ with $0\leq |\alpha| \leq k$.
\label{lem:BHlem1}
\end{lemma}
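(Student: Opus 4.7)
The plan is to reduce the lemma to a linear-algebra statement on the finite-dimensional space $\mathcal{P}^k(I_j)$ and to exploit triangularity in the monomial basis. In one dimension the multi-index $\alpha$ collapses to a single nonnegative integer and $D^\alpha = d^\alpha/dx^\alpha$, so the assertion is that for every $f \in \mathcal{H}^{k+1}(I_j)$ there exists a unique $p \in \mathcal{P}^k(I_j)$ satisfying $\int_{I_j} p^{(\alpha)} \, dx = -\int_{I_j} f^{(\alpha)} \, dx$ for $\alpha = 0, 1, \dots, k$.

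First I would introduce the linear map $T \colon \mathcal{P}^k(I_j) \to \mathbb{R}^{k+1}$ given by $T(p) = \bigl(\int_{I_j} p^{(\alpha)} \, dx\bigr)_{\alpha=0}^{k}$. Because $\dim \mathcal{P}^k(I_j) = k+1 = \dim \mathbb{R}^{k+1}$, the existence and uniqueness of $p$ reduce to the injectivity of $T$, and the entire lemma amounts to showing $\ker T = \{0\}$.

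To verify injectivity I would expand $p$ in the monomial basis, $p(x) = \sum_{\beta=0}^{k} c_\beta x^\beta$, and use that $(x^\beta)^{(\alpha)}$ vanishes identically whenever $\alpha > \beta$ while $(x^\alpha)^{(\alpha)} = \alpha!$. Hence the matrix of $T$ in this basis is triangular with nonzero diagonal entries $\alpha!\,|I_j|$. Reading $T(p) = 0$ from $\alpha = k$ downwards then forces $c_k = 0$, $c_{k-1} = 0$, $\dots$, $c_0 = 0$ by descending induction, so $p \equiv 0$ and injectivity follows.

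There is essentially no obstacle in one dimension; the only point to record carefully is the triangularity claim, which follows from the elementary identity $\int_{I_j} (x^\beta)^{(\alpha)} \, dx = 0$ for $\alpha > \beta$. In higher dimensions (the setting of Morrey's original proof) one would instead obtain $p$ as the minimizer of $\|f+p\|_{k+1,2,I_j}$ over $\mathcal{P}^k(I_j)$ and read off the integral conditions from the Euler--Lagrange equations, but this abstract machinery is unnecessary for the interval $I_j$ used throughout the present analysis.
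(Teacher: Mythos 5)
Your argument is correct. The paper itself does not prove this lemma: it is imported verbatim from Bramble--Hilbert, who in turn defer to Morrey's Theorem 3.6.10, proved there for a general $n$-dimensional domain satisfying a cone condition. Your reduction to the invertibility of the map $T\colon \mathcal{P}^k(I_j)\to\mathbb{R}^{k+1}$, $T(p)=\bigl(\int_{I_j}p^{(\alpha)}\,dx\bigr)_{\alpha=0}^{k}$, together with the triangularity of its matrix in the monomial basis (using $\int_{I_j}(x^\beta)^{(\alpha)}\,dx=0$ for $\alpha>\beta$ and the nonzero diagonal entries $\alpha!\,|I_j|$), is a complete and genuinely more elementary proof in the one-dimensional setting that the paper actually uses; the integrals $\int_{I_j}f^{(\alpha)}\,dx$ are well defined since $f^{(\alpha)}\in L^1(I_j)$ for $\alpha\le k$ when $f\in\mathcal{H}^{k+1}(I_j)$. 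What the citation to Morrey buys is generality (arbitrary dimension, where the equal-dimension count $\dim\mathcal{P}^k=\#\{\alpha:|\alpha|\le k\}$ and a triangularity argument with respect to the partial order on multi-indices still go through, but require more care); what your argument buys is self-containedness and transparency for the interval case. Your closing remark about Euler--Lagrange minimization is a harmless aside and not needed for the claim; in fact the same finite-dimensional linear-algebra reduction works in higher dimensions as well, so even there the variational machinery is optional.
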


\begin{lemma} \cite[Lem. 2]{bramble1970estimation} or \cite[Th 3.6.11]{morrey2009multiple} Let $I_j$ satisfy a strong cone condition. Then (since $I_j$ is contained in a sphere of radius $\Delta_j$) $|f|_{p,i, I_j} \leq \zeta\Delta_j^{k+1-i}|f|_{p, k+1, I_j}$  for $0 \leq i \leq k$  for all $f \in H^{k+1}(I_j)$ such that the average over $I_j$ of $D^\alpha f$ is 0 for $0\leq|\alpha|\leq k$, where $\zeta$ is a constant independent of $\Delta_j$ and $f$.
\label{lem:BHlem2}
\end{lemma}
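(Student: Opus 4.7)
The plan is to prove this as an iterated Poincaré--Wirtinger inequality on a fixed reference domain, combined with an elementary scaling argument to expose the $\Delta_j^{k+1-i}$ factor. The strong cone condition is used exactly once: to justify the Poincaré--Wirtinger inequality on the reference set.

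First I would rescale to a $\Delta_j$-independent reference domain. Let $\hat I$ be the unit interval and let $\Phi : \hat I \to I_j$ be the affine map with $|\Phi'| = \Delta_j$. Setting $\hat f = f \circ \Phi$, the chain rule gives $D^\alpha \hat f = \Delta_j^{|\alpha|}(D^\alpha f)\circ\Phi$, and a change of variables yields $\|D^\alpha \hat f\|_{L^p(\hat I)} = \Delta_j^{|\alpha|-n/p}\|D^\alpha f\|_{L^p(I_j)}$. Because the modified norm of Definition~\ref{def:Modified_Sobolev_norms} carries a compensating factor $\Delta_j^{-n/p}$, these scalings combine cleanly into
\begin{equation*}
|\hat f|_{p,\ell,\hat I} \;=\; \Delta_j^{\ell}\,|f|_{p,\ell,I_j}, \qquad 0\le\ell\le k+1.
\end{equation*}
The same change of variables also preserves the zero-mean hypothesis, i.e.\ $\int_{\hat I} D^\alpha \hat f\,d\hat x = \Delta_j^{|\alpha|-n}\int_{I_j} D^\alpha f\,dx = 0$ for every $|\alpha|\le k$. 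Hence the target inequality on $I_j$ is equivalent to the purely $\Delta_j$-free statement $|\hat f|_{p,i,\hat I} \le \zeta\,|\hat f|_{p,k+1,\hat I}$ for $0\le i\le k$.

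Next I would establish this reference inequality by a finite \emph{descending} induction on $i$ from $k$ down to $0$, with the one-step engine being the classical Poincaré--Wirtinger inequality on $\hat I$: for any $g\in W^{1,p}(\hat I)$ with $\int_{\hat I} g = 0$, one has $\|g\|_{L^p(\hat I)} \le C_{\mathrm{PW}}\sum_{|\beta|=1}\|D^\beta g\|_{L^p(\hat I)}$, with $C_{\mathrm{PW}}$ depending only on $\hat I$. For each multi-index $\alpha$ with $|\alpha|=i\le k$, the function $g=D^\alpha \hat f$ has zero mean by the inherited hypothesis, so applying Poincaré--Wirtinger to it and summing over $|\alpha|=i$ yields $|\hat f|_{p,i,\hat I} \le C'_{\mathrm{PW}}\,|\hat f|_{p,i+1,\hat I}$. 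Iterating $k+1-i$ times gives $\zeta = (C'_{\mathrm{PW}})^{k+1-i}$, which is independent of both $\Delta_j$ and $f$. Undoing the scaling via the identity $|\hat f|_{p,\ell,\hat I}=\Delta_j^\ell|f|_{p,\ell,I_j}$ then converts this bound into the claimed estimate $|f|_{p,i,I_j}\le \zeta\,\Delta_j^{k+1-i}|f|_{p,k+1,I_j}$.

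The only nontrivial obstacle is the Poincaré--Wirtinger inequality on $\hat I$ itself. Its standard proof is a contradiction argument relying on the compact embedding $W^{1,p}(\hat I)\hookrightarrow L^p(\hat I)$ (Rellich--Kondrachov), which in turn requires precisely the strong cone condition invoked in the hypothesis. I would cite this consequence rather than reprove it, since all the novelty of the lemma lies in the iteration and the bookkeeping of powers of $\Delta_j$; the constant $\zeta$ it produces depends only on $k$, $p$, $n$ and the shape of $\hat I$, and not on $\Delta_j$, as required. A minor sanity check is the boundary case $i=k$: the hypothesis $\int_{I_j} D^\alpha f\,dx=0$ for $|\alpha|=k$ is exactly what authorizes the single application of Poincaré--Wirtinger to $D^\alpha \hat f$, while no hypothesis on derivatives of order $k+1$ is required, consistent with $|f|_{p,k+1,I_j}$ appearing unmodified on the right-hand side.
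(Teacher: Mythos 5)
Your argument is correct, but it is genuinely different from what the paper does: the paper offers no proof of Lemma~\ref{lem:BHlem2} at all, deferring entirely to Morrey's Theorem~3.6.11 (whose argument runs through Sobolev representation formulas over cones and is stated for strongly Lipschitz domains, with the remark that a strong cone condition suffices). You instead give a self-contained derivation: an affine rescaling to a fixed reference interval, which you correctly show turns the modified seminorms of Definition~\ref{def:Modified_Sobolev_norms} into $|\hat f|_{p,\ell,\hat I}=\Delta_j^{\ell}|f|_{p,\ell,I_j}$ thanks to the compensating $\Delta_j^{-\mathrm{n}/p}$ factor, followed by a descending iteration of the Poincar\'e--Wirtinger inequality applied to each $D^{\alpha}\hat f$, each application being licensed by exactly one of the zero-average hypotheses. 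The bookkeeping of powers of $\Delta_j$ and the preservation of the zero-mean condition under the change of variables are both handled correctly, and in one dimension the Poincar\'e--Wirtinger step does not even need Rellich compactness (the fundamental theorem of calculus plus H\"older gives it with an explicit constant), so your route yields a fully explicit $\zeta=(C'_{\mathrm{PW}})^{k+1-i}$ where the paper's citation yields none. The one caveat worth recording is that your reduction to a single fixed reference domain is specific to the situation where all the $I_j$ are affine images of one another; for a general family of $n$-dimensional domains satisfying only a uniform cone condition (the setting of Bramble--Hilbert and Morrey) the shapes vary and the constant in the reference inequality would have to be shown uniform over the family, which is precisely what the cited theorems accomplish. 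Since the paper only ever applies the lemma to one-dimensional intervals, this restriction costs nothing here, and your proof is arguably preferable in context because it exposes where the constant comes from.
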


\begin{remark}
Morrey supposes a strongly Lipschitz domain, although the proof remains identical if the domain fulfills a strong cone condition.
\end{remark}

We are now in a position to demonstrate Theorem \ref{th:equivalent_norms}, by lemma \ref{lem:BHlem1} we chose $\Bar{p}\in P^k$ such that
\begin{equation*}
    \int_{I_j} D^\gamma(f+\Bar{p})dx=0 \text{ for } |\gamma|\leq k.
\end{equation*}
Therefore, employing lemma \ref{lem:BHlem2}, it can be deduced that
\begin{align*}
    \|f+\bar{p}\|_{p, k+1, I_j} &\leq \zeta\Delta_j^{k+1}|f+\bar{p}|_{p, k+1, I_j}\\
    &=\zeta\Delta_j^{k+1}|f|_{p, k+1, I_j}.
\end{align*}
However, given that $\bar{p}\in P^{k}$, it follows that $\|[f]\|_Q \leq\|f+\bar{p}\|_{p, k+1, I_j}$ by definition~\ref{def:normQ} of the norm on quotient space. Hence 
\begin{equation*}
    \|[f]\|_Q \leq \zeta\Delta_j^{k+1}|f|_{p, k+1, I_j} \text{ for } f\in\mathcal{H}^{k+1}(I_j).
\end{equation*}

The other inequality becomes apparent by noticing
\begin{equation*}
    \Delta_j^{k+1}|f+p|_{p, k+1, I_j}=\Delta_j^{k+1}|f|_{p, k+1, I_j},
\end{equation*}
for any $p\in P^k$ from which we obtain :
\begin{equation*}
    \Delta_j^{k+1}|f|_{p, k+1, I_j}\leq\inf_{p\in P^k} \|f+p\|_{p, k+1, I_j}=\|[f]\|_Q.
\end{equation*}
That completes the proof of theorem \ref{th:equivalent_norms}.

\section{Proof of theorem \ref{th:BH_linear_functional}.}
\label{app:BH_linear_functional}
The proof of theorem \ref{th:BH_linear_functional} can be found in \cite{bramble1970estimation}, it is restated here for completeness.

Because $F$ is linear and satisfies the second condition for the theorem (\ref{eq:Bh_second_condition}),
\begin{equation}
    |F(f)|=|F(f+p)| \text{ for all }p\in P^k.
    \label{eq:pf_bh_1}
\end{equation}
With the first condition  (\ref{eq:Bh_first_condition}) and (\ref{eq:pf_bh_1}) we have
\begin{equation}
    |F(f)|\leq C'\|f+p\|_{p, k+1, I_j}.
    \label{eq:pf_bh_2}
\end{equation}
Taking the infinimum over $P^k$ in (\ref{eq:pf_bh_2}) we have
\begin{equation}
    |F(f)|\leq C'\|[f]\|_Q.
\end{equation}
The result is a direct consequence of theorem \ref{th:equivalent_norms},
\begin{equation}
    |F(f)|\leq C'\zeta \Delta_j^{k+1}|f|_{p, k+1, i_j}.
\end{equation}
We recast the constants into $C_1=C'\zeta$.

\section{Proof of theorem \ref{th:markov}.}
\label{app:markov}
The proof of the second Markov type inequality (\ref{eq:Markov2}) is based on Rivière~\cite{ozisik2010constants}, and it is restated here for completeness.
Consider the reference interval $\Omega_s=[-1;1]$ and associate $\mathcal{L}^2$-orthonormal polynomials, namely the normalized Legendre polynomials $\{\hat{\bar{L}}_i\}_{i=0}^{k}$. The reference interval is mapped to the interval $I_j=[x_{j-1/2};x_{j+1/2}]$, by inverse mapping $\Gamma_j^{-1}$ (\ref{eq:inverse_mapping}).

By chain rule and using the scaling argument, we obtain
\begin{align*}
    \|d_xp\|_{L^2(I_j)}&=\left\|d_rpd_xr\right\|_{L^2(I_j)}\\
    &=\left|d_xr\right|\left\|d_rp\right\|_{L^2(I_j)}\\
    &=\frac{2}{|x_{j+1/2}-x_{j-1/2}|}\left|\frac{x_{j+1/2}-x_{j-1/2}}{2}\right|^{1/2}\left\|d_rp\right\|_{L^2(\Omega_s)}\\
    &\leq \left|\frac{2}{x_{j+1/2}-x_{j-1/2}}\right|^{1/2}\sqrt{C_k}\left\|p\right\|_{L^2(\Omega_s)}\\
    &= \frac{2}{|x_{j+1/2}-x_{j-1/2}|}\sqrt{C_k}\left\|p\right\|_{L^2(I_j)},
\end{align*}
where $C_k$ can be determined for a given polynomial order $k$ by solving the following eigenvalue problem for the maximum eigenvalue, 
\begin{equation*}
    \left(d_r\hat{\bar{L}}_n,d_r\hat{\bar{L}}_m\right)_{\Omega_s}p_m=\lambda\left(\hat{\bar{L}}_n,\hat{\bar{L}}_m\right)_{\Omega_s}p_m.
\end{equation*}
Einstein's summation convention applies to recurring indices. The $\mathcal{L}^2$ inner product on $\Omega_s$ is denoted by $\left(\cdot,\cdot\right)_{\Omega_s}$. Defining $\boldsymbol{S}_{nm}=\left(d_r\hat{\bar{L}}_n,d_r\hat{\bar{L}}_m\right)_{\Omega_s}$, $\boldsymbol{M}_{nm}=\left(\hat{\bar{L}}_n,\hat{\bar{L}}_m\right)_{\Omega_s}$ and using orthonormality of the basis gives us $\boldsymbol{M}=\boldsymbol{I}$ where $\boldsymbol{I}$ is the identity matrix. Then, the above problem reduces to a classical eigenvalue problem
\begin{equation*}
    \boldsymbol{S}_{nm}p_m=\lambda p_m.
\end{equation*}
Let $C_k$ be the maximum eigenvalue $\lambda$, then we can write:
\begin{equation*}
    \left\|d_rp\right\|^2_{L^2(\Omega_s)}\leq C_k \|p\|^2_{L^2(\Omega_s)}
\end{equation*}

For $k=1$, the orthonormal basis is,
\begin{align*}
    &\hat{\bar{L}}_0=\frac{\sqrt{2}}{2},\\
    &\hat{\bar{L}}_1=\frac{\sqrt{6}}{2}r,
\end{align*}
and the matrix $\boldsymbol{S}$ is
\begin{equation*}
    \boldsymbol{S}=\begin{bmatrix}
        0 & 0 \\
        0 & 3 
    \end{bmatrix},
\end{equation*}
then $C_1=3$.

For $k=2$, the basis is
\begin{align*}
     &\hat{\bar{L}}_0, \quad \hat{\bar{L}}_1,\\
    &\hat{\bar{L}}_2=\frac{\sqrt{10}}{4}(3r^2-1),
\end{align*}
and the matrix $\boldsymbol{S}$ is
\begin{equation*}
    \mathbf{S}=\left[\begin{array}{ll|l}
0 & 0 & 0 \\
0 & 3 & 0 \\
\hline 0 & 0 & 15
\end{array}\right]
\end{equation*},
then $C_2=15$.

\section{Proof of theorem \ref{th:Gronwall_lemma}.}
\label{app:Gonwall_lemma}
The theorem \ref{th:Gronwall_lemma} has been proved by Barbu ~\cite[Proposition 1.2]{barbu2016differential}, and it is restated here to ease the read of the thesis.

Let $g_\epsilon$ be the function given by
\begin{equation}
    g_\epsilon:=\frac{1}{2}\left(f_0^2+\epsilon^2\right)+\int_a^t\Psi(s)f(s)ds,\quad t\in[0,T],
\end{equation}
where $\epsilon>0$.
By the (\ref{eq:gronwall_lemma_condition}), we have
\begin{equation}
    f^2(t)\leq 2g_\epsilon(t),\quad t\in[0,T].
    \label{eq:pf_gronwall_1}
\end{equation}
Since $d_tg_\epsilon(t)=\Psi(t)|f(t)|, t\in[0,T],$ we obtain together with (\ref{eq:pf_gronwall_1}):
\begin{align*}
    d_tg_\epsilon(t)&\leq\sqrt{2g_\epsilon(t)}\Psi(t),\quad t\in[0,T],\\
    \frac{d_tg_\epsilon(t)}{\sqrt{2g_\epsilon(t)}}&\leq \Psi(t).
\end{align*}
By integration on the interval $[a,t]$ and renaming the dummy variable $s$, we can deduce that
\begin{align*}
    &\int_a^t\frac{d_tg_\epsilon(s)}{\sqrt{2g_\epsilon(s)}}ds=\sqrt{2g_\epsilon(t)}-\sqrt{2g_\epsilon(a)},\\
    &\sqrt{2g_\epsilon(t)}\leq\sqrt{2g_\epsilon(a)}+\int_a^t \Psi(s)ds,\quad t\in[0,T],
\end{align*}
with (\ref{eq:pf_gronwall_1}), we obtain
\begin{equation*}
    |f(t)| \leq\left|f_0\right|+\epsilon+\int_a^t \Psi(s) d s, \quad t \in[0, T]
\end{equation*}
for every $\epsilon>0$, which implies (\ref{eq:gronwall_lemma_result}) and the theorem \ref{th:Gronwall_lemma} is proved.

\bibliographystyle{siamplain}
\bibliography{references}

\begin{thebibliography}{10}

\bibitem{allaneau2011connections}
{\sc Y.~Allaneau and A.~Jameson}, {\em Connections between the filtered discontinuous galerkin method and the flux reconstruction approach to high order discretizations}, Computer Methods in Applied Mechanics and Engineering, 200 (2011), pp.~3628--3636.

\bibitem{arcangeli1976evaluation}
{\sc R.~Arcangeli and J.~Gout}, {\em Sur l’{\'e}valuation de l’erreur d’interpolation de lagrange dans un ouvert de $\mathbb{R}^\mathrm{n}$}, Revue fran{\c{c}}aise d'automatique, informatique, recherche op{\'e}rationnelle. Analyse num{\'e}rique, 10 (1976), pp.~5--27.

\bibitem{barbu2016differential}
{\sc V.~Barbu}, {\em Differential equations}, Springer, 2016.

\bibitem{bramble1970estimation}
{\sc J.~Bramble and S.~Hilbert}, {\em Estimation of linear functionals on sobolev spaces with application to fourier transforms and spline interpolation}, SIAM Journal on Numerical Analysis, 7 (1970), pp.~112--124.

\bibitem{Castonguay}
{\sc P.~Castonguay}, {\em High-order energy stable flux reconstruction schemes for fluid flow simulations on unstructured grids.}, Thesis (Ph.D.)--Stanford University,  (2012).

\bibitem{ciarlet2002finite}
{\sc P.~Ciarlet}, {\em The finite element method for elliptic problems}, SIAM, 2002.

\bibitem{cockburn1999discontinuous}
{\sc B.~Cockburn}, {\em Discontinuous galerkin methods for convection-dominated problems}, High-order methods for computational physics,  (1999), pp.~69--224.

\bibitem{cockburn1998essentially}
{\sc B.~Cockburn, C.-W. Shu, C.~Johnson, E.~Tadmor, and C.-W. Shu}, {\em Essentially non-oscillatory and weighted essentially non-oscillatory schemes for hyperbolic conservation laws}, Springer, 1998.

\bibitem{hesthaven2007nodal}
{\sc J.~S. Hesthaven and T.~Warburton}, {\em Nodal discontinuous Galerkin methods: algorithms, analysis, and applications}, Springer Science \& Business Media, 2007.

\bibitem{huynh2007flux}
{\sc H.~T. Huynh}, {\em A flux reconstruction approach to high-order schemes including discontinuous galerkin methods}, in 18th AIAA computational fluid dynamics conference, 2007, p.~4079.

\bibitem{kopriva1996conservative}
{\sc D.~A. Kopriva and J.~H. Kolias}, {\em A conservative staggered-grid chebyshev multidomain method for compressible flows}, Journal of computational physics, 125 (1996), pp.~244--261.

\bibitem{morrey2009multiple}
{\sc C.~B. Morrey~Jr}, {\em Multiple integrals in the calculus of variations}, Springer Science \& Business Media, 2009.

\bibitem{ozisik2010constants}
{\sc S.~Ozisik, B.~Riviere, and T.~Warburton}, {\em On the constants in inverse inequalities in l2}, tech. report, 2010.

\bibitem{reed1973triangular}
{\sc W.~H. Reed and T.~R. Hill}, {\em Triangular mesh methods for the neutron transport equation}, tech. report, Los Alamos Scientific Lab., N. Mex.(USA), 1973.

\bibitem{roe1981approximate}
{\sc P.~L. Roe}, {\em Approximate riemann solvers, parameter vectors, and difference schemes}, Journal of computational physics, 43 (1981), pp.~357--372.

\bibitem{doug_shi_dong_2022_6600853}
{\sc D.~Shi-Dong, S.~Nadarajah, J.~Brillon, D.~Blais, K.~MacLean, P.~Thakur, C.~Pethrick, A.~Cicchino, and M.~Tatarelli}, {\em {dougshidong/PHiLiP: PHiLiP Version 2.0.0}}, \url{https://doi.org/10.5281/zenodo.6600853}.

\bibitem{spiegel2015survey}
{\sc S.~Spiegel, H.~Huynh, and J.~DeBonis}, {\em A survey of the isentropic euler vortex problem using high-order methods}, in 22nd AIAA computational fluid dynamics conference, 2015, p.~2444.

\bibitem{trojak2018generalised}
{\sc W.~Trojak}, {\em Generalised sobolev stable flux reconstruction}, arXiv preprint arXiv:1804.04714,  (2018).

\bibitem{vincent2011insights}
{\sc P.~E. Vincent, P.~Castonguay, and A.~Jameson}, {\em Insights from von neumann analysis of high-order flux reconstruction schemes}, Journal of Computational Physics, 230 (2011), pp.~8134--8154.

\bibitem{vincent2011new}
{\sc P.~E. Vincent, P.~Castonguay, and A.~Jameson}, {\em A new class of high-order energy stable flux reconstruction schemes}, Journal of Scientific Computing, 47 (2011), pp.~50--72.

\bibitem{vincent2011facilitating}
{\sc P.~E. Vincent and A.~Jameson}, {\em Facilitating the adoption of unstructured high-order methods amongst a wider community of fluid dynamicists}, Mathematical Modelling of Natural Phenomena, 6 (2011), pp.~97--140.

\bibitem{wang2007spectral}
{\sc Z.~J. Wang, Y.~Liu, G.~May, and A.~Jameson}, {\em Spectral difference method for unstructured grids ii: extension to the euler equations}, Journal of Scientific Computing, 32 (2007), pp.~45--71.

\bibitem{zwanenburg2016equivalence}
{\sc P.~Zwanenburg and S.~Nadarajah}, {\em Equivalence between the energy stable flux reconstruction and filtered discontinuous galerkin schemes}, Journal of Computational Physics, 306 (2016), pp.~343--369.

\end{thebibliography}
\end{document}